\documentclass[final]{siamltex}
\usepackage{amsmath,amssymb, amsfonts}
\usepackage{graphicx,color}
\usepackage{diagbox}
\usepackage[ruled]{algorithm2e}
\usepackage{algorithmic}
\usepackage{enumitem}
\usepackage{multirow}
\usepackage{subfig}
\usepackage{epstopdf}
\usepackage{comment}

\newcommand{\Tr}{\mathrm{Tr}}

\newcommand{\mc}[1]{\mathcal{#1}}
\newcommand{\mf}[1]{\mathfrak{#1}}
\newcommand{\mr}[1]{\mathrm{#1}}

\newcommand{\ud}{\,\mathrm{d}}

\newcommand{\norm}[1]{\lVert#1\rVert}
\newcommand{\average}[1]{\left\langle#1\right\rangle}
\newcommand{\wt}[1]{\widetilde{#1}}

\newcommand{\EE}{\mathbb{E}}

\newcommand{\RR}{\mathbb{R}}

\newtheorem{remark}[theorem]{Remark}

\global\long\def\R{\mathbb{R}}

\global\long\def\vp{\varphi}
\global\long\def\ra{\rightarrow}

\global\long\def\symm{\mathcal{S}^N}

\global\long\def\pd{\mathcal{S}^N_{++}}

\global\long\def\Tr{\mathrm{Tr}}

\numberwithin{equation}{section}
\numberwithin{figure}{section}
\newtheorem{thm}{\protect\theoremname}

\newtheorem{notation}[thm]{Notation}
\numberwithin{thm}{section}

\providecommand{\corollaryname}{Corollary}
\providecommand{\lemmaname}{Lemma}
\providecommand{\propositionname}{Proposition}
\providecommand{\remarkname}{Remark}
\providecommand{\theoremname}{Theorem}

\allowdisplaybreaks

\title{Bold Feynman diagrams and the Luttinger-Ward formalism via Gibbs
measures \vspace{1.5 mm} \\Part I:  Perturbative approach}
\author{
Lin Lin\thanks{Department of Mathematics, University of California, Berkeley, Berkeley, CA 94720 and Computational Research Division, Lawrence Berkeley National Laboratory, Berkeley, CA 94720. Email: \texttt{linlin@math.berkeley.edu}}
\and Michael Lindsey\thanks{Department of Mathematics, University of California, Berkeley, Berkeley, CA 94720. Email: \texttt{lindsey@math.berkeley.edu}}
}

\begin{document}

\maketitle

\begin{abstract}
Many-body perturbation theory (MBPT) is widely used in quantum physics,
chemistry, and materials science. At the heart of MBPT is the Feynman
diagrammatic expansion, which is, simply speaking, an elegant way of
organizing the combinatorially growing number of terms of a certain
Taylor expansion. In particular, the construction of the `bold Feynman
diagrammatic expansion' involves the partial resummation to infinite
order of possibly divergent series of diagrams.  This procedure demands
investigation from both the combinatorial (perturbative) and the
analytical (non-perturbative) viewpoints.  In Part I of this two-part
series, we illustrate how the combinatorial properties of Feynman
diagrams in MBPT can be studied in the simplified setting of Gibbs
measures (known as the Euclidean lattice field theory in the physics
literature) and provide a self-contained explanation of Feynman diagrams in this
setting.  We prove the combinatorial validity of the bold diagrammatic
expansion, with methods generalizable to several types of field theories and
interactions.  Our treatment simplifies the presentation and numerical
study of known techniques in MBPT such as the self-consistent
Hartree-Fock approximation (HF), the second-order Green's function
approximation (GF2), and the GW approximation. 
The bold diagrams are closely related to the Luttinger-Ward (LW) formalism,
which was proposed in 1960 but whose analytic properties have not been
rigorously established. The analytical study of the LW formalism in the
setting of Gibbs measures will be the topic of Part II.
\end{abstract}

\begin{keywords}
  Many-body perturbation theory, Feynman diagram, Bold Feynman diagram,
  Gibbs measure, Gaussian integral, 
  Luttinger-Ward formalism, Green's function, Self-energy, Free energy
\end{keywords}

\begin{AMS}
  81T18,81Q15,81T80,65Z05
\end{AMS} 

\pagestyle{myheadings}
\thispagestyle{plain}

\section{Introduction}\label{sec:intro}
In quantum many-body physics, the computational complexity of obtaining
the numerically exact solution to the many-body Schr\"odinger equation
generally scales exponentially with respect to the number of particles
in the system. Hence a direct approach to the quantum many-body problem is intractable for all but very small systems.
Many-body perturbation theory (MBPT) formally treats the Coulomb
interaction between electrons as a small quantity and provides useful
approximations to many quantities of physical interest with significantly
reduced computational cost.  
MBPT has been demonstrated to be quantitatively useful, and sometimes
indispensable, in a wide range of scientific applications. These range
from the
early description of helium atoms and the uniform electron
gas~\cite{FetterWalecka2003} to modern theories of photovoltaics and
the optical excitation of
electrons~\cite{AryasetiawanGunnarsson1998,OnidaReiningRubio2002,BernardiPalummoGrossman2013}.
Even for `strongly correlated'
systems~\cite{GeorgesKotliarKrauthEtAl1996,BullaCostiPruschke2008} where
a perturbation theory is known to be unsuitable, MBPT still provides the
basic building blocks used in many successful
approaches~\cite{KotliarSavrasovHauleEtAl2006}.

MBPT is usually formulated in the language of second quantization.  For
many problems of interest, the Hamiltonian can be split into a
single-particle term and a two-particle term, which are respectively
quadratic and quartic in the creation and annihilation
operators~\cite{FetterWalecka2003}. These operators are defined on a
Fock space whose dimension scales exponentially with respect to the
system size. Nonetheless, in the special case of `non-interacting'
systems, in which the Hamiltonian contains only the quadratic term, quantities of interest can be obtained exactly. Hence the
non-interacting system is naturally taken as a reference system.   The
remaining quartic interaction, which arises from the Coulomb interaction
between electrons and makes the system `interacting,' is responsible for
almost all of the difficulties in quantum many-body physics. MBPT treats
the quartic term as a perturbation to the non-interacting system. 

Feynman diagrams arise naturally in MBPT as a bookkeeping device for the
coefficients of perturbative series, though they can be further endowed
with physical interpretation~\cite{FetterWalecka2003}. Initially these
diagrams involve contributions from the so-called non-interacting
Green's function (alternatively known as the bare propagator) that
specifies the non-interacting reference problem, as well the quartic
interaction.  Virtually all physical quantities of interest can be
represented perturbatively via such a \emph{bare} Feynman diagrammatic
expansion. Remarkably, the bare Feynman diagrammatic expansions of certain quantities can be 
simplified into \emph{bold} Feynman diagrammatic
expansions~\cite{StefanucciVanLeeuwen2013}. Such an 
expansion is obtained from a bare expansion via a `partial
resummation' procedure, which selects certain pieces of bare diagrams
and sums their contribution to infinite order.  In the bold diagrams,
the role of the bare propagator is assumed by the interacting Green's
function, alternatively known as the bold propagator. This procedure,
referred to as `dressing' or `renormalizing' the propagator, is \emph{a
priori} valid only in a formal sense.  Although it may be initially
motivated as an attempt to simplify the diagrammatic expansion, the
passage from bare to bold diagrams has significant implications. In
particular, most Green's function methods for the theoretical and
numerical investigation of quantum many-body physics, such as the
self-consistent Hartree-Fock approximation, the second-order Green's
function approximation (GF2), the GW
approximation~\cite{Hedin1965}, the dynamical mean-field theory
(DMFT)~\cite{GeorgesKotliarKrauthEtAl1996,Potthoff2006}, the GW+DMFT
method~\cite{BiermannAryasetiawanGeorges2003}, the dynamical cluster
approximation~\cite{StaarMaierSchulthess2013}, and bold diagrammatic Monte
Carlo methods~\cite{ProkofevSvistunov2008,LiLu2017}, can be derived via summation 
over some (possibly infinite) subset of the bold diagrams. 

All of these methods, as well as the bold diagrammatic expansion itself, can be 
viewed as resting on a foundation known as the Luttinger-Ward (LW)\footnote{The Luttinger-Ward formalism
is also known as the Kadanoff-Baym
formalism~\cite{BaymKadanoff1961} depending on the context. In this paper we always use
the former.} 
formalism~\cite{LuttingerWard1960} that originated in 1960. This formalism 
has found widespread usage in physics
and chemistry~\cite{DahlenVanVon2005,Ismail-Beigi2010,BenlagraKimPepin2011,RentropMedenJakobs2016}. 
However, the LW formalism, which is based on a functional of the same
name, is defined only formally.  This is a serious
issue both in theory and in practice. Indeed, even the very existence
of the LW functional in the context of fermionic systems is under debate, with numerical evidence to the
contrary appearing in the past few
years~\cite{KozikFerreroGeorges2015,Elder2014,TarantinoRomanielloBergerEtAl2017,GunnarssonRohringerSchaeferEtAl2017}
in the physics community.

\subsection{Contributions}
This paper is the first of a two-part series and expands on the work 
in~\cite{LinLindsey2018}. In Part I, we provide a 
self-contained explanation of MBPT in the setting of a \emph{Gibbs
model} (alternatively, following the physics literature, 
`Euclidean lattice field theory'). The perturbation theory of this
model, with a specific form of quartic interaction that we refer to as
the \emph{generalized Coulomb interaction}, enjoys a correspondence with
the Feynman diagrammatic expansion for the quantum 
many-body problem with two-body interaction~\cite{NegeleOrland1988,AmitMartin-Mayor2005,AltlandSimons2010}. 
The model is also of interest in its own right and includes, e.g., the
(lattice) $\varphi^{4}$
theory~\cite{AmitMartin-Mayor2005,Zinn-Justin2002}, as a special case. 
In the setting of the Gibbs model, one is interested in the computation of expectation values with respect to possibly 
high-dimensional Gibbs measures.  While the exact computation
of such high-dimensional integrals is generally intractable,
important exceptions are the Gaussian integrals, i.e., integrals for the moments of 
a Gaussian measure, which can be evaluated exactly. Hence the Gaussian measure plays the
role of the reference system. One can construct 
perturbation series using Feynman diagrams, which correspond to
moments of Gaussian measures, to evaluate quantities of interest. 

The main contribution of Part I of this series is the rigorous justification of the
bold diagrammatic expansion in the Gibbs measure setting. Although the
basic idea of the passage from bare to bold diagrams can be intuitively
perceived, the validity of this procedure actually relies on subtle
combinatorial arguments, which to the extent of our knowledge, have not
appeared in the literature. We remark that the arguments appearing in this paper
regarding these manipulations are just as applicable to the quantum
many-body problem as they are to the Gibbs model.  Furthermore, these
arguments clarify why certain quantities such as the self-energy
admit a bold diagrammatic expansion, while other quantities, such as
the free energy, do not.

In fact, bosonic and fermionic field theories (which can in particular be 
derived from the non-relativistic quantum many-body problem via the coherent 
state path integral formalism~\cite{AltlandSimons2010,NegeleOrland1988}) can be viewed formally as infinite-dimensional Gibbs measures 
over complex and Grassmann fields, respectively, in contrast to the real 
`fields' considered in this work. The diagrammatic expansions 
for such theories yield propagator lines with a direction (indicated by
an arrow), due to the distinction between creation and annihilation
operators.  In the setting of the two-body interaction, this additional structure significantly reduces the 
symmetry of the Feynman diagrams, and in fact the self-energy and single-particle Green's function 
diagrams all have a symmetry factor of one. This greatly simplifies the proof of the bold 
diagrammatic expansion in these settings (although a proof of the 
unique skeleton decomposition as in Proposition \ref{prop:selfenergycompose} is still necessary). 
However, we view this simplification as largely accidental because it does not extend 
to interactions beyond the two-body interaction, where more sophisticated arguments 
are necessary (and indeed, to our knowledge, bold diagrams have not yet been considered). 
By contrast the tools introduced here 
can be applied with minimal modification to more complicated interaction forms.

As an auxiliary tool for carefully establishing diagrammatic expansions (both bare and bold), 
we have also found it necessary to introduce definitions of the various 
flavors of Feynman diagrams (as well as associated notions of isomorphism, automorphism, etc.)
in a way that is new, as far as we know. Most of this perspective (which views Feynman 
diagrams as data structures with half-edges as the fundamental building block) is conveyed 
in section \ref{sec:diagDef}. We have also aimed to make this framework 
durable enough not just for the developments of this paper, but also to allow us to pursue 
further (and more sophisticated) diagrammatic manipulations, 
such as the development of the bold screened diagrams, Hedin's
equations~\cite{Hedin1965}, and the Bethe-Salpeter
equation~\cite{OnidaReiningRubio2002}, in future work.

\subsection{Related works}

The construction of Feynman diagrams in the setting of the Gibbs measure
is well known, particularly in quantum field
theory~\cite{PeskinSchroeder1995,AltlandSimons2010}.
To our knowledge, this setting is mostly discussed as a prelude to the setting of quantum field theory 
(in particular, via the coherent state path integral, the quantum
many-body setting)
~\cite{AltlandSimons2010,NegeleOrland1988}, or to 
more general mathematical settings
arising in geometry and
topology~\cite{DeligneEtingofFreedEtAl1999,Polyak2004}.

\subsection{Perspectives}

MBPT is known to be difficult to work with, both analytically and
numerically. In fact, even learning MBPT can be difficult without a
considerable amount of background knowledge in physics.
Hence, more broadly, we hope that this paper will serve as an introduction to 
bare and bold Feynman diagrams that is self-contained, rigorous, and 
accessible to a mathematical audience without a background in
quantum physics.  The
prerequisites for understanding this part of the two-part series are
just multivariable calculus and some elementary combinatorics. 

In fact, our perspective is that the Gibbs model can be used as a point of departure 
(especially for mathematicians) for the study of the many-body problem
in three senses: (1) theoretically, (2) numerically, and (3)
pedagogically. We shall elaborate on these three points presently.

(1) Virtually all of the important concepts of MBPT for the quantum many-body
problem---such as Green's functions, the self-energy, the bare and bold
diagrams, and the Luttinger-Ward formalism, to name a few---have analogs
in the setting of the Gibbs model. The same can be said of virtually all
Green's function methods, including all of the methods cited above.
Furthermore, there is an analog of the impurity
problem, which is fundamental in quantum embedding
theory~\cite{SunChan2016}. 

When rigorous theoretical understanding of the quantum many-body
problem becomes difficult, a lateral move to the more tractable Gibbs model may
yield interesting results.  Headway in this direction is reported Part
II of this series, in which the Luttinger-Ward formalism is
established rigorously for the first time, and 
further progress will be reported in future work. Moreover, studying the
extent to which results \emph{fail} to translate between settings, given
the formal correspondence between the two, may yield interesting
insights.

(2) Numerical methods in MBPT are known to be difficult to implement,
and the calculations are often found to be difficult to converge, time-consuming, or both. Given that virtually all Green's function methods of
interest translate to the Gibbs model, this setting can serve as a
sandbox for the evaluation and comparison of methods in various regimes.
Indeed, one can benchmark these methods by obtaining essentially exact approximations of the
relevant integrals via Monte Carlo techniques. We
hope that the Gibbs model can provide new insights into a number of
difficult issues in MBPT, such as the role of self-consistency in the GW
method, the appropriate choice of vertex correction beyond the GW
method, and the study of embedding methods.

(3) For a mathematical reader, the literature of MBPT can be 
difficult to digest.  In our view the consideration of the Gibbs model
offers perhaps the simplest introduction to the major concepts of MBPT. This
work, together with a familiarity with second quantization and the
basics of many-body Green's functions, should equip the reader to follow
the development of the various approximations and methods found in the
literature, by distilling these concepts via the Gibbs
model. We have attempted to respect this goal by maintaining a
pedagogical style of exposition, with many examples provided throughout
for concreteness.

\subsection{Outline}

In section \ref{sec:prelim} (`Preliminaries') we formally introduce the Gibbs model as well 
as its associated physical quantities such as the partition function, the free energy, and the Green's function.
Here we prove the classic formula (Theorem \ref{thm:wick}) attributed to Isserlis and Wick for computing the 
moments of a Gaussian measure, which is the basis for all Feynman 
diagrammatic expansions. We also quickly recover the Galitskii-Migdal formula 
(Theorem \ref{thm:migdal}) from quantum many-body physics in this
setting using a scaling argument.

In section \ref{sec:feynman}, we introduce various flavors of Feynman diagrams and 
use them to compute diagrammatic expansions for the partition function (Theorem \ref{thm:Zexpand}), 
the free energy (Theorem \ref{thm:logZexpand}), and the Green's function (Theorem \ref{thm:Gexpand}). 
Then, motivated by the prospect of simplifying the perturbative computation of the Green's function, 
we introduce the self-energy and the Dyson equation, which can be used to recover the Green's function 
once the self-energy is known, and then compute the diagrammatic expansion of the self-energy
(Theorem \ref{thm:Sigmaexpand}).

In section \ref{sec:feynmanBold}, the main goal is to formulate and prove the bold diagrammatic expansion of the 
self-energy (Theorem \ref{thm:boldExpand}). This result is only 
a fact about formal power series, but in Part II of this series, we will show that the 
bold diagrammatic expansion admits an analytical interpretation as an asymptotic series, 
in a sense that we preview in Remark \ref{rem:boldExpand}. Theorem \ref{thm:boldExpand}, 
which is a combinatorial result, is in fact used in establishing the
analytical fact in Part II. 
In section \ref{sec:greenMethod} we provide an overview of Green's function methods, 
including a diagrammatic derivation of the GW method and a discussion of 
a property known as $\Phi$-derivability.
In section \ref{sec:LWpre}, we provide a preview of the Luttinger-Ward 
formalism from the diagrammatic perspective and explain 
how the LW functional relates to the free energy.

Finally, in section \ref{sec:numer} we consider a few basic 
numerical experiments with Green's function methods for the Gibbs model.

\section{Preliminaries}\label{sec:prelim}
Before discussing Feynman diagrams in proper, we discuss various preliminaries, including the basics of Gaussian integration.
For simplicity we restrict our attention to real matrices, though analogous results can be obtained in the complex Hermitian case.

\subsection{Notation}

First we recall some basic facts from calculus. For a real symmetric positive definite matrix 
$A\in \RR^{N\times N}$, we define
\begin{equation}
  Z_{0} := \int_{\RR^{N}} e^{-\frac12 x^T A x} \ud x  =
  (2\pi)^{\frac{N}{2}}\left(\det(A)\right)^{-\frac12}.
  \label{eqn:gaussian2}
\end{equation}
The two-point correlation function $G^{0}$ is an $N\times N$ matrix with entries
\begin{equation}
  G^{0}_{ij} := \frac{1}{Z_{0}} \int_{\RR^{N}} x_{i} x_{j} e^{-\frac12
  x^{T} A x}\ud x = (A^{-1})_{ij},
  \label{eqn:G0def}
\end{equation}
i.e., $G^0 = A^{-1}$. (We place the `$0$' in the superscript merely to accommodate 
the use of indices more easily in the notation.)
Note that $G^{0}$ is the covariance
matrix $\EE (XX^{T})=A^{-1}$ of the $N$-dimensional Gaussian random variable 
$X\sim \mc{N}(0,A^{-1})$. 

Now consider a more general $N$-dimensional integral, called the
partition function, given by 
\begin{equation}
  Z = \int_{\RR^{N}} e^{-\frac12 x^{T} A x - U(x)}\ud x,
  \label{eqn:partition}
\end{equation}
where $U(x)$ is called the interaction term. Throughout Part I, we take $U$ to be the following quartic polynomial:
\begin{equation}
  U(x) = \frac{1}{8} \sum_{i,j=1}^{N} v_{ij} x_{i}^2 x_{j}^2.
  \label{eqn:Uterm}
\end{equation}
Without loss of generality we assume that $v_{ij}=v_{ji}$, since
otherwise we can always replace $v_{ij}$ by $(v_{ij}+v_{ji})/2$ without
changing the value of $U(x)$. The factor of $8$ comes from the fact that we do not
distinguish between the $i$ and $j$ indices (due to the symmetry of the
$v$ matrix), nor do between the two terms $x_i x_{i}$ and
$x_{j} x_{j}$. Each will contribute a symmetry factor of $2$ in the developments that follow, 
and this convention
simplifies the bookkeeping of the constants in the diagrammatic series. 
Quartics of the form~\eqref{eqn:Uterm} arise from the discretization of the
 $\varphi^{4}$ theory~\cite{AmitMartin-Mayor2005} and moreover as a classical analog of the interaction 
arising in quantum many-body settings, such as the Coulomb
interaction of electronic structure theory and the interaction term in simplified condensed 
matter models such as the Hubbard model~\cite{MartinReiningCeperley2016}.  With some abuse of notation, we will refer to
the interaction~\eqref{eqn:Uterm} as the \emph{generalized Coulomb
interaction}.

Our results generalize quite straightforwardly to other interactions. 
We will comment in section
\ref{subsec:otherInt} on diagrammatic
developments for other interactions . But for concreteness of the  
example expressions and diagrams throughout, it is simpler to stick to the 
generalized Coulomb interaction as a reference.

Let $\mathcal{S}^{N}$, $\mathcal{S}^{N}_+$, and $\mathcal{S}^{N}_{++}$
denote respectively the sets of symmetric, symmetric positive
semidefinite, and symmetric positive definite $N\times N$ real matrices. 
We also require that 
\begin{equation}
  \frac12 x^{T} A x + U(x) \to +\infty, \quad \norm{x}\to +\infty
  \label{eqn:potrequire}
\end{equation}
for any $A\in \mathcal{S}^{N}$ and moreover that the growth in
Eq.~\eqref{eqn:potrequire} that the integral~\eqref{eqn:partition} is
well-defined. Here $\norm{\cdot}$ is the vector $2$-norm. 
Note that Eq.~\eqref{eqn:potrequire} does not require $A$ to be positive
definite. For instance, if $N=1$, then Eq.~\eqref{eqn:partition} becomes
\begin{equation}
  Z = \int_{\RR} e^{-\frac12 a x^2 - \frac18 v x^4}\ud x, 
  \label{eqn:partition1D}
\end{equation}
and the expression in \eqref{eqn:partition1D} is well-defined as long as $v>0$, regardless
of the sign of $a$. Nonetheless, we assume that $A \in
\mathcal{S}^{N}_{++}$, as this assumption is necessary for the
construction of a perturbative series in the interaction strength. In
Part II of this series of 
papers, Eq.~\eqref{eqn:partition1D} will help us understand the behavior of bold diagrammatic
methods when $A$ is indefinite.

For general $N \geq 1$, there is a natural condition on the matrix $v$
that ensures that integrals like \eqref{eqn:partition1D}
are convergent, namely that the matrix $v$ is positive definite. Indeed, this assumption ensures in particular that $U$
is a nonnegative polynomial, strictly positive away from $x=0$. Since $U$
is homogeneous quartic, it follows that $U\geq C^{-1}\vert x\vert^{4}$
for some constant C
sufficiently large, so for any $A$, the expression 
$\frac{1}{2} x^T A x + U(x)$ goes to $+\infty$ quartically as $\Vert x\Vert \ra \infty$.
Another sufficient assumption is that the entries of $v$ are nonnegative and moreover that the 
diagonal entries are strictly positive.  We will explore the implication
of such conditions in future work.

To simplify the notation, for any function $f(x)$, we define 
\begin{equation}
  \average{f} = \frac{1}{Z} \int_{\RR^{N}} f(x) e^{-\frac12 x^{T} A x -
  U(x)}\ud x, \quad \average{f}_{0} = \frac{1}{Z_{0}} \int_{\RR^{N}}
  f(x) e^{-\frac12 x^{T} A x}\ud x. 
  \label{}
\end{equation}
Throughout the paper we are mostly interested in computing two quantities. The
first is the \textit{free energy}, defined as the negative logarithm of the
partition function: 
\begin{equation}
  \Omega = -\log Z.
  \label{eqn:grandpotential}
\end{equation}
The second is the two-point correlation function (also called the
\textit{Green's function} by analogy with the 
quantum many-body literature),
which is the $N\times N$ matrix 
\begin{equation}
  G_{ij} = \frac{1}{Z} \int_{\RR^{N}} x_{i} x_{j} e^{-\frac12 x^{T} A x
  - U(x)}\ud x =: \average{x_{i}x_{j}}.
  \label{eqn:green}
\end{equation}
It is important to recognize that 
\begin{equation}
  G\in \mc{S}^{N}_{++}.
  \label{eqn:Gdomain}
\end{equation}
In fact, as we shall see in Part II, this
constraint defines the domain of `physical' Green's functions, in a
certain sense. In the discussion below, $G$ is also called the
interacting Green's function, in contrast to the \emph{non-interacting Green's
function} $G^{0}=A^{-1}$. The non-interacting and interacting 
Green's functions are also often called the \emph{bare} 
and \emph{bold propagators}, respectively, especially in the context of 
diagrams.

\subsection{Scaling relation}

The homogeneity of the quartic term $U(x)$ allows for the derivation of a \textit{scaling
relation} for the partition function. Define the $\lambda$-dependent partition
function as
\begin{equation}
  Z_{\lambda} = 
  \int_{\RR^{N}} e^{-\frac12 x^{T} A x - \lambda U(x)}\ud x.
  \label{eqn:Zlambda}
\end{equation}
Then by an change of variable $y=\lambda^{\frac14} x$, 
we have
\begin{equation}
  Z_{\lambda} = \lambda^{-\frac{N}{4}}
  \int_{\RR^{N}} e^{-\frac{1}{2\sqrt{\lambda}} y^{T} A y - U(y)}\ud y.
  \label{eqn:Zlambdascale}
\end{equation}
The scaling relation allows us to represent 
other averaged quantities using the two-point correlation function.  One example
is given in Theorem~\ref{thm:migdal}, which is analogous to the
computation of a quantity called the (internal) energy using the Galitskii-Migdal formula in quantum
physics~\cite{MartinReiningCeperley2016}.

\begin{theorem}[Galitskii-Migdal]
  The internal energy 
  \begin{equation}
    E := \average{\frac12 x^{T} A x + U(x)} = 
    \frac{1}{Z} \int_{\RR^{N}} \left(\frac12 x^{T} A x + U(x)\right) e^{-\frac12 x^{T} A x - U(x)}\ud x
    \label{eqn:Energy}
  \end{equation}
  can be computed using the two point correlation function $G$ as
  \begin{equation}
    E = \frac{1}{4}\Tr[AG+I],
    \label{eqn:migdal}
  \end{equation}
  where $I$ is the $N\times N$ identity matrix.
  \label{thm:migdal}
\end{theorem}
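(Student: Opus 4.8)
My plan is to compute the logarithmic derivative $\frac{\ud}{\ud\lambda}\log Z_\lambda$ evaluated at $\lambda = 1$ in two different ways: once by differentiating the defining integral \eqref{eqn:Zlambda} directly, and once via the scaling relation \eqref{eqn:Zlambdascale}. Equating the two expressions will produce a linear relation between $\average{U}$ and the quadratic average $\average{\frac12 x^T A x}$, which I can then solve. Throughout I will write $\average{f}_\lambda := \frac{1}{Z_\lambda}\int_{\RR^N} f(x)\, e^{-\frac12 x^T A x - \lambda U(x)}\ud x$, so that $\average{\cdot}_1 = \average{\cdot}$.

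For the direct computation, differentiating \eqref{eqn:Zlambda} under the integral sign gives
\begin{equation}
  \frac{\ud}{\ud\lambda}\log Z_\lambda = \frac{1}{Z_\lambda}\int_{\RR^N} \bigl(-U(x)\bigr) e^{-\frac12 x^T A x - \lambda U(x)}\ud x = -\average{U}_\lambda,
\end{equation}
which at $\lambda = 1$ equals $-\average{U}$. For the second computation, I take logarithms in \eqref{eqn:Zlambdascale} to write $\log Z_\lambda = -\frac{N}{4}\log\lambda + \log g(\lambda)$, where $g(\lambda) := \int_{\RR^N} e^{-\frac{1}{2\sqrt{\lambda}} y^T A y - U(y)}\ud y$. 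Differentiating the exponent in $g$ with respect to $\lambda$ (noting $\frac{\ud}{\ud\lambda}\lambda^{-1/2} = -\frac12\lambda^{-3/2}$) and evaluating at $\lambda = 1$, where $g(1) = Z$, yields $\frac{\ud}{\ud\lambda}\log g(\lambda)\big|_{\lambda=1} = \frac14\average{y^T A y} = \frac14\average{x^T A x}$. Hence
\begin{equation}
  \frac{\ud}{\ud\lambda}\log Z_\lambda\Big|_{\lambda=1} = -\frac{N}{4} + \frac14\average{x^T A x}.
\end{equation}

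Equating the two expressions for the derivative at $\lambda = 1$ gives $\average{U} = \frac{N}{4} - \frac14\average{x^T A x}$. Substituting this into the definition $E = \average{\frac12 x^T A x} + \average{U}$ and simplifying gives $E = \frac14\average{x^T A x} + \frac{N}{4}$. Finally I identify the remaining pieces: since $A$ is symmetric and $G_{ij} = \average{x_i x_j}$ by \eqref{eqn:green}, I have $\average{x^T A x} = \sum_{i,j} A_{ij}\average{x_i x_j} = \sum_{i,j} A_{ij} G_{ji} = \Tr[AG]$, and $N = \Tr[I]$, which together produce $E = \frac14\Tr[AG + I]$ as claimed. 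The main technical obstacle is justifying the interchange of differentiation and integration used in both computations; this should follow from the growth condition \eqref{eqn:potrequire} (which, since $A \in \mathcal{S}^N_{++}$ and $U \geq 0$, furnishes an integrable Gaussian-type dominating function uniform for $\lambda$ in a neighborhood of $1$), but it is the one place requiring care rather than routine calculation. As a cross-check, the same identity can be obtained without scaling by integrating the divergence $\sum_i \partial_i\bigl(x_i\, e^{-\frac12 x^T A x - U(x)}\bigr)$ to zero and invoking Euler's relation $\sum_i x_i\,\partial_i U = 4U$ for the homogeneous quartic $U$.
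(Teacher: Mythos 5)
Your proposal is correct and follows essentially the same route as the paper: both differentiate the $\lambda$-dependent partition function at $\lambda=1$ once directly (yielding $-\average{U}$ up to a factor of $Z$) and once via the scaling relation \eqref{eqn:Zlambdascale} (yielding $\frac{N}{4}-\frac14\average{x^TAx}$ up to sign and a factor of $Z$), then equate and combine with $\average{\frac12 x^TAx}=\frac12\Tr[AG]$. Working with $\log Z_\lambda$ rather than $Z_\lambda$ is only a cosmetic difference, so there is nothing further to compare.
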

\begin{proof}
  By the definition of $G$, we have
  \begin{equation}
    \average{\frac12 x^{T} A x} = \frac12 \Tr\left[ A
    \average{xx^{T}} \right] = \frac12 \Tr[AG].
    \label{eqn:migdal_t1}
  \end{equation}
  In order to evaluate $\average{U(x)}$, we consider the
  $\lambda$-dependent partition function in Eq.~\eqref{eqn:Zlambda},
  and we have
  \begin{equation}
    -\frac{\ud Z_{\lambda}}{\ud \lambda} \Big\vert_{\lambda=1}= 
    \int_{\RR^{N}} U(x) \, e^{-\frac12 x^{T} A x - U(x)}\ud x.
    \label{eqn:migdal_t2}
  \end{equation}
  Using the scaling relation in Eq.~\eqref{eqn:Zlambdascale},
  we have
  \begin{equation}
    -\frac{\ud Z_{\lambda}}{\ud \lambda} \Big\vert_{\lambda=1}= 
    \frac{N}{4} Z - \int_{\RR^{N}} \frac{1}{4} y^{T} A y \ e^{-\frac12 y^{T} A y -  U(y)}\ud y.
    \label{eqn:migdal_t3}
  \end{equation}
  Combining Eq.~\eqref{eqn:migdal_t1} to~\eqref{eqn:migdal_t3} we have
  \[
  E = \average{\frac12 x^{T} A x + U(x)} = \frac14 \average{x^{T} A x +
  N} = \frac14 \Tr[AG+I].
  \]
\end{proof}

\subsection{Wick theorem}

We first introduce the following notation.  For an even number $m$, we denote by $\mc{I}_{m}$ a set of integers $\{1,\ldots,m\}$. For 
$i\ne j\in\mc{I}_{m}$, we call $(i,j)$ a pair.
A \textit{pairing} $\sigma$ on $\mc{I}_{m}$ is defined to be a partition of
$\mc{I}_{m}$ into $k$ disjoint pairs. For example, the set of all possible pairings of
the set $\mc{I}_{4}=\{1,2,3,4\}$ is  $\{(1,2)(3,4),(1,3)(2,4),(1,4)(2,3)\}$.
Note that a pairing $\sigma$ can be viewed as an element of the permutation
group $\mathrm{Sym}(\mc{I}_{m})$, such that $\sigma^2=1$ and whose action on $\mc{I}_m$
has no fixed points. Under this interpretation $\sigma$ maps any element
$i\in\mc{I}_{m}$ to
the element $\sigma(i)$ of the pairing containing $i$.
For a given pairing $\sigma$, we define the set 
\[
\mc{I}_{m}/\sigma := \{ i\in \mc{I}_{m} \,\vert \, i < \sigma(i)\}
\]
to be the collection of indices corresponding to the `first element'
of each pair.
Denote by $\Pi(\mc{I}_{m})$ the set of all
possible pairings. Observe that there are 
\[
|\Pi(\mc{I}_{m})|=\frac{m!}{2^{m/2} (m/2)!}
\]
pairings in total.

Now Wick's theorem (Theorem~\ref{thm:wick}), also known as Isserlis' theorem~\cite{Isserlis1918}
in probability theory, is the basic tool for deriving the Feynman rules
for diagrammatic expansion. For completeness we give a proof, but since this is a classic result, 
it is provided in Appendix \ref{sec:appWick}.

\begin{theorem}[Isserlis-Wick]\label{thm:wick}
  For integers $1\le \alpha_{1},\ldots,\alpha_{m}\le N$,
  \begin{equation}
    \average{x_{\alpha_{1}}\cdots x_{\alpha_{m}}}_{0} = \begin{cases}
      0,& \text{$m$ is odd},\\
      \sum_{\sigma \in \Pi(\mc{I}_{m})} \prod_{i \in \mc{I}_{m}/\sigma}
      G^{0}_{\alpha_{i},\alpha_{\sigma(i)} }, &\text{$m$ is even}.
    \end{cases}
    \label{}
  \end{equation}
\end{theorem}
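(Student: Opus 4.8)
The plan is to compute the Gaussian moment $\average{x_{\alpha_1}\cdots x_{\alpha_m}}_0$ by introducing a source term and differentiating a generating function. First I would establish the moment generating function for the Gaussian measure $\mc{N}(0,A^{-1})$: by completing the square in the integral $\frac{1}{Z_0}\int_{\RR^N} e^{J^T x}\,e^{-\frac12 x^T A x}\ud x$, one shifts $x \mapsto x + A^{-1}J$ and uses~\eqref{eqn:gaussian2} to obtain the clean closed form $\average{e^{J^T x}}_0 = e^{\frac12 J^T A^{-1} J} = e^{\frac12 J^T G^0 J}$, valid for all $J\in\RR^N$ since $A$ is positive definite. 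This is the key analytic input, and it reduces the entire theorem to a purely algebraic fact about extracting a coefficient from an exponential of a quadratic form.

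Next I would recover the moment by differentiation:
\begin{equation}
  \average{x_{\alpha_1}\cdots x_{\alpha_m}}_0 = \left.\frac{\partial^m}{\partial J_{\alpha_1}\cdots \partial J_{\alpha_m}} \, e^{\frac12 J^T G^0 J}\right|_{J=0}.
  \nonumber
\end{equation}
Expanding $e^{\frac12 J^T G^0 J} = \sum_{k\ge 0} \frac{1}{k!}\left(\frac12 \sum_{a,b} G^0_{ab} J_a J_b\right)^k$, I observe that the degree-$m$ term is the only one surviving after taking $m$ derivatives and setting $J=0$. When $m$ is odd, every monomial in the expansion has even total degree in $J$, so the mixed $m$-th derivative vanishes, giving the first case. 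When $m$ is even, say $m=2k$, only the $k$-th term contributes.

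The combinatorial heart of the argument is identifying this surviving derivative with a sum over pairings. Applying $\partial_{J_{\alpha_1}}\cdots \partial_{J_{\alpha_m}}$ to $\frac{1}{k!\,2^k}\bigl(\sum_{a,b} G^0_{ab} J_a J_b\bigr)^k$ at $J=0$ amounts, by the product rule, to summing over all ways of distributing the $m$ derivative operators among the $k$ quadratic factors so that each factor receives exactly two; within each such factor $\sum_{a,b} G^0_{ab} J_a J_b$, the two derivatives $\partial_{J_{\alpha_i}}\partial_{J_{\alpha_j}}$ produce $G^0_{\alpha_i,\alpha_j} + G^0_{\alpha_j,\alpha_i} = 2 G^0_{\alpha_i,\alpha_j}$ by the symmetry of $G^0 = A^{-1}$. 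Each assignment of the indices $\{\alpha_1,\ldots,\alpha_m\}$ into $k$ unordered pairs, each pair placed in one of the $k$ factors, corresponds to exactly one pairing $\sigma\in\Pi(\mc{I}_m)$; the number of ordered ways to realize a fixed pairing is $k!\,2^k$ (the $k!$ from ordering the factors and the $2^k$ from the two derivatives per factor), which exactly cancels the prefactor $\frac{1}{k!\,2^k}$. What remains is $\sum_{\sigma\in\Pi(\mc{I}_m)} \prod_{i\in\mc{I}_m/\sigma} G^0_{\alpha_i,\alpha_{\sigma(i)}}$, as claimed.

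I expect the main obstacle to be the bookkeeping of the combinatorial factors in this last step---verifying that the multinomial coefficients from the product rule, together with the factor of $2$ per pair from the symmetry of $G^0$, precisely cancel the $k!\,2^k$ normalization and leave each pairing counted with weight exactly one. An alternative, more self-contained route that sidesteps the generating function is a direct induction on $m$: differentiating $\average{x_{\alpha_1}\cdots x_{\alpha_m}}_0$ via the identity $\average{x_{\alpha_1} f(x)}_0 = \sum_b G^0_{\alpha_1 b}\average{\partial_{x_b} f}_0$ (itself a consequence of integration by parts against the Gaussian weight) pairs off $x_{\alpha_1}$ with each remaining factor in turn, reproducing the recursive structure of the pairing sum. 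Either approach is standard; since the paper defers to Appendix~\ref{sec:appWick}, I would present whichever yields the cleanest accounting of constants.
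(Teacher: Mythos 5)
Your proposal is correct and follows essentially the same route as the paper's proof in Appendix~\ref{sec:appWick}: introduce a linear source term, complete the square to get the generating function $e^{\frac12 b^T G^0 b}$, differentiate, and show the combinatorial factors cancel to leave a sum over pairings. The only difference is cosmetic bookkeeping in the last step---you distribute the $m$ labeled derivative operators among the $k$ quadratic factors directly, whereas the paper routes the count through multiset permutations of $\{\alpha_1,\ldots,\alpha_m\}$ to handle repeated indices; both yield the same cancellation of $k!\,2^k$.
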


In Theorem~\ref{thm:wick}, the indices $\alpha_{i}$ do not need to
be distinct from one another. For example, for $N=4$, 
\[
  \average{x_{1}x_{2}x_{3}x_{4}}_{0} = G^{0}_{12}G^{0}_{34} +
  G^{0}_{13}G^{0}_{24} + G^{0}_{14}G^{0}_{23},
\]
and
\[
  \average{x_{1}^2 x_{3}x_{4}}_{0} = G^{0}_{11}G^{0}_{34} +
  G^{0}_{13}G^{0}_{14} + G^{0}_{14}G^{0}_{13} =
  G^{0}_{11}G^{0}_{34} + 2 G^{0}_{14}G^{0}_{13}.
\]
Similarly
\[
  \average{x_{1}^4}_{0} = G^{0}_{11}G^{0}_{11} +
  G^{0}_{11}G^{0}_{11} + G^{0}_{11}G^{0}_{11} =
  3 G^{0}_{11}G^{0}_{11}.
\]

\section{Feynman diagrams}\label{sec:feynman}

Let us now consider the expansion of quantities such
as $Z$, $\Omega$, and $G$ with respect to a (small) interaction term. For the case currently under 
consideration, in which $U$ is of the form \eqref{eqn:Uterm}, the
size of the interaction
term is measured by the magnitude of the coefficients $v_{ij}$. Equivalently, we can consider a
$\lambda$-dependent interaction as in the definition of the $\lambda$-dependent partition 
function $Z_{\lambda}$ and expand in the small parameter $\lambda$. 
This motivates us to expand $e^{-U(x)}$ using a Taylor series, i.e.
\begin{equation}
  Z = \int_{\RR^{N}} \sum_{n=0}^{\infty} \frac{1}{n!} (-U(x))^{n}
  e^{-\frac12 x^{T} A x}\ud x  \sim
  \sum_{n=0}^{\infty} \frac{1}{n!} \int_{\RR^{N}} (-U(x))^{n}
  e^{-\frac12 x^{T} A x}\ud x.
  \label{eqn:Ztaylor}
\end{equation}
The `$\sim$' indicates that interchanging the order of integration of summation
leads only to an asymptotic series expansion with respect to the
interaction strength, also called the coupling
constant~\cite{MartinReiningCeperley2016}. This can be
readily seen for the example with $n=1$ of Eq.~\eqref{eqn:partition1D}, where
\begin{equation}
  Z_{\lambda} \sim \int
  \sum_{n=0}^{\infty} \frac{1}{n!} \left(-\frac18 \lambda x^4\right)^{n} e^{-\frac12 x^2}
  \ud x = \sum_{n=0}^{\infty} \frac{(-1)^n
  \lambda^{n}}{n!} 2^{-n+\frac12}
  \Gamma\left(2n+\frac12\right).
  \label{}
\end{equation}
Here $\Gamma(\,\cdot\,)$ is the Gamma-function. It is clear that the series
has zero convergence radius, and the series is only an \emph{asymptotic series} in the sense
that the error of the truncation to $n$-th order  is $O(\lambda^{n+1})$ as $\lambda\to 0^+$.

One might have guessed that the radius of convergence must be zero by the following heuristic argument: evidently $Z_{\lambda} = +\infty$ for
any $\lambda<0$, which suggests that the radius of convergence cannot be positive at 
$\lambda = 0$.

In general since $U$ is a quartic polynomial in $x$, the
$n$-th term in Eq.~\eqref{eqn:Ztaylor} can be expressed as the linear
combination of a number of $4n$-point correlation functions for a Gaussian
measure.  These can be readily evaluated using the Wick theorem.

To motivate the need for Feynman diagrams, we first compute the first 
few terms of the expansion for the partition function `by hand'.

The $0$-th order term in~\eqref{eqn:Ztaylor} is clearly
$Z_{0}$. 
Using the Wick theorem, the first-order contribution to $Z_{0}$
contains two terms as
\begin{equation}
  - Z_{0} \sum_{i,j} \frac{1}{8} v_{ij}
  \average{x_{i}^2x_{j}^2}_{0} = - Z_0 \sum_{i,j}
  v_{ij} \left(\frac{1}{8} G^{0}_{ii} G^{0}_{jj} + \frac{1}{4} G^{0}_{ij}
  G^{0}_{ij} \right).
  \label{eqn:Zfirst}
\end{equation}

The second-order contribution, however, can be seen with some effort to contain $8$ distinct terms
as 
\begin{multline}
  Z_{0} \frac{1}{2!} \sum_{i_1,j_1,i_2,j_2} \frac{1}{8^2} v_{i_1j_1} v_{i_2j_2}
  \average{x_{i_1}^2 x_{j_1}^2 x_{i_2}^2 x_{j_2}^2}_{0} \\
  =\  \ Z_{0} \sum_{i_1,j_1,i_2,j_2}
  v_{i_1j_1} v_{i_2j_2} \bigg[ \bigg(\frac{1}{2! \cdot 8^2}
  G^{0}_{i_1i_1}G^{0}_{j_1j_1}G^{0}_{i_2i_2}G^{0}_{j_2j_2} \\ 
  \ \ +\  \frac{1}{2!\cdot 4^2}
  G^{0}_{i_1j_1}G^{0}_{i_1j_1} G^{0}_{i_2j_2}G^{0}_{i_2j_2} + \frac{1}{4\cdot 8}G^{0}_{i_1i_1}G^{0}_{j_1j_1} G^{0}_{i_2j_2}G^{0}_{i_2j_2}  
  \bigg) \\
  + \left( 
\frac{1}{2!\cdot 8} G^{0}_{i_1i_1}G^{0}_{i_2i_2}G^{0}_{j_1j_2}G^{0}_{j_1j_2} 
   +
 \frac{1}{2\cdot 2}
  G^{0}_{i_1j_1}G^{0}_{i_2j_2}G^{0}_{i_1i_2}G^{0}_{j_1j_2} 
 +\frac{1}{4} G^{0}_{i_1j_1}G^{0}_{i_1i_2}G^{0}_{j_1i_2}G^{0}_{j_2j_2} 
  \right) \\
  + \left( \frac{1}{2!\cdot 8} G^{0}_{i_1i_2}G^{0}_{i_1i_2}G^{0}_{j_1j_2}G^{0}_{j_1j_2} 
  + \frac{1}{2!\cdot 4}G^{0}_{i_1i_2}G^{0}_{j_1i_2}G^{0}_{i_1j_2}G^{0}_{j_1j_2}\right)
  \bigg].
  \label{eqn:Zsecond}
\end{multline}
The form in which this expression has been written  (in particular, the form of the denominators of the pre-factors) will become clear later on.

Following the same principle, one can derive higher-order contributions to $Z$.  However, the number of distinct terms in each order grows
combinatorially with respect to $n$. The number of distinct terms, as
well as the associated pre-constants, are already non-trivial in the
second-order expansion.
Feynman diagrams provide a graphical way to systematically organize such terms.

\subsection{Motivation}

In fact it is helpful to view $-v_{ij} x_i^2 x_j^2$ as the contraction
of the fourth-order tensor $-u_{ikjl} x_i x_j x_k x_l$, where $u_{ikjl}
= v_{ij} \delta_{ik} \delta_{jl}$. (Notice that $u_{ikjl}$ is invariant
under the exchange of the first two indices with one another, of the
last two indices with one another, and of the first two indices with the
last two indices. This yields an eightfold redundancy that will become
relevant later on.) Using this insight we can expand the $n$-th term in
the series of Eq.~\eqref{eqn:Ztaylor} as
\begin{equation}
\label{eqn:nthExpand}
\frac{Z_0}{8^n n!} \sum_{i_1,j_1,k_1,l_1,\ldots,i_n,j_n,k_n,l_n = 1}^N \left( \prod_{m=1}^n -v_{i_m j_m} \delta_{i_m k_m} \delta_{j_m l_m} \right) \average{ \prod_{m=1}^n  x_{i_m} x_{j_m} x_{k_m} x_{l_m}}_0.
\end{equation}
One can then use the Wick theorem to express this quantity as a sum over pairings of $\mc{I}_{4n}$. However, it is easier to represent the pairings graphically in the following way. For each $m=1,\ldots,n$, we draw one copy of Fig. \ref{fig:feynman1} (b), i.e., a \emph{wiggled 
line}  known as the interaction line, with four dangling \emph{half-edges} labeled $i,j,k,l$.

\begin{figure}[h]
  \begin{center}
    \includegraphics[width=0.6\textwidth]{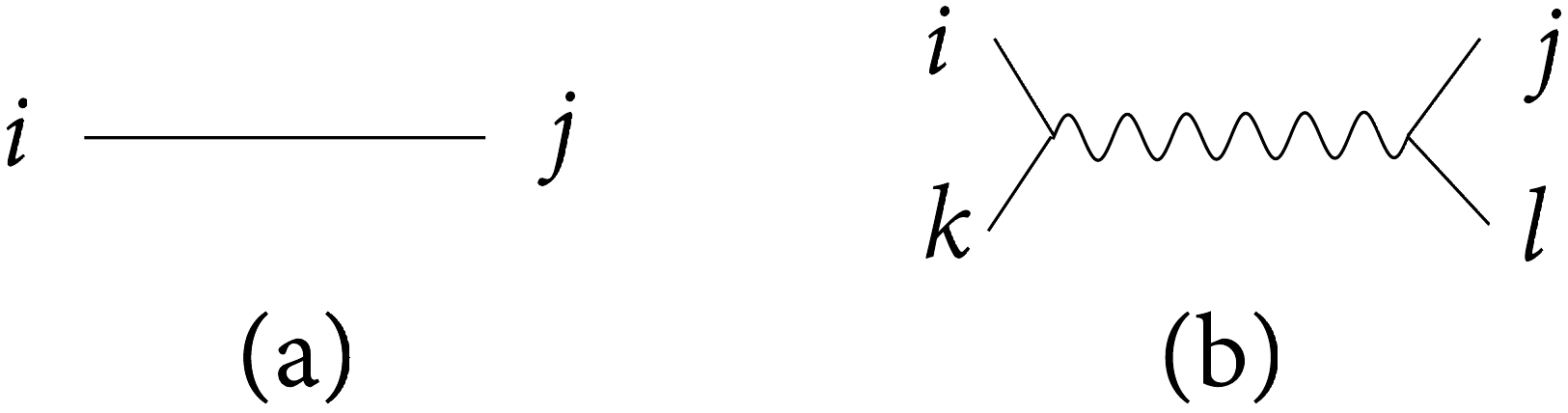}
  \end{center}
  \caption{(a) the bare propagator, $G^{0}_{ij}$\,.\ \ (b) the interaction, $-v_{ij} \delta_{ik}\delta_{jl}$\,.}
  \label{fig:feynman1}
\end{figure}

We can then 
number each interaction line as $1,\ldots,n$ and indicate this by adding an appropriate subscript to thel labels $i,j,k,l$ associated to this vertex. (For the first-order 
terms, since there is only one interaction line, we may skip this step.) The $4n$ half-edges, each with a unique label, 
represent the set on which we consider pairings. We depict a pairing by linking the paired half-edges with a \emph{straight line}, 
which represents the bare propagator $G^0$. The resulting figure is a (labeled, closed) Feynman diagram of order $n$. 
An example of order 2 is depicted in Fig. \ref{fig:linkingExample}.

\begin{figure}[h]
  \begin{center}
    \includegraphics[width=0.25\textwidth]{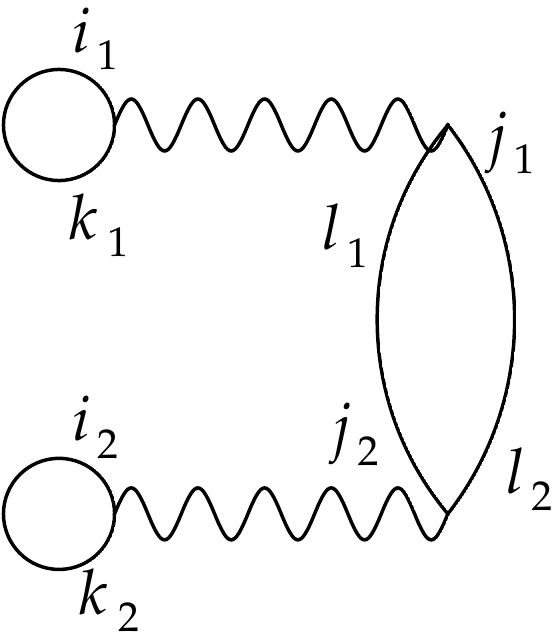}
  \end{center}
  \caption{A labeled closed Feynman diagram of order $2$.}
  \label{fig:linkingExample}
\end{figure}

The quantity associated via Wick's theorem with the pairing represented by such a diagram can then be computed by 
taking a product over all propagators and interaction lines of the associated quantities indicated in Fig. \ref{fig:feynman1} (a) 
and (b), respectively. 
For instance, a line between half-edges $i_1$ and $k_2$ would yield the factor $G^0_{i_1 k_2}$. Meanwhile, the contribution of the
interaction lines altogether is $\prod_{m=1}^n v_{i_m j_m} \delta_{i_m k_m} \delta_{j_m l_m}$. The resulting product is then 
summed over the indices $i_1,j_1,k_1,l_1,\ldots,i_n,j_n,k_n,l_n$. For the example depicted Fig. \ref{fig:linkingExample}, 
this procedure yields the sum

\begin{multline}
\sum_{i_1,j_1,k_1,l_1,i_2,j_2,k_2,l_2} v_{i_1 j_1} \delta_{i_1 k_1} \delta_{j_1 l_1} v_{i_2 j_2} \delta_{i_2 k_2} \delta_{j_2 l_2} 
G^0_{i_1 k_1} G^0_{j_1 l_2} G^0_{l_1 j_2} G^0_{i_2 k_2}  \\ 
= \sum_{i_1,j_1,i_2,j_2} 
v_{i_1 j_1} v_{i_2 j_2} 
G^0_{i_1 i_1} G^0_{i_2 i_2}  G^0_{j_1 j_2} G^0_{j_1 j_2}.
\end{multline}

In summary, we can graphically represented the sum over pairings furnished by Wick's theorem as a sum 
over such diagrams. It is debatable whether we have really made any progress at this point; keeping in mind that 
the diagrams we have constructed distinguish labels, there are as many
diagrams to sum over as there are pairings of $\mc{I}_{4n}$.
Nonetheless, we can use our new perspective to group similar diagrams and mitigate the proliferation of 
terms at high order.

Indeed, many diagrams yield the same contribution. In Fig.~\ref{fig:feynmanZ1label}, the labeled 
first-order diagrams are depicted. Fig.~\ref{fig:feynmanZ1label} (b) and (b') differ only by a relabeling 
that swaps $j$ and $l$ and so yield the same contribution after indices are summed over. From another 
point of view, after removing labels these diagrams become
`topologically equivalent', or isomorphic in some sense.

\begin{figure}[h]
  \begin{center}
    \includegraphics[width=0.7\textwidth]{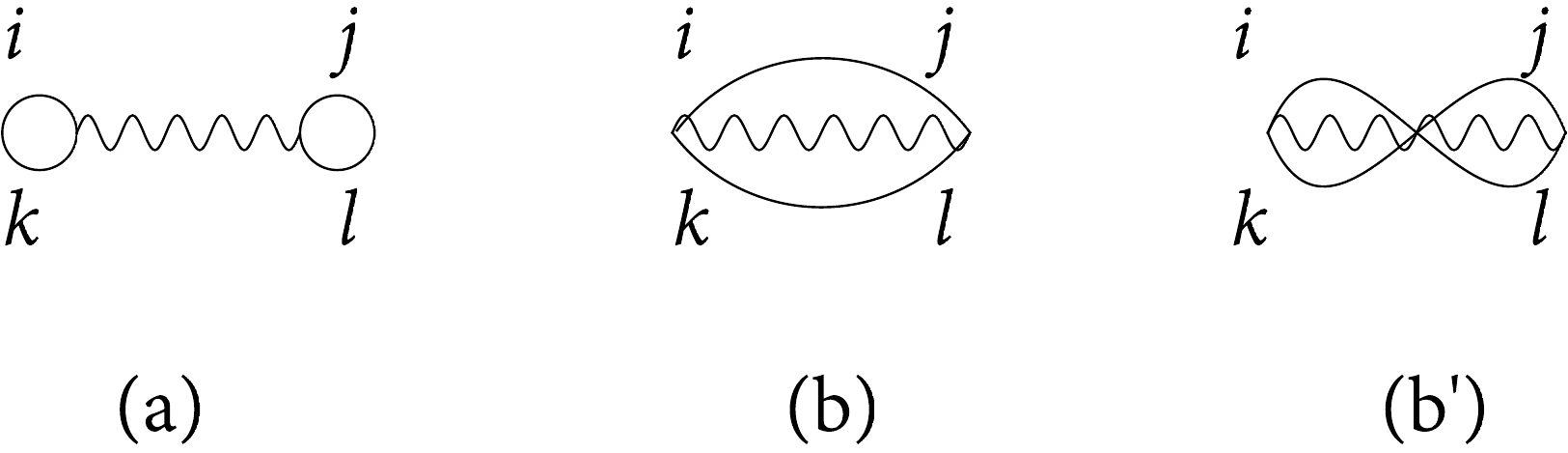}
  \end{center}
  \caption{First order expansion for $Z$ with labeled diagrams. (a) (b) correspond to the
  first and second term in Eq.~\eqref{eqn:Zfirst}. (b') gives an
  equivalent term to (b) and should not be counted twice.}
  \label{fig:feynmanZ1label}
\end{figure}

Our goal is to remove this redundancy in our summation 
by summing only over \emph{unlabeled} diagrams. One expects 
that the `amount' of redundancy of each unlabeled diagram is measured by its symmetry 
in a certain sense. Before making these notions precise, we provide careful definitions of 
labeled and unlabeled closed Feynman diagrams.

\subsection{Labeled and unlabeled diagrams}
\label{sec:diagDef}
We begin with a definition of unlabeled closed Feynman diagrams, and then define labeled diagrams as 
unlabeled diagrams equipped with extra structure. Given $n$ unlabeled interaction lines, each with 
four dangling half-edges, intuitively speaking we produce an unlabeled closed Feynman diagram by 
linking half-edges according to a pairing on all $4n$ of them.
By linking together the half-edges dangling
from a single interaction line, 
one can produce only the two `topologically distinct' diagrams shown in Fig. \ref{fig:feynmanLink1}. 
By applying the linking procedure to two interaction lines, one obtains
the diagrams in Fig.~\ref{fig:feynmanLink2}. 

\begin{figure}[h]
  \begin{center}
    \vspace{4 mm}
    \includegraphics[scale=0.5, trim = 0mm 25mm 0mm 0mm, clip]{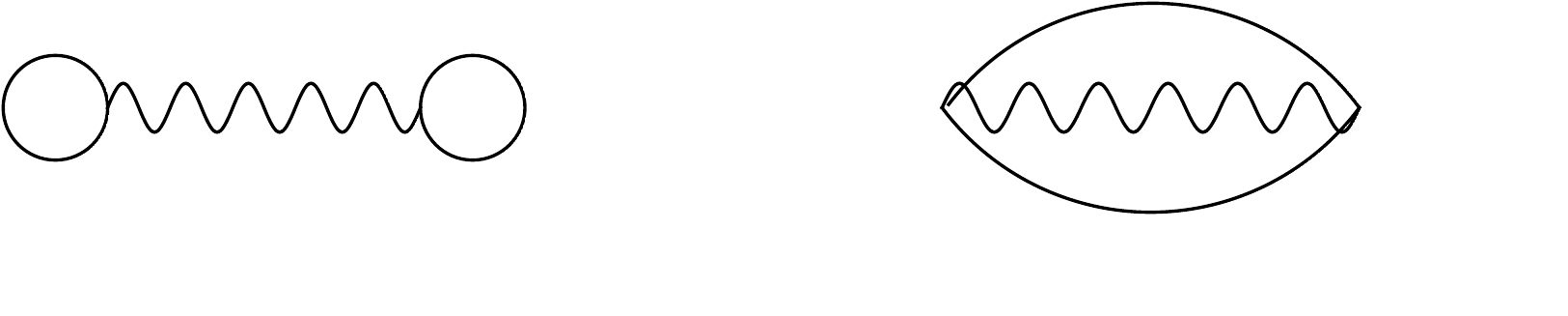}
  \end{center}
  \caption{Unlabeled closed Feynman diagrams of order 1. 
  In many-body perturbation theory, the left-hand diagram corresponds to the
  `Hartree' term and is often referred to as the `dumbbell'
  diagram. The right-hand diagram corresponds to the `Fock exchange' term
  and is often referred to as the `oyster' diagram. }
  \label{fig:feynmanLink1}
\end{figure}

\begin{figure}[h]
  \begin{center}
    \includegraphics[width=0.7\textwidth]{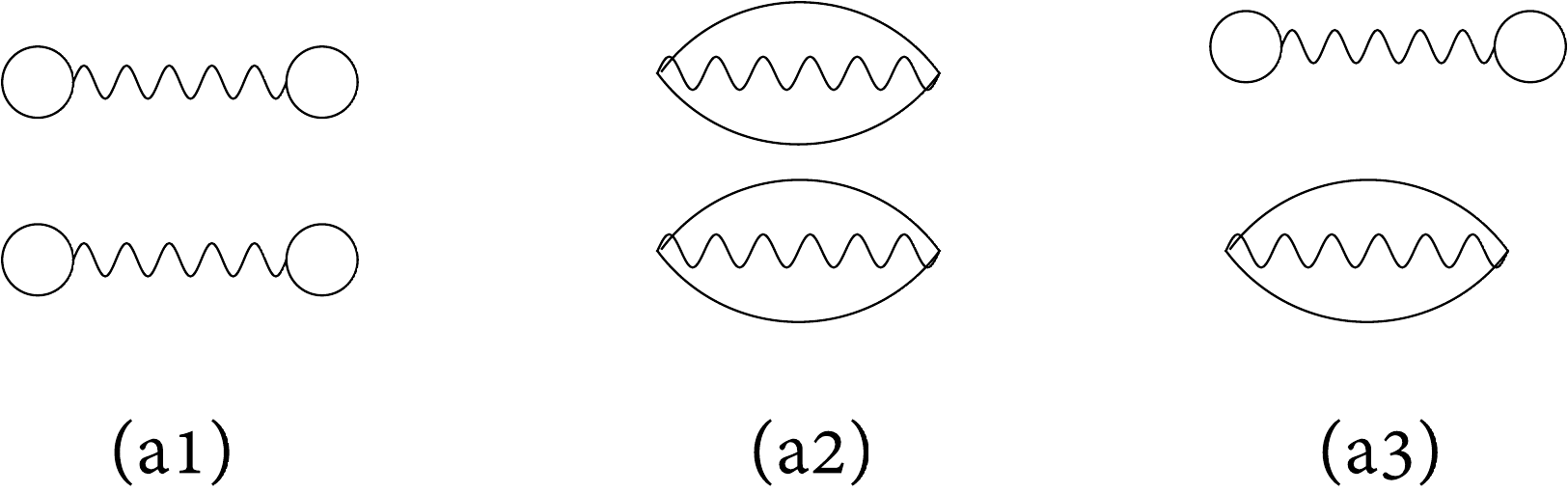}
    
    \vspace{4 mm}
    
    \includegraphics[width=0.7\textwidth]{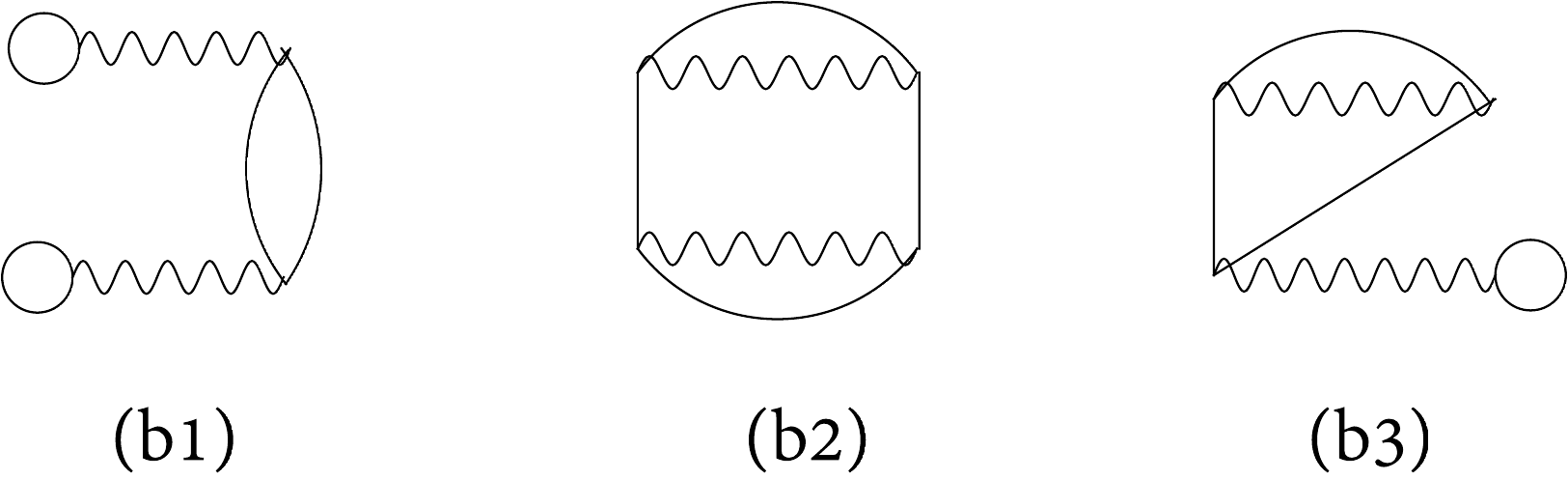}
    
    \vspace{4 mm}
    
    \includegraphics[width=0.5\textwidth]{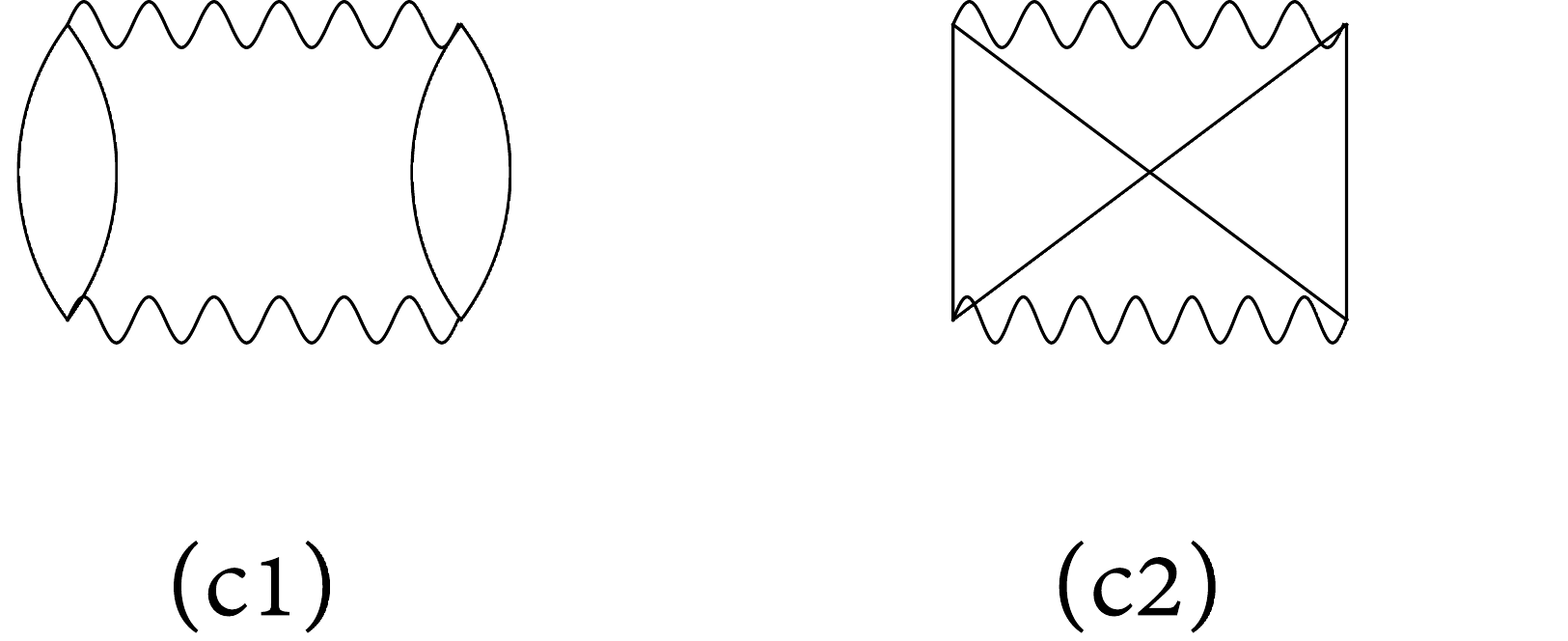}
    
  \end{center}
  \caption{Unlabeled closed Feynman diagrams of order 2.}
  \label{fig:feynmanLink2}
\end{figure}

 Observe that via this linking procedure, each 
interaction line can be viewed as a \emph{vertex} of
degree $4$ in an undirected graph with some \emph{additional structure}, in particular a partition of the four half-edges 
that meet at the vertex into two pairs of half-edges (separated by the wiggled line). Half-edges from the same
interaction line may be linked, so in fact the resulting graph may have self-edges (or loops). (In an undirected graph 
with self-edges, each self-edge contributes 2 to the degree of the vertex, so that the degree indicates the number of half-edges emanating from a vertex.)

In fact it is more natural to view closed Feynman diagrams as being specified via 
the linking of half-edges than it is to view them as undirected graphs specified by vertex and edge sets $(V,E)$. We now provide careful definitions.

\begin{definition}
  An unlabeled closed Feynman diagram $\Gamma$ of order $n$ consists of a vertex set $V$ with 
  $\vert V \vert = n$ and the following extra structure. To the vertices $v\in V$ there are associated 
  disjoint sets $H_1(v)$ and $H_2(v)$ each of cardinality $2$. The union $H(v) := H_1(v) \cup H_2(v)$ is the 
  `half-edge set' of the vertex (or `interaction') $v$, and the partition $\{H_1(v), H_2(v)\}$ reflects the 
  separation of the half-edges into two pairs separated by a wiggled line. The (disjoint) union $\bigcup_{v\in V} H(v)$ 
  is equipped with a partition $\Pi$ into $2n$ pairs of half-edges.\footnote{Intuitively 
  speaking, these data specify a recipe for linking up half-edges to form a connected undirected graph of 
  degree 4, but the previously specified data are a more natural representation of the diagram, especially once 
  labels are introduced.} In total we can view the unlabeled diagram $\Gamma$ as the tuple $\Gamma = (V,H_1,H_2,\Pi)$. 
  For any half-edge $h \in \bigcup_{v\in V} H(v)$, let the unique vertex $v$ associated with this half-edge be denoted 
  by $v = v(h)$.
\end{definition}

\begin{notation}
As a matter of notation going forward, we stress that we maintain a careful 
distinction in the notation between \emph{sets} or \emph{pairs}
$\{ \,\cdot\, , \,\cdot\, \}$, e.g., of half-edges, in which the order of the terms does not 
matter, and  \emph{ordered pairs}
$(\,\cdot\, , \,\cdot\,)$, e.g., of half-edges, in which the order matters.
\end{notation}

We will often refer to different flavors of Feynman diagrams simply as diagrams when 
the context is clear. However, if not otherwise specified, diagrams should be understood
to be unlabeled.

The reader may notice that our depictions of unlabeled diagrams do not distinguish the sides 
of each interaction line from one another by the labels `$1$' and `$2$,' while the definition seems 
to do so. This labeling should indeed not be important when we decide whether or not 
two unlabeled diagrams are `the same.' One could have instead defined an unlabeled diagram to have 
each vertex equipped merely with a partition of its four half-edges into two disjoint pairs, but such a 
definition would be a bit
cumbersome to accommodate notationally without making use of the labels `$1$' and `$2$' anyway later on. 
What is really more important is to define an equivalence relation (a notion of isomorphism) between 
unlabeled diagrams that only cares about the partition of the half-edge set at each vertex, not the labeling 
of the pairs in the partition. Of course such a notion must be introduced regardless of our choice of 
definition:
\begin{definition}
\label{def:unlabeledIsom}
Two unlabeled closed Feynman diagrams  $\Gamma = (V,H_1,H_2,\Pi)$
and  $\Gamma' = (V',H_1',H_2',\Pi')$ are isomorphic if there exists a bijection $\vp : V\ra V'$ and bijections 
$\psi_v : H(v) \ra H'(\vp(v))$ for all $v\in V$, such that
\begin{enumerate}
\item $\psi_v (H_1(v)) = H_1'(\vp(v))$ or $\psi_v (H_1(v)) = H_2'(\vp(v))$ for all $v\in V$.
\item for every $v_1,v_2\in V$, $h_1 \in H(v_1)$, $h_2 \in H(v_2)$, we have $\{h_1, h_2\} \subset \Pi$ if and only if 
  $\{ \psi_{v_1} (h_1), \psi_{v_2} (h_2) \} \subset \Pi '$.
\end{enumerate}
\end{definition}
We will often denote by $\psi$ a bijection between the 
\emph{entire} half-edge sets of two diagrams. Note that the 
$\psi_v$ can be obtained directly from the map $\psi$.

Now we defined the \emph{labeled} closed Feynman diagrams that were introduced informally earlier, as 
well as an appropriate notion of isomorphism for such diagrams.

\begin{definition}
  A labeled closed Feynman diagram $\Gamma$ is specified by an unlabeled closed Feynman diagram 
  $(V,H_1,H_2,\Pi)$, together with a bijection $\mathcal{V} : V \ra \{1,\ldots,n \}$, viewed as a `labeling' of the 
  vertices, as well as a bijection $\mathcal{H}_v : H(v) \ra \{i,j,k,l \}$ for every $v\in V$ 
  which sends $H_1(v)$ to either 
  $\{i,k\}$ or $\{j,l\}$, where
  $i,j,k,l$ are understood as symbols or distinct letters, not numbers. We will denote the collection of these 
  bijections, viewed as labelings of the half-edges associated to each vertex, by $\mathcal{H}$, so in total we 
  can view the labeled diagram $\Gamma$ as the tuple $\Gamma = (V,H_1,H_2,\Pi, \mathcal{V}, \mathcal{H})$. The 
  data $(\mathcal{V},\mathcal{H})$ will be called a labeling of the unlabeled diagram $(V,H_1,H_2,\Pi)$.
\end{definition}

\begin{definition}
Two closed labeled Feynman diagrams  $\Gamma = (V,H_1,H_2,\Pi, \mathcal{V}, \mathcal{H})$
and  $\Gamma' = (V',H_1',H_2',\Pi', \mathcal{V}', \mathcal{H}')$ are isomorphic if they are isomorphic as unlabeled Feynman 
diagrams via maps $\vp$ and $\psi_v$ as in Definition
\ref{def:unlabeledIsom}, which additionally satisfy
\begin{enumerate}
\item $\mathcal{V}(v) = \mathcal{V}'(\vp(v))$ for all $v\in V$, and 
\item $\mathcal{H}_v (h) = \mathcal{H}_{\vp(v)}' ( \psi_v (h) )$ for all $v\in V$, $h\in H(v)$.
\end{enumerate}
\end{definition}

\begin{remark}
\label{rem:labeled}
We can think of two labeled closed Feynman diagrams are isomorphic when they represent the same 
pairing on the set $\{i_1, j_1, k_1, l_1 ,\ldots, i_n, j_n, k_n, l_n\}$ of labels. In other words, the new perspective on 
labeled diagrams as unlabeled diagrams with extra structure is compatible with the old 
perspective on labeled diagrams as pairings, represented graphically by drawing $n$ interaction lines 
as in Fig.~\ref{fig:linkingExample} (b) on the page and then linking their dangling half-edges. The 
definition ensures that the labels $\{i, k\}$ and $\{j, l\}$ appear on opposite sides of 
the $p$-th interaction line in order to ensure this correspondence.
\end{remark}
\begin{remark}
\label{rem:labeled2}
Note that there is only one possible way for two labeled diagrams 
to be isomorphic, since an isomorphism must send each vertex in the 
one to its equivalently labeled vertex in the other, and it must 
send all half-edges associated to a given vertex in one to 
the equivalently labeled half-edges associated to the corresponding 
vertex in the other. This 
completely determines maps $\vp$ and $\psi_v$, so one need only 
to check whether or not these maps define an isomorphism of 
unlabeled diagrams.
\end{remark}

Refer again to Fig.~\ref{fig:feynmanZ1label} for a depiction of labeled closed diagrams. 
Recall that one can assign a numerical value to a labeled diagram by 
taking a formal product of the factors for each edge and each vertex indicated by 
Fig. \ref{fig:feynman1} and then summing over all half-edge labels. In fact, the 
value so obtained is independent of the choice of labeling, hence can be associated 
with the underlying unlabeled diagram as well.

\begin{definition}
The numerical value associated with a labeled or unlabeled diagram $\Gamma$ as in the preceding 
discussion is called the Feynman amplitude of $\Gamma$, denoted $F_\Gamma$.
\end{definition}

For
instance, Fig.~\ref{fig:feynmanZ1label} (a) should be interpreted as
\begin{equation}
  \sum_{i,j,k,l} (-v_{ij}) \delta_{ik}\delta_{jl}
  G^{0}_{ik}G^{0}_{jl} = -\sum_{i,j} v_{ij} G^{0}_{ii}G^{0}_{jj}.
  \label{}
\end{equation}
Comparing with the first term in Eq.~\eqref{eqn:Zfirst}, we see that we are 
missing only the pre-constant $\frac{1}{8}$. In fact the factor $8$ in this 
denominator has a significance that can be understood in terms of
the structure of Feynman diagrams. It is known as the
\textit{symmetry factor} for the Feynman diagram of Fig.~\ref{fig:feynmanZ1label} (a). 

More generally the symmetry factor of any Feynman diagram, which we
shall define shortly, allows us to likewise compute the pre-constants of
the associated term in our series expansion for the partition function.
Roughly speaking, the symmetry factor counts the number of different
labelings of a given labeled Feynman diagram that maintain its structure.  In
particular, after relabeling, two connected half-edges should remain
connected.

To define the symmetry factor more precisely, we first describe more carefully what is 
meant by a `relabeling.' Consider the permutation group $S_4$ on the four letters 
$\{i,j,k,l\}$. Denote by $R$ the subgroup of order $8$ generated by $(i, k)$, $(j, l)$, and $(i,j)(k,l)$. 
(In fact $R$ is isomorphic to the dihedral group of order $8$.)
Observe that the group $\mathbf{R}_n := S_n \times R^n$ 
acts in a natural way on the set of labelings of any 
fixed unlabeled diagram. Here $S_{n}$ acts on the permutation of
$n$ vertices, while $R^{n}$ permutes the associated half-edges.
In other words, $g = (\sigma, \tau_1, \ldots, \tau_n) \in S_n \times R^n$ acts on labelings
by permuting the vertex labelings according to $\sigma$ and by permuting the half-edge labelings at the $p$-th vertex 
according to $\tau_p$. We may think of each such $g$ as a `relabeling.'

\begin{definition}
  An automorphism of a labeled closed Feynman diagram $\Gamma$ of
  degree $n$ is a relabeling $g \in \mathbf{R}_n$ such that $g \cdot \Gamma$ 
  is isomorphic to $\Gamma$ (as a labeled Feynman diagram). The set of all 
  automorphisms of $\Gamma$ forms a subgroup $\mathrm{Aut}(\Gamma)$ 
  of $\mathbf{R}_n$, called the 
  automorphism group of $\Gamma$. The size $|\mathrm{Aut}(\Gamma)|$ of 
  the automorphism group is called the symmetry factor of $\Gamma$ and 
  denoted $S_\Gamma$. (Note that $S_\Gamma$ is independent of the 
  labeling of $\Gamma$, i.e., depends only on the structure of $\Gamma$ 
  as an unlabeled diagram.)
\end{definition}

\begin{remark}
Any relabeling $g \in \mathbf{R}_n$ of $\Gamma$ determines 
maps $\vp$ and $\psi$ from the vertex and half-edge sets of 
$\Gamma$, respectively, to themselves. The map 
$\vp$ is obtained by mapping the vertices of $\Gamma$ to the 
equivalently labeled vertices of $g\cdot \Gamma$, and the 
map $\psi$ is obtained by mapping the half-edges associated 
to each vertex in $\Gamma$ to the equivalently labeled 
half-edges of the equivalently labeled vertex of $\Gamma$. 
Conversely, any such maps $\vp$ and $\psi$ determine 
a relabeling $g \in \mathbf{R}_n$ of $\Gamma$. For any 
$g \in \mathbf{R}_n$, we denote the associated maps
by $\vp_g$ and $\psi_g$

Recalling Remark \ref{rem:labeled2}, it follows 
that $g\cdot \Gamma$ and $\Gamma$ are isomorphic 
as labeled diagrams (i.e., $g \in \mr{Aut}(\Gamma)$) if and only 
if the associated maps $\vp_g$ and $\psi_g$ define an isomorphism 
from $\Gamma$ to itself as an \emph{unlabeled} diagram. In other words, 
automorphisms, which have been defined via actions on labelings, are really just equivalent to 
self-isomorphisms of unlabeled diagrams. However, the perspective 
of labeled diagrams is valuable to retain for the application 
of Wick's theorem.
\end{remark}

For example, Fig.~\ref{fig:feynmanZ1auto} depicts all of the
automorphisms of the diagram in Fig.~\ref{fig:feynmanZ1label} (b), so the 
symmetry factor of this diagram is $4$.
One may readily verify that $S_\Gamma =8$ for the diagram in 
Fig.~\ref{fig:feynmanZ1label} (a).

\begin{figure}[h]
  \begin{center}
    \includegraphics[width=0.5\textwidth]{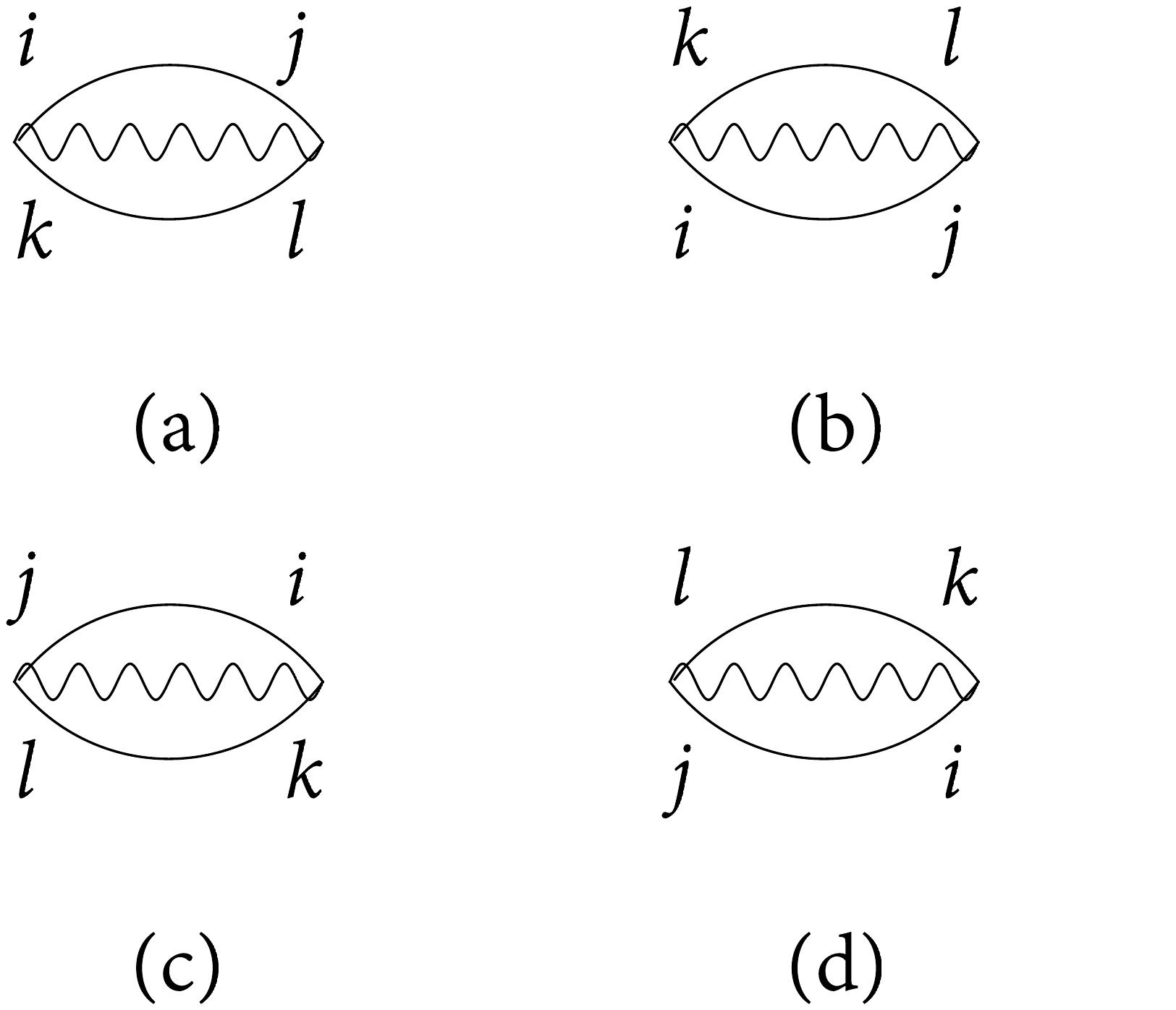}
  \end{center}
  \caption{All automorphisms for Fig.~\ref{fig:feynmanZ1label}
  (b).}
  \label{fig:feynmanZ1auto}
\end{figure}

These symmetry factors recover the pre-factors from our first-order expansion 
of the partition function. This correspondence will be established in general in 
Theorem \ref{thm:Zexpand}.

Before moving on, we comment that two non-isomorphic labeled diagrams can be isomorphic as unlabeled 
diagrams. In this case, the numerical values associated with both are nonetheless the same. 
For instance, Fig.~\ref{fig:feynmanZ1label} (b) represents
\[
\sum_{i,j,k,l} (-v_{ij}) \delta_{ik}\delta_{jl}
  G^{0}_{ij}G^{0}_{kl} = -\sum_{i,j} v_{ij} G^{0}_{ij}G^{0}_{ij},
\]
while (b') represents 
\[
\sum_{i,j,k,l} (-v_{ij}) \delta_{ik}\delta_{jl}
  G^{0}_{il}G^{0}_{kj} = -\sum_{i,j} v_{ij} G^{0}_{ij}G^{0}_{ij},
\]
i.e., the same term.
When we ultimately sum over (isomorphism classes of) unlabeled diagrams in our series expansion 
for the partition function, 
(b) and (b') will \textit{not}
be counted as distinct diagrams. Therefore we record the following definition:

\begin{definition}
The set of (isomorphism classes of) unlabeled closed Feynman diagrams is 
denoted $\mf{F}_{0}$.
\end{definition}

In our new terminology, Fig.~\ref{fig:feynmanLink1} and  Fig.~\ref{fig:feynmanLink2} 
depict all isomorphism classes of unlabeled closed diagrams of first and second order, respectively.
Summation over the unlabeled diagrams, 
as opposed to the labeled diagrams, significantly simplifies the effort of bookkeeping, at the cost 
of computing symmetry factors for each diagram.

\subsection{Feynman rules for $Z$}
\label{sec:feynmanZ}

We are now ready to state and prove the so-called `Feynman rules' 
for the diagrammatic expansion of the partition function, i.e., the recipe for 
producing the Taylor expansion via the enumeration of unlabeled diagrams.

\begin{theorem}\label{thm:Zexpand}
  The asymptotic series expansion for $Z$ is given by 
  \begin{equation}
    Z = Z_{0} \sum_{\Gamma\in \mf{F}_{0}} \frac{F_{\Gamma}}{S_{\Gamma}},
    \label{eqn:ZFeynman}
  \end{equation}
  i.e., the $n$-th term in the series of Eq.~\eqref{eqn:Ztaylor} is given by the sum of $Z_0 \frac{F_\Gamma}{S_\Gamma}$ 
    over isomorphism classes of unlabeled Feynman diagrams $\Gamma$ of order $n$.
\end{theorem}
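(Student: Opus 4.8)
The plan is to start from the Wick-expanded $n$-th term of \eqref{eqn:Ztaylor} and descend to a sum over isomorphism classes of unlabeled diagrams by an orbit-counting argument for the action of $\mathbf{R}_n = S_n \times R^n$. First I would fix a standard labeling (vertices $1,\dots,n$, and half-edge labels $i,j,k,l$ on each vertex with $\{i,k\}$ and $\{j,l\}$ placed on opposite sides), so that a labeled diagram carrying this standard labeling is determined precisely by its pairing $\Pi \in \Pi(\mc{I}_{4n})$ of the $4n$ half-edges. Expanding the Gaussian moment in \eqref{eqn:nthExpand} by the Isserlis--Wick theorem (Theorem~\ref{thm:wick}) writes it as a sum over such pairings; multiplying by the interaction factors $\prod_m (-v_{i_m j_m})\delta_{i_m k_m}\delta_{j_m l_m}$ and summing over all indices, the contribution of a given pairing $P$ is exactly the Feynman amplitude $F_P$ of the corresponding labeled diagram. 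Interchanging the two sums shows that the $n$-th order contribution to $Z$ equals $\frac{Z_0}{8^n n!}\sum_{P \in \Pi(\mc{I}_{4n})} F_P$, so it remains to reorganize this sum over labelings into a sum over unlabeled diagrams.

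Next I would let $\mathbf{R}_n$ act on $\Pi(\mc{I}_{4n})$ by relabeling, where $g\cdot P$ is the push-forward of $P$ under the permutation of half-edges induced by $g$. Two facts drive the argument: (i) the Feynman amplitude is invariant under relabeling and depends only on the underlying unlabeled isomorphism class, so $F_{g\cdot P} = F_P$; and (ii) two pairings lie in the same $\mathbf{R}_n$-orbit if and only if the associated labeled diagrams are isomorphic as \emph{unlabeled} diagrams. For (ii) the forward direction is immediate, since a relabeling $g$ supplies maps $\vp_g,\psi_g$ realizing an unlabeled isomorphism. The converse is where the definition of $R$ earns its keep: any unlabeled isomorphism $(\vp,\psi)$ between two diagrams sharing the standard labeling consists of a vertex permutation $\vp$ together with per-vertex bijections $\psi_v$ preserving the partition $\{H_1(v),H_2(v)\}$, and since $R \le S_4$ is precisely the order-$8$ stabilizer of the partition $\{\{i,k\},\{j,l\}\}$, these data are exactly encoded by an element $g \in \mathbf{R}_n$ with $g\cdot P = P'$. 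Consequently the $\mathbf{R}_n$-orbits on $\Pi(\mc{I}_{4n})$ are in bijection with the order-$n$ classes in $\mf{F}_{0}$.

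Then I would apply the orbit--stabilizer theorem. By the same correspondence, the stabilizer of a pairing $P$ is exactly $\mr{Aut}(\Gamma)$ for $\Gamma$ the unlabeled class of $P$: an element $g$ fixes $P$ iff $g\cdot\Gamma \cong \Gamma$ as labeled diagrams, which by the definition of the automorphism group (and the ensuing remark identifying relabelings with unlabeled self-isomorphisms, together with the uniqueness in Remark~\ref{rem:labeled2}) means precisely that $\vp_g,\psi_g$ is a self-isomorphism of the unlabeled diagram. Hence each orbit has size $|\mathbf{R}_n|/S_\Gamma = 8^n n!/S_\Gamma$, and by (i) all $F_P$ in that orbit share the common value $F_\Gamma$. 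Grouping the sum over pairings by orbit therefore gives $\frac{Z_0}{8^n n!}\sum_{\Gamma}\frac{8^n n!}{S_\Gamma}F_\Gamma = Z_0 \sum_{\Gamma}\frac{F_\Gamma}{S_\Gamma}$, with $\Gamma$ ranging over order-$n$ classes in $\mf{F}_{0}$; summing over $n$ yields \eqref{eqn:ZFeynman}. This also explains the otherwise mysterious prefactor in \eqref{eqn:nthExpand}: the $8^n n!$ was inserted precisely to cancel $|\mathbf{R}_n|$.

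The main obstacle is not the orbit--stabilizer bookkeeping itself but the careful verification of item (ii)---that relabelings and unlabeled isomorphisms carry exactly the same information, with the stabilizer matching $\mr{Aut}(\Gamma)$ on the nose. This is delicate precisely because of coincidences (self-edges, or diagrams whose structure is fixed by a nontrivial relabeling), which are exactly the diagrams with $S_\Gamma > 1$; the virtue of the present formulation is that orbit--stabilizer absorbs all such cases uniformly, with no ad hoc treatment of degenerate diagrams. I would also record as small lemmas the two facts underlying (i), namely that $F_\Gamma$ is well defined on unlabeled isomorphism classes (an isomorphism merely relabels the summation indices) and that $|R|=8$ equals the order of the partition stabilizer in $S_4$; both are routine, but they are exactly what makes the constants line up.
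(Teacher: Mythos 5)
Your proposal is correct and follows essentially the same route as the paper: rewrite the Wick-expanded $n$-th term as a sum over labeled diagrams (equivalently, pairings of the $4n$ half-edges), then apply the orbit--stabilizer theorem for the action of $\mathbf{R}_n = S_n\times R^n$, identifying the stabilizer with $\mathrm{Aut}(\Gamma)$ so that each unlabeled isomorphism class is counted $8^n n!/S_\Gamma$ times. The only difference is presentational — you act on pairings and partition them into orbits, while the paper fixes an unlabeled diagram and counts its labelings — and your extra verification that $R$ is exactly the $S_4$-stabilizer of the partition $\{\{i,k\},\{j,l\}\}$ is a detail the paper leaves implicit.
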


\begin{remark}
We remind the reader that for a diagram $\Gamma$ of order $n$, the Feynman amplitude $F_\Gamma$ can be computed as follows:
\begin{enumerate}
    \item Assign a dummy index to each of the $4n$ half-edges.
    \item Each edge with half-edge indices $a,b$ yields a factor $G^{0}_{ab}$.
    \item Each interaction line with half-edge indices $a,b,c,d$ yields a factor
      $-v_{ab}\delta_{ac}\delta_{bd}$.
  \item Multiply all factors obtained via steps 2 and 3, and sum over all dummy indices from $1$ to $N$.
  \end{enumerate}
\end{remark}

\begin{proof}
Recall Eq.~\eqref{eqn:nthExpand}, i.e., that we can write the $n$-th term in the series of Eq.~\eqref{eqn:Ztaylor} as
\[
\frac{Z_0}{8^n n!} \sum_{i_1,j_1,k_1,l_1,\ldots,i_n,j_n,k_n,l_n = 1}^N \left( \prod_{m=1}^n - v_{i_m j_m} \delta_{i_m k_m} \delta_{j_m l_m} \right) \average{ \prod_{m=1}^n  x_{i_m} x_{j_m} x_{k_m} x_{l_m}}_0.
\]
By our preceding discussions (see Remark \ref{rem:labeled}) this quantity can be written as 
\[
\frac{Z_0}{8^n n!} \sum_{\Gamma\ \mathrm{labeled,}\,\mathrm{order}\,n } F_{\Gamma}.
\]
We wish to replace the sum over (isomorphism classes of) labeled diagrams 
with a sum over unlabeled diagrams. The question is then: to any unlabeled diagram $\Gamma$ 
of order $n$, how many distinct labeled diagrams 
can be obtained by labeling $\Gamma$? To answer this question first assign an arbitrary labeling 
to obtain a labeled diagram which we shall also call $\Gamma$. Then the set of all labelings is the orbit of 
$\Gamma$ under the group $\mathbf{R}_n$. By the orbit-stabilizer theorem, the size of this orbit 
is equal to $\vert \mathbf{R}_n \vert / \vert (\mathbf{R}_n)_\Gamma \vert$, where $(\mathbf{R}_n)_\Gamma$ 
is the stabilizer subgroup of $\mathbf{R}_n$ with respect to $\Gamma$. But this subgroup is precisely 
$\mathrm{Aut}(\Gamma)$ and $\vert \mathbf{R}_n \vert = 8^n n!$, so the number of distinct labeled diagrams associated with the underlying 
unlabeled diagram is $\frac{8^n n!}{S_\Gamma}$. Therefore the $n$-th term in the series of Eq.~\eqref{eqn:Ztaylor} is
in fact 
\[
Z_0 \sum_{\Gamma\ \mathrm{unlabeled,}\,\mathrm{order}\,n } \frac{F_{\Gamma}}{S_{\Gamma}},
\]
as was to be shown.
\end{proof}

We now apply Theorem \ref{thm:Zexpand} to compute the second-order part of the 
expansion for $Z$.
We can represent the $8$ terms in the
second-order part via the $8$ (isomorphism classes of) 
unlabeled closed Feynman diagrams depicted in
Fig.~\ref{fig:feynmanLink2}, applying Theorem~\ref{thm:Zexpand} to compute
the pre-factor of each term.  The terms are organized into three groups
according to the three groups of terms in Eq.~\eqref{eqn:Zsecond}. 
The diagrammatic approach facilitates the enumeration of these terms
 and allows us to classify the terms more intuitively. 
The first group of
diagrams (a1)--(a3) in Fig.~\ref{fig:feynmanLink2} are simply the diagrams 
obtained as `concatenations'
of two disconnected first-order diagrams. When computing
the symmetry factor, we need to take into account the possible
exchange of the
two interaction lines as well as the symmetry factor of each disconnected piece as 
a first-order diagram. Unlike diagrams (a1) and (a2), diagram (a3) is 
not symmetric with respect to the exchange of the two interaction lines, so the 
former contribution is not included. One can readily verify the correspondence between the rest
of diagrams and terms in Eq.~\eqref{eqn:Zsecond}. The distinction between the (b) and (c)
diagrams will be made clear later on in our discussion of the so-called bold diagrams.

\subsection{Comments on other interactions}
\label{subsec:otherInt}
We pause to make some brief comments on the development of 
Feynman diagrams for other interactions besides the generalized Coulomb interaction of 
Eq.~\eqref{eqn:Uterm}.

First, consider an interaction of the form 
\begin{equation}
  U(x) = \frac{1}{4!} \sum_{i,j,k,l} u_{ikjl} \,x_i x_j x_k x_l,
  \label{eqn:Uterm2}
\end{equation}
where $u_{ikjl}$ is a \emph{symmetric} fourth-order tensor (i.e., invariant under 
any permutation of the indices). The inclusion of the factor of $4!$ owes to the fact that the 
symmetry group of the interaction (i.e., the analog of $R$) is now all of $S_4$, which 
is of order $4!$. Then the developments will 
be much the same, but with the role of the interaction line of Fig.~\ref{fig:feynman1} (b) 
assumed by the device shown in Fig.~\ref{fig:feynman4}.

\begin{figure}[h]
  \begin{center}
  \vspace{2 mm}
    \includegraphics[width=0.2\textwidth]{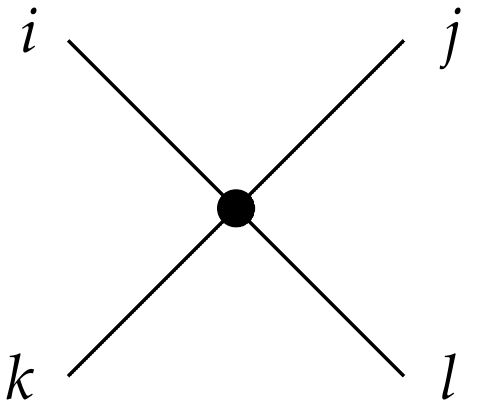}
    \vspace{1 mm}
  \end{center}
  \caption{The interaction $u_{ikjl} $\,.}
  \label{fig:feynman4}
\end{figure}

Since any fourth-order tensor can be symmetrized without changing the 
associated quartic form, why not just consider symmetric interactions? The reason 
is that symmetrizing, e.g., the generalized Coulomb interaction throws away 
its lower-dimensional structure. While there are somewhat fewer diagram topologies
to contend with, each vertex now involves a sum over four indices, not two. 
Moreover, the two-body interaction of quantum many-body physics has a natural 
asymmetry between creation and annihilation operators that is reflected in the 
structure of the Feynman diagrams for the generalized Coulomb interaction.

There is nonetheless a way to generalize the generalized Coulomb interaction 
without destroying its structure. Indeed, simply consider
\begin{equation}
  U(x) = \frac{1}{8} \sum_{i,j,k,l} u_{ikjl} \,x_i x_j x_k x_l,
  \label{eqn:Uterm3}
\end{equation}
where $u_{ikjl}$ is invariant under (1) the exchange $i$ with $k$, 
(2) the exchange of $k$ with $l$, and (3) the simultaneous exchange 
of $i$ with $j$ and $k$ with $l$. In other words, the symmetry group of 
the interaction is $R$ as for the generalized Coulomb interaction.

The developments for interactions of the form \eqref{eqn:Uterm3}---with 
the interaction line of Fig.~\ref{fig:feynman1} (b) now contributing 
the factor $u_{ikjl}$ in the computation of Feynman amplitudes---are 
no different than for interactions of the form \eqref{eqn:Uterm}, 
with the exception of the GW approximation to be discussed in section \ref{sec:GW}. 
For the sake of writing down concrete expressions that correspond 
to various diagrams, we simply assume an interaction of the 
form \eqref{eqn:Uterm}.

\subsection{Feynman rules for $\Omega$}\label{sec:feynmanOmega}

The free energy $\Omega$ is given by the negative logarithm of $Z$ as in
Eq.~\eqref{eqn:grandpotential}, which appears to be difficult to
evaluate in terms of Feynman diagrams.  It turns out that the logarithm 
in fact simplifies the
diagrammatic expansion by removing the disconnected diagrams as in
Fig.~\ref{fig:feynmanLink2} (a). This is the content of 
Theorem~\ref{thm:logZexpand} below, which is called the linked cluster
expansion in physics literature.

Before stating the theorem, we establish some notation. Recall that a closed diagram 
induces an undirected graph of degree four. We say that a closed diagram is \emph{connected}
if the induced graph is connected.
\begin{definition}
The set of all connected closed diagrams is denoted $\mf{F}^{\mr{c}}_{0}\subset \mf{F}_{0}$.
\end{definition}

Similarly we can talk about connected components of a 
Feynman diagram in the obvious way. We can also consider the `union' $\Gamma_1 \cup \Gamma_2$
of diagrams, i.e., the diagram constructed by viewing $\Gamma_1$ and $\Gamma_2$ as disconnected 
pieces of the same diagram.
We leave more careful definitions of these notions to the reader.
We establish a special notation for the union of several copies of the same diagram:
\[
\Gamma^{n} := \bigcup_{j=1}^{n} \Gamma
\]

A general diagram $\Gamma\in \mf{F}_{0}$ can be decomposed as
\begin{equation}
  \Gamma = \bigcup_{i=1}^{K} \Gamma_{i}^{n_{i}}, 
  \label{eqn:gamma_decompose}
\end{equation}
where $\Gamma_{1},\ldots,\Gamma_{K} \in \mf{F}^{\mr{c}}_{0}$ are distinct.

For any diagram $\Gamma$ expressed in the form
of~\eqref{eqn:gamma_decompose}, the Feynman amplitude is
\begin{equation}
  F_{\Gamma} = F^{n_{1}}_{\Gamma_{1}}\cdots
  F^{n_{K}}_{\Gamma_{K}}, 
  \label{}
\end{equation}
and since $\Gamma_{1},\ldots,\Gamma_{K}$ are distinct diagrams, the
symmetry factor is 
\begin{equation}
  S_{\Gamma} = (n_{1}!\cdots n_{K}!)\,S^{n_{1}}_{\Gamma_{1}}\cdots
  S^{n_{K}}_{\Gamma_{K}}.
  \label{}
\end{equation}

It is convenient to define $F_{\Gamma_{\emptyset}}=1$ and $S_{\Gamma_{\emptyset}}=1$ for the `empty' Feynman
diagram $\Gamma_{\emptyset}$ of order zero and moreover to let $\Gamma^0 = \Gamma_{\emptyset}$ for any 
diagram $\Gamma \in \mf{F}_0$.

Using this notation, we can then think of every diagram $\Gamma \in
\mf{F}_0$ as being uniquely specified by a function
$n:\mf{F}^{\mr{c}}_{0} \ra \mathbb{N}$ mapping $\Gamma \mapsto
n_{\Gamma}$, where $\mathbb{N}$ indicates the set of natural numbers
including zero. We denote the set of such functions by
$\mathbb{N}^{\mf{F}^{\mr{c}}_{0} }$.  Indeed, any such function
specifies a diagram $\Gamma(n) := \bigcup_{\Gamma\in
\mf{F}^{\mr{c}}_{0}} \Gamma^{n_{\Gamma}}$.  Moreover, $F_{\Gamma(n)} =
\prod_{\Gamma \in \mf{F}_0^{\mr{c}}} F_{\Gamma}^{n_{\Gamma}}$, and
$S_{\Gamma(n)} = \prod_{\Gamma \in \mf{F}_0^{\mr{c}}} n_{\Gamma}!\,
S_{\Gamma}^{n_{\Gamma}} $.

Now we are ready to state and prove the diagrammatic expansion for the free energy.

\begin{theorem}[Linked cluster expansion for $\Omega$]\label{thm:logZexpand}
  The asymptotic series expansion for $\Omega$ is
  \begin{equation}
    \Omega = \Omega_{0} - \sum_{\Gamma\in \mf{F}^{\mr{c}}_{0}} \frac{F_{\Gamma}}{S_{\Gamma}},
    \label{eqn:logZFeynman}
  \end{equation}
  where $\Omega_{0}=-\log Z_{0}$.
\end{theorem}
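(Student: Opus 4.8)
The plan is to exponentiate the connected diagrams, exploiting the unique decomposition \eqref{eqn:gamma_decompose} of each closed diagram into connected components together with the factorizations of $F_\Gamma$ and $S_\Gamma$ recorded just above. Starting from Theorem~\ref{thm:Zexpand}, I would write $Z/Z_0 = \sum_{\Gamma \in \mf{F}_0} F_\Gamma/S_\Gamma$ and reindex the sum using the bijection $\Gamma \leftrightarrow n$ between $\mf{F}_0$ and $\NN^{\mf{F}_0^{\mr{c}}}$ described before the theorem. Since $F_{\Gamma(n)} = \prod_{\Gamma \in \mf{F}_0^{\mr{c}}} F_\Gamma^{n_\Gamma}$ and $S_{\Gamma(n)} = \prod_{\Gamma \in \mf{F}_0^{\mr{c}}} n_\Gamma!\, S_\Gamma^{n_\Gamma}$, this gives
$$\frac{Z}{Z_0} = \sum_{n \in \NN^{\mf{F}_0^{\mr{c}}}} \prod_{\Gamma \in \mf{F}_0^{\mr{c}}} \frac{1}{n_\Gamma!}\left(\frac{F_\Gamma}{S_\Gamma}\right)^{n_\Gamma}.$$

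Next I would recognize the right-hand side as a product of exponentials. Summing over all $n \in \NN^{\mf{F}_0^{\mr{c}}}$ factorizes into an independent geometric bookkeeping over each connected diagram, so that
$$\frac{Z}{Z_0} = \prod_{\Gamma \in \mf{F}_0^{\mr{c}}} \left( \sum_{m=0}^{\infty} \frac{1}{m!}\left(\frac{F_\Gamma}{S_\Gamma}\right)^m \right) = \prod_{\Gamma \in \mf{F}_0^{\mr{c}}} \exp\!\left(\frac{F_\Gamma}{S_\Gamma}\right) = \exp\!\left( \sum_{\Gamma \in \mf{F}_0^{\mr{c}}} \frac{F_\Gamma}{S_\Gamma} \right).$$
Taking $-\log$ of both sides and using $\Omega = -\log Z$ and $\Omega_0 = -\log Z_0$ then yields \eqref{eqn:logZFeynman}. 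The crucial point is the factor $n_\Gamma!$ appearing in $S_{\Gamma(n)}$ for a diagram with $n_\Gamma$ identical connected components: this is exactly what turns $\sum_m \frac{1}{m!}(F_\Gamma/S_\Gamma)^m$ into the exponential series, so that the conventions built into the definition of $S_\Gamma$ pay off precisely here.

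The one step that requires care, and which I regard as the main obstacle, is that \eqref{eqn:ZFeynman} is an \emph{asymptotic} (formal power) series, so the factorization of the infinite sum into an infinite product and the passage through $\exp$/$\log$ are not \emph{a priori} legitimate analytic operations. To make this rigorous I would reintroduce the coupling constant (as in $Z_\lambda$ of \eqref{eqn:Zlambda}), so that each diagram of order $n$ carries a factor $\lambda^n$, and then argue order by order: the coefficient of $\lambda^n$ on each side receives contributions only from diagrams of order at most $n$, of which there are finitely many, so every rearrangement above is a finite manipulation and the identity holds as an equality of formal power series in $\lambda$ (equivalently, of the associated asymptotic expansions). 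With this reduction the analytic-looking exponentiation becomes the purely combinatorial exponential formula, and everything else follows immediately from the bookkeeping already established before the theorem.
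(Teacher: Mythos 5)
Your proposal is correct and is essentially the paper's own argument: both rest on the unique decomposition of a closed diagram into connected components, the factorizations $F_{\Gamma(n)} = \prod_\Gamma F_\Gamma^{n_\Gamma}$ and $S_{\Gamma(n)} = \prod_\Gamma n_\Gamma!\, S_\Gamma^{n_\Gamma}$, the bijection $\mf{F}_0 \leftrightarrow \NN^{\mf{F}_0^{\mr{c}}}$, and the exponential formula. The only difference is the direction of the computation (the paper expands $\exp(\sum_\Gamma F_\Gamma/S_\Gamma)$ via the multinomial theorem and matches it to $Z/Z_0$, whereas you factor the sum over $\NN^{\mf{F}_0^{\mr{c}}}$ into a product of exponential series), and your closing remark about working order by order in $\lambda$ to justify the formal manipulations is a sound, slightly more explicit treatment of a point the paper leaves implicit.
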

\begin{proof}
Exponentiating both sides of Eq.~\eqref{eqn:logZFeynman} motivates the consideration of the following 
expression:
\begin{equation}
\label{eqn:expExpand}
\exp\left( \sum_{\Gamma\in \mf{F}^{\mr{c}}_{0}}
      \frac{F_{\Gamma}}{S_{\Gamma}} \right) = \sum_{K=0}^\infty \frac{1}{K!} 
      \left( \sum_{\Gamma\in \mf{F}^{\mr{c}}_{0}}
      \frac{F_{\Gamma}}{S_{\Gamma}} \right)^K.
\end{equation}
We aim to relate this expansion to our expansion for the partition function from Theorem \ref{thm:Zexpand}. 

We will apply the multinomial theorem to compute the $K$-th power of the sum over $\Gamma\in \mf{F}^{\mr{c}}_{0}$ 
appearing on the right-hand side of Eq.~\eqref{eqn:expExpand}. This yields a sum over $n \in \mathbb{N}^{\mf{F}^{\mr{c}}_{0} }$ 
such that $\sum_{\Gamma \in \mf{F}^{\mr{c}}_{0}} n_{\Gamma} = K$ weighted by the multinomial coefficients 
$\frac{K!}{\prod_{\Gamma \in  \mf{F}_{0}^{\mr{c}} } (n_{\Gamma})! }$, as in
\begin{eqnarray*}
\exp\left( \sum_{\Gamma\in \mf{F}^{\mr{c}}_{0}}
      \frac{F_{\Gamma}}{S_{\Gamma}} \right) & = & \sum_{K=0}^\infty \frac{1}{K!} \ 
      \sum_{n \in \mathbb{N}^ {\mf{F}_{0}^{\mr{c}} } \,:\, \sum_{\Gamma} n_{\Gamma} = K} \ 
      \frac{K!}{\prod_{\Gamma \in  \mf{F}_{0}^{\mr{c}} } (n_{\Gamma} !) }
      \prod_{\Gamma \in \mf{F}_0^{\mr{c}} } \left( \frac{ F_{\Gamma}}{S_{\Gamma}}\right)^{n_{\Gamma}} \\
      & = & 
      \sum_{K=0}^\infty \ 
      \sum_{n \in \mathbb{N}^ {\mf{F}_{0}^{\mr{c}} } \,:\, \sum_{\Gamma} n_{\Gamma} = K} \ 
      \frac{ F_{\Gamma(n)}}{S_{\Gamma(n)}} \\ 
      & = & 
      \sum_{n \in \mathbb{N}^ {\mf{F}_{0}^{\mr{c}} }} \ 
      \frac{ F_{\Gamma(n)}}{S_{\Gamma(n)}},
\end{eqnarray*}
where in the penultimate step we have used our formulas for the Feynman amplitude and symmetry factor
of the diagram $\Gamma(n)$ associated to $n \in \mathbb{N}^{\mf{F}^{\mr{c}}_{0} }$. But since 
$\mathbb{N}^{\mf{F}^{\mr{c}}_{0} }$ is in bijection with $\mf{F}_0$ via $n \mapsto \Gamma(n)$, we have proved:

\begin{eqnarray*}
\exp\left( \sum_{\Gamma\in \mf{F}^{{\mr{c}}}_{0}}
      \frac{F_{\Gamma}}{S_{\Gamma}} \right) = \sum_{\Gamma \in \mf{F}_{0}} \frac{ F_{\Gamma}}{S_{\Gamma}} = \frac{Z}{Z_0}, 
 \end{eqnarray*}
 with the last equality following from Theorem \ref{thm:Zexpand}.
Taking logarithms yields the theorem.
\end{proof}

For example, the second-order contribution to $\Omega$ is 
\begin{multline}
\sum_{i_1,j_1,i_2,j_2}
  v_{i_1j_1} v_{i_2j_2} \bigg[ \bigg( 
\frac{1}{2!\cdot 8} G^{0}_{i_1i_1}G^{0}_{i_2i_2}G^{0}_{j_1j_2}G^{0}_{j_1j_2}  \\
   +\ 
 \frac{1}{2\cdot 2}
  G^{0}_{i_1j_1}G^{0}_{i_2j_2}G^{0}_{i_1i_2}G^{0}_{j_1j_2} 
 +\frac{1}{4} G^{0}_{i_1j_1}G^{0}_{i_1i_2}G^{0}_{j_1i_2}G^{0}_{j_2j_2} 
  \bigg) \\
  + \ \left( \frac{1}{2!\cdot 8} G^{0}_{i_1i_2}G^{0}_{i_1i_2}G^{0}_{j_1j_2}G^{0}_{j_1j_2} 
  + \frac{1}{2!\cdot 4}G^{0}_{i_1i_2}G^{0}_{j_1i_2}G^{0}_{i_1j_2}G^{0}_{j_1j_2}\right)
  \bigg], 
  \label{eqn:Omegasecond}
\end{multline}
and the terms are yielded by Fig.~\ref{fig:feynmanLink2} (b), (c).

\subsection{Feynman rules for $G$}\label{sec:feynmanG}

Our next goal is to obtain a diagrammatic
expansion for the Green's function $G$. First observe that the 
asymptotic series expansion for $ZG$ can be written, similarly to that of 
$Z$, as 
\begin{equation}
  Z G_{ij} 
 \sim
  \sum_{n=0}^{\infty} \frac{1}{n!} \int_{\RR^{N}} x_{i} x_{j}(-U(x))^{n}
  e^{-\frac12 x^{T} A x}\ud x.
  \label{eqn:ZGTaylor}
\end{equation}
Again the interchange between the summation and integration is only
formal. The right hand side of Eq.~\eqref{eqn:ZGTaylor} can be evaluated
using the Wick theorem and a new class of Feynman diagrams. 

Similarly to Eq.~\eqref{eqn:nthExpand}, we see that the $n$-th term in the expansion of
Eq.~\eqref{eqn:ZGTaylor} is given by 
\begin{equation}
\label{eqn:nthExpandG}
\frac{Z_0}{8^n n!} \sum_{i_1,j_1,k_1,l_1,\ldots,i_n,j_n,k_n,l_n = 1}^N \left( \prod_{m=1}^n -v_{i_m j_m} \delta_{i_m k_m} \delta_{j_m l_m} \right) \average{ x_i x_j \prod_{m=1}^n  x_{i_m} x_{j_m} x_{k_m} x_{l_m}}_0.
\end{equation}

One can then use the Wick theorem to express this quantity as a sum over pairings of $\mc{I}_{4n+2}$, but once again
it is easier to represent the pairings graphically. As before, for each $m=1,\ldots,n$, we draw one copy of Fig. \ref{fig:feynman1} (b), 
i.e., an interaction line with four dangling \emph{half-edges} labeled $i,j,k,l$.
We can then 
number each interaction line as $1,\ldots,n$ and indicate this by adding an appropriate subscript to the labels $i,j,k,l$ associated to this vertex. 
Now we also draw two additional freely floating half-edges with labels $i$ and $j$. We can view the half-edges 
as terminating in a vertex indicated by a dot (which will distinguish these diagrams from the so-called `truncated' 
diagrams that appear later on), while the other end of the half-edge is available for linking.
The $4n+2$ half-edges $\{i,j,i_1,\ldots,l_n\}$, each with a unique label, 
represent the set on which we consider pairings. We depict a pairing by linking the paired half-edges with a bare 
propagator. The resulting figure is a labeled Feynman diagram of order $n$. 
An example of order 2 is depicted in Fig. \ref{fig:linkingExampleG}.

\begin{figure}[h]
  \begin{center}
    \includegraphics[width=0.3\textwidth]{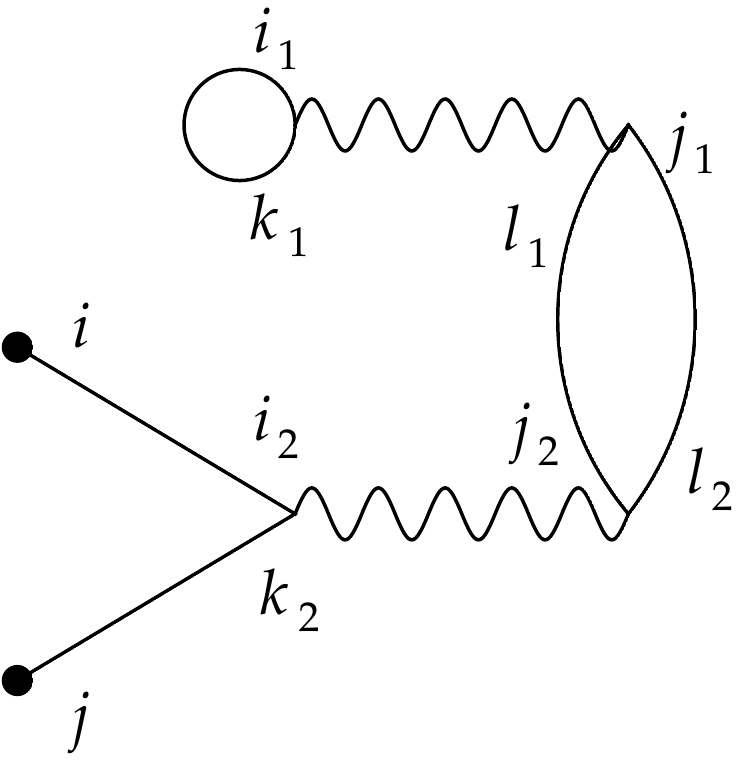}
  \end{center}
  \caption{A labeled closed Feynman diagram of order $2$.}
  \label{fig:linkingExampleG}
\end{figure}

The quantity associated via Wick's theorem with the pairing represented by such a diagram can then be computed by 
taking a formal product over all propagators and interaction lines of the associated quantities indicated in Fig. \ref{fig:feynman1} (a) 
and (b), respectively and then summing over the indices $i_1,j_1,k_1,l_1,\ldots,i_n,j_n,k_n,l_n$. Importantly we \emph{do not} sum 
over the indices $i,j$, as these specify the fixed entry of the Green's function $G_{ij}$ that we are computing via expansion. 
For the example depicted in
Fig. \ref{fig:linkingExampleG}, 
this procedure yields the sum

\begin{multline}
\sum_{i_1,j_1,k_1,l_1,i_2,j_2,k_2,l_2} v_{i_1 j_1} \delta_{i_1 k_1} \delta_{j_1 l_1} v_{i_2 j_2} \delta_{i_2 k_2} \delta_{j_2 l_2} 
G^0_{i_1 k_1} G^0_{j_1 l_2} G^0_{l_1 j_2} G^0_{i i_2} G^0_{j k_2}  \\ 
= \sum_{i_1,j_1,i_2,j_2} 
v_{i_1 j_1} v_{i_2 j_2} 
G^0_{i i_2} G^0_{j i_2} G^0_{i_1 i_1} G^0_{j_1 j_2} G^0_{j_1 j_2}.
\end{multline}

In summary, we can graphically represented the sum over pairings furnished by Wick's theorem as a sum 
over such diagrams, which we call \emph{labeled Feynman diagrams of order $n$ with $2$ external vertices}. 
(Perhaps calling them diagrams with `external half-edges' would be more appropriate, but `external vertices' is the 
conventional terminology.)

One can similarly imagine the natural appearance of Feynman diagrams with $2m$ external vertices 
in the expansion of the $2m$-point propagator $\average{x_{p_1} \cdots x_{p_{2m}}}$.

We can define the \emph{(partially labeled) Feynman diagrams of order $n$ with $2m$ external vertices} to be 
the $\Gamma = (V,H_1,H_2,E,\Pi,\mc{E})$, where $V,H_1,H_2$ are as in the definition of closed diagrams, $E$ is the set 
of $2m$ external half-edges, $\Pi$ is a partition of $E \cup \bigcup_{v\in V} H(v)$ into $2n + m$ pairs of 
half-edges, and $\mc{E}$ is a labeling of the external half-edges only. More precisely, $\mathcal{E}$ is a bijection 
from the external half-edge set $E$ to the set
of symbols $\{p_1,\ldots,p_{2m}\}$. In the case $m=1$ we will instead adopt the convention $\mc{E} : E \ra \{i,j\}$.

Two partially labeled diagrams of order $2$ with $2$ external vertices are depicted in 
Fig.~\ref{fig:partiallyLabeled}. Notice that these diagrams are not isomorphic due to the distinction 
of the external half-edges $i,j$, though they would be isomorphic as `fully unlabeled' diagrams.

\begin{figure}[h]
  \begin{center}
    \includegraphics[width=0.5\textwidth]{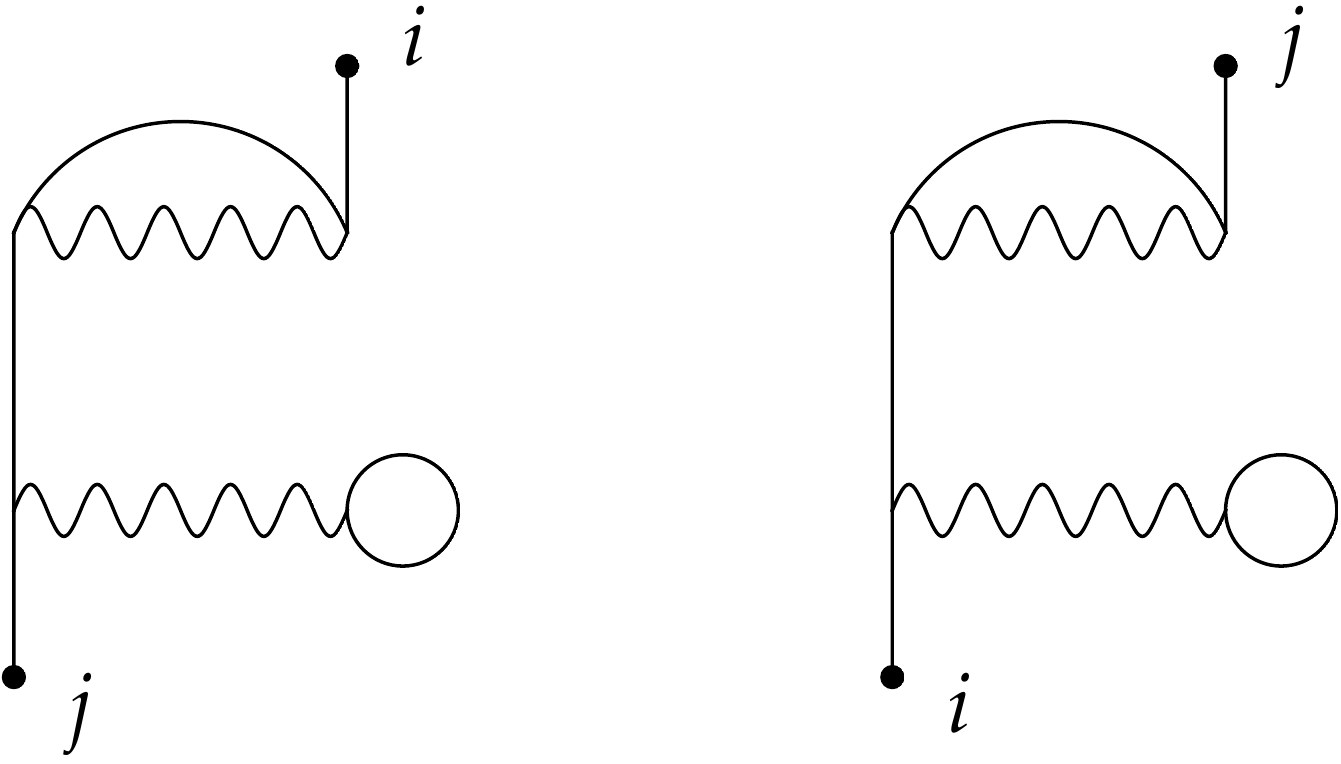}
  \end{center}
  \caption{Non-isomorphic partially labeled diagrams of order $2$ with $2$ external vertices.}
  \label{fig:partiallyLabeled}
\end{figure}

These diagrams can be additionally equipped with $\emph{internal}$ labelings $(\mathcal{V},\mathcal{H})$ to produce 
\emph{(fully) labeled Feynman diagrams of order $n$ with $2m$ external vertices}. Here $\mathcal{V},\mathcal{H}$ are 
defined as before.
More careful definitions of these classes of 
diagrams, as well as definitions of the notions of 
isomorphism for each, follow in the spirit of the analogous definitions for closed diagrams and are left to the reader.

Diagrams with external vertices will be understood to be partially labeled unless otherwise stated. The 
set of partially labeled diagrams (of any order) with $2m$ external vertices is denoted $\mc{F}_{2m}$. 
In the case $m=1$ we often refer to these diagrams as \emph{Green's function diagrams}.
Note that an unlabeled closed Feynman diagram can be viewed equivalently as 
a Feynman diagram with $0$ external vertices.

The group $\mathbf{R}_{n}$ acts naturally as before on internal labelings $(\mathcal{V},\mathcal{H})$ and induces 
a notion of automorphism for fully labeled diagrams with external vertices, as well as a symmetry factor $S_{\Gamma}$ 
defined to be the size of the automorphism group $\mathrm{Aut}(\Gamma)$ of a fully labeled diagram with external vertices 
(or, if $\Gamma$ is only partially labeled, the size of the automorphism group of any full labeling of $\Gamma$).

Moreover, each diagram with $2m$ external vertices yields a Feynman amplitude which is no longer a scalar, 
but in fact a $(2m)$-tensor, $F_{\Gamma}(p_1,\ldots,p_{2m})$ which can be computed as follows
\begin{enumerate}
  \item Assign a dummy index to each of the $4n$ internal half-edges as well as indices $p_1,\ldots,p_{2m}$ to each 
    of the external half-edges according to the labeling $\mc{E}$
    \item Each edge with half-edge indices $a,b$ yields a factor $G^{0}_{ab}$.
    \item Each interaction line with half-edge indices $a,b,c,d$ yields a factor
      $-v_{ab}\delta_{ac}\delta_{bd}$.
  \item Multiply all factors obtained via steps 2 and 3, and sum over all dummy indices from $1$ to $N$ to obtain a 
  tensor in the indices $p_1,\ldots,p_{2m}$.
  \end{enumerate}
For $\Gamma \in \mc{F}_{2}$, i.e., in the case $m=1$, we usually indicate the tensor
arguments by $i,j$ as in $F_{\Gamma}(i,j)$.

Following the same discussion in section~\ref{sec:feynmanZ}, we have
\begin{equation}
\label{eqn:Zcorr}
Z \average{ x_{p_1} \cdots x_{p_{2m}}} = Z_{0} \sum_{\Gamma\in \mf{F}_{2m}} \frac{F_{\Gamma}(p_1,\ldots,p_{2m})}{S_{\Gamma}},
\end{equation}
so in particular
\[
  Z G_{ij} = Z_{0} \sum_{\Gamma\in \mf{F}_{2}} \frac{F_{\Gamma}(i,j)}{S_{\Gamma}}.
  \label{}
\]

Denote by $\mf{F}^{\mr{c}}_{2m}\subset \mf{F}_{2m}$ the set of all diagrams
with $2m$ external vertices for which each connected component of the diagram
contains at least one external half-edge. It is easy to see that 
$\Gamma\in\mf{F}^{\mr{c}}_{2m}$ may have more than one connected component
when $m>1$.  However, when $m=1$, any diagram $\Gamma\in\mf{F}^{\mr{c}}_{2}$ has only
two external half-edges. Each internal vertex has
$4$ half-edges, and each connected component must contain an even
number of half-edges. This implies that $\Gamma$ must contain only one
connected component, so $\mf{F}^{\mr{c}}_{2}$ is in fact the subset of 
diagrams in $\mf{F}_{2}$ that are connected.

Theorem~\ref{thm:Gexpand} below shows, perhaps surprisingly, that the expansion 
for the \textit{correlator} $\average{ x_{p_1} \cdots x_{p_{2m}}}$ removes many
diagrams, and is therefore \emph{simpler} 
than the expansion of $Z \average{ x_{p_1} \cdots x_{p_{2m}}}$.
The combinatorial argument is similar in flavor to that
of the proof of Theorem~\ref{thm:logZexpand}.

\begin{theorem}[Linked cluster expansion for correlators]\label{thm:Gexpand}
  The asymptotic series expansion for $\average{ x_{p_1} \cdots x_{p_{2m}}}$, where $1\le p_1, \ldots, p_{2m} \le N$, is
  \begin{equation}
    \average{ x_{p_1} \cdots x_{p_{2m}}} = \sum_{\Gamma\in \mf{F}^{\mr{c}}_{2m}} \frac{F_{\Gamma}(p_1,\ldots,p_{2m})}{S_{\Gamma}}.
    \label{eqn:corrFeynman}
  \end{equation}
  In particular, the series for $G$ is 
  \begin{equation}
    G_{ij} = \sum_{\Gamma\in \mf{F}^{\mr{c}}_{2}} \frac{F_{\Gamma}(i,j)}{S_{\Gamma}}.
    \label{eqn:GFeynman}
  \end{equation}
\end{theorem}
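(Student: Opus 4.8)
The plan is to leverage the expansion $Z\average{x_{p_1}\cdots x_{p_{2m}}} = Z_0 \sum_{\Gamma\in\mf{F}_{2m}} \frac{F_\Gamma(p_1,\ldots,p_{2m})}{S_\Gamma}$ established just before the theorem, together with Theorem~\ref{thm:Zexpand}, and to factor the full sum over $\mf{F}_{2m}$ into a ``connected-to-external'' part and a ``vacuum'' part. The combinatorial engine is a bijection that is the exact analog of the component decomposition used in the proof of Theorem~\ref{thm:logZexpand}; indeed, the excerpt itself flags that the argument is ``similar in flavor.''

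First I would set up the decomposition. Given any partially labeled diagram $\Gamma\in\mf{F}_{2m}$, I split its connected components into those containing at least one external half-edge and those containing none. Collecting the former yields a diagram $\Gamma^{\mr{ext}}\in\mf{F}^{\mr{c}}_{2m}$ (precisely by the definition of $\mf{F}^{\mr{c}}_{2m}$), while collecting the latter yields a closed diagram $\Gamma^{\mr{vac}}\in\mf{F}_0$. I would then argue that $\Gamma\mapsto(\Gamma^{\mr{ext}},\Gamma^{\mr{vac}})$ descends to a bijection of isomorphism classes $\mf{F}_{2m}\leftrightarrow\mf{F}^{\mr{c}}_{2m}\times\mf{F}_0$, with inverse given by disjoint union.

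Next I would verify that this bijection respects both weights. The Feynman amplitude factors because it is a product over edges and interaction lines, each of which lives in a single connected component; since the external labels sit entirely within $\Gamma^{\mr{ext}}$, one gets $F_\Gamma(p_1,\ldots,p_{2m}) = F_{\Gamma^{\mr{ext}}}(p_1,\ldots,p_{2m})\cdot F_{\Gamma^{\mr{vac}}}$, where the second factor is a scalar. For the symmetry factor, the key point is that any relabeling realizing an automorphism of $\Gamma$ must fix the external labels $p_1,\ldots,p_{2m}$ and therefore cannot carry a component containing an external half-edge to one without; hence it preserves the external and vacuum parts separately and $\mathrm{Aut}(\Gamma)\cong\mathrm{Aut}(\Gamma^{\mr{ext}})\times\mathrm{Aut}(\Gamma^{\mr{vac}})$, giving $S_\Gamma = S_{\Gamma^{\mr{ext}}}\,S_{\Gamma^{\mr{vac}}}$. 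The multinomial factors arising from repeated connected vacuum components are already incorporated into $S_{\Gamma^{\mr{vac}}}$ via the formula established for closed diagrams, while no such factors arise among external components, since their distinct external labels prevent any two from being exchanged.

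With the weights factored, the sum separates as a product,
\begin{equation*}
\sum_{\Gamma\in\mf{F}_{2m}} \frac{F_\Gamma}{S_\Gamma} = \left( \sum_{\Gamma\in\mf{F}^{\mr{c}}_{2m}} \frac{F_\Gamma}{S_\Gamma} \right)\left( \sum_{\Gamma\in\mf{F}_0} \frac{F_\Gamma}{S_\Gamma} \right),
\end{equation*}
and by Theorem~\ref{thm:Zexpand} the second factor equals $Z/Z_0$. Substituting into $Z\average{x_{p_1}\cdots x_{p_{2m}}} = Z_0\sum_{\Gamma\in\mf{F}_{2m}} F_\Gamma/S_\Gamma$ yields $Z\average{x_{p_1}\cdots x_{p_{2m}}} = Z\sum_{\Gamma\in\mf{F}^{\mr{c}}_{2m}} F_\Gamma/S_\Gamma$, and dividing by $Z$ gives the claim; the case $m=1$ is then the displayed formula for $G$. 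The main obstacle is the symmetry-factor factorization: one must check carefully that automorphisms cannot mix external and vacuum components and that the induced map on isomorphism classes is genuinely well-defined and weight-preserving, since this is where the subtle bookkeeping (and the reuse of the repeated-component analysis from the closed-diagram case) resides.
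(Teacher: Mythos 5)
Your proposal is correct and follows essentially the same route as the paper's proof: the unique decomposition $\Gamma = \Gamma' \cup \Gamma''$ with $\Gamma' \in \mf{F}^{\mr{c}}_{2m}$ and $\Gamma'' \in \mf{F}_0$, the factorization $F_{\Gamma} = F_{\Gamma'}F_{\Gamma''}$ and $S_{\Gamma} = S_{\Gamma'}S_{\Gamma''}$ (justified exactly as you do, by noting automorphisms cannot exchange components with and without external half-edges), the resulting product of sums, and the identification of the vacuum factor with $Z/Z_0$ via Theorem~\ref{thm:Zexpand}.
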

\begin{proof}
  Any diagram $\Gamma\in \mf{F}_{2m}$ can be decomposed uniquely as
  $\Gamma=\Gamma' \cup \Gamma'' $, where $\Gamma' \in 
  \mf{F}^{\mr{c}}_{2m}$ and $\Gamma'' \in \mf{F}_{0}$, and we allow
  $\Gamma'' $ to be the empty diagram. Hence according to Eq.~\eqref{eqn:Zcorr}, 
  \[
  Z \average{ x_{p_1} \cdots x_{p_{2m}}}= Z_{0} \sum_{\Gamma' \in \mf{F}^{\mr{c}}_{2m}} \sum_{\Gamma''\in
  \mf{F}_{0}} \frac{F_{\Gamma' \cup \Gamma''}(p_1, \ldots, p_{2m})}{S_{\Gamma' \cup \Gamma''}}.
  \]
  Now for $\Gamma' \in 
  \mf{F}^{\mr{c}}_{2m}$ and $\Gamma'' \in \mf{F}_{0}$, 
  \[
  F_{\Gamma' \cup \Gamma''}(p_1, \ldots, p_{2m}) = F_{\Gamma'}(p_1, \ldots, p_{2m}) F_{\Gamma''}.
  \]
  Also  $\Gamma'$ and $\Gamma''$ have different numbers of external vertices, so 
   $\mathrm{Aut}(\Gamma' \cup \Gamma'') = \mathrm{Aut}(\Gamma') \times 
  \mathrm{Aut}(\Gamma'')$, and consequently
  \[
  S_{\Gamma} = S_{\Gamma'} S_{\Gamma''}.
  \]
  Hence
  \begin{eqnarray*}
  Z \average{ x_{p_1} \cdots x_{p_{2m}}} &= & Z_{0} \sum_{\Gamma' \in \mf{F}^{\mr{c}}_{2m}} \sum_{\Gamma''\in
  \mf{F}_{0}} \frac{F_{\Gamma'}(p_1, \ldots, p_{2m}) F_{\Gamma''}}{S_{\Gamma'} S_{\Gamma''}} \\
  &= & 
  Z_{0} \left( \sum_{\Gamma'\in \mf{F}^{\mr{c}}_{2m}} 
  \frac{F_{\Gamma'}(p_1,\ldots,p_{2m})}{S_{\Gamma'}}
  \right)
  \left( \sum_{\Gamma''\in \mf{F}_{0}} 
  \frac{F_{\Gamma''}}{S_{\Gamma''}}
  \right) \\
  & = & Z \sum_{\Gamma'\in \mf{F}^{\mr{c}}_{2m}} 
  \frac{F_{\Gamma'}(p_1,\ldots,p_{2m})}{S_{\Gamma'}},
  \end{eqnarray*}
  where the last equality follows from Theorem \ref{thm:Zexpand}. Dividing by $Z$ completes the proof. 
  \end{proof}

We now discuss the first few terms of the expansion for the Green's function $G$.
The zeroth order expansion for $G_{ij}$ is
$G_{ij}^{0}$.  Fig.~\ref{fig:feynmanG1} depicts the Feynman diagrams for
the first-order contribution to $G_{ij}$, which amounts to the expression
\begin{equation}
  -\frac{1}{2} \sum_{k,l} \left(v_{kl} G^{0}_{ik} G^{0}_{jk} G^{0}_{ll}\right) 
  -\sum_{k,l} \left(v_{kl} G^{0}_{ik} G^{0}_{jl} G^{0}_{kl}\right).
  \label{eqn:Gfirst}
\end{equation}
Note how the symmetry factor of these diagrams is affected by the 
labeling of the external half-edges.
\begin{figure}[h]
  \begin{center}
    \includegraphics[width=0.6\textwidth]{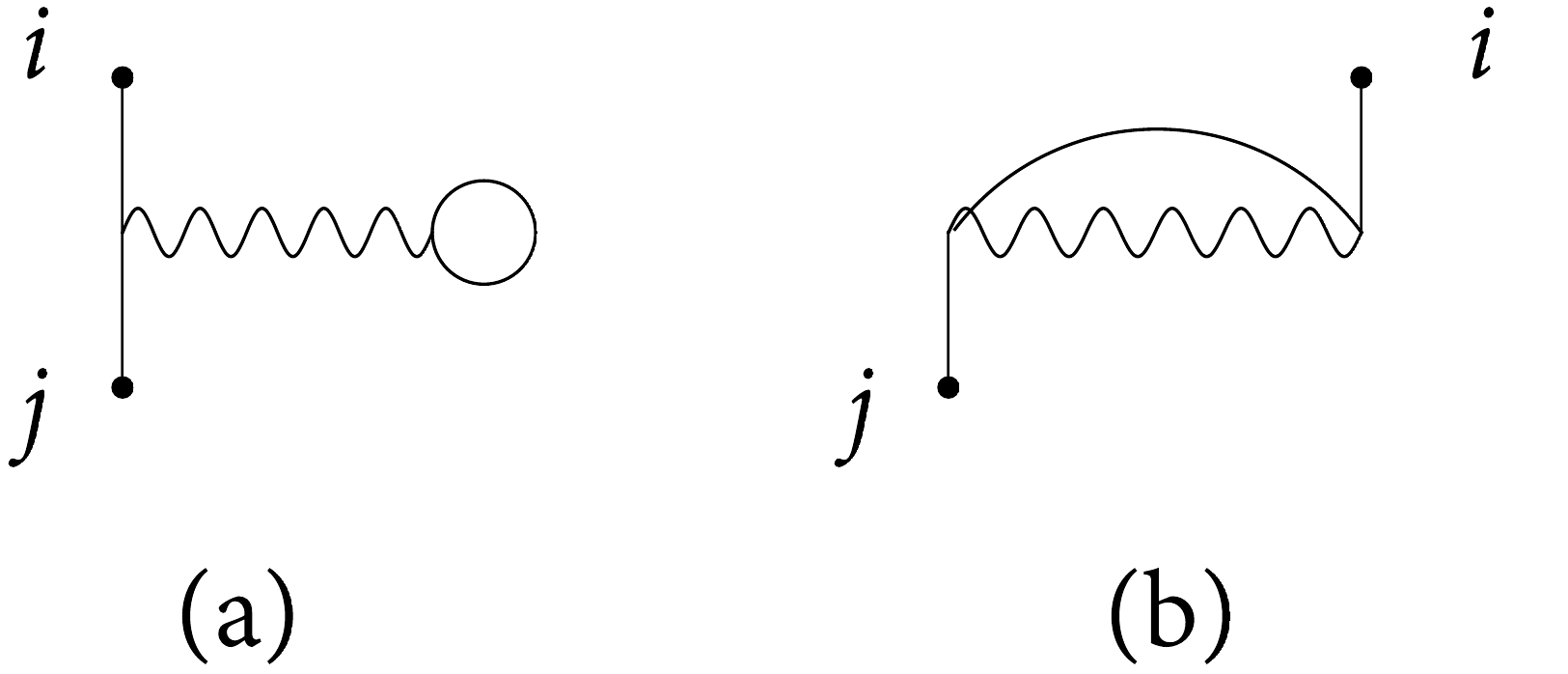}
  \end{center}
  \caption{First-order expansion for $G_{ij}$.}
  \label{fig:feynmanG1}
\end{figure}

Fig.~\ref{fig:feynmanG2}
depicts the Feynman diagrams for the second-order contribution to
$G_{ij}$. These diagrams can be systematically obtained from the 
free energy diagrams of Fig.~\ref{fig:feynmanLink2} (b) and (c) by
cutting a propagator line to yield two external half-edges and then 
listing the non-isomorphic ways of labeling of these external half-edges. Note that
all terms contain only one connected component due to
Theorem~\ref{thm:Gexpand}. For
simplicity we omit the resulting formula for the second-order contribution to
$G_{ij}$.

\begin{figure}[h]
  \begin{center}
    \includegraphics[width=0.9\textwidth]{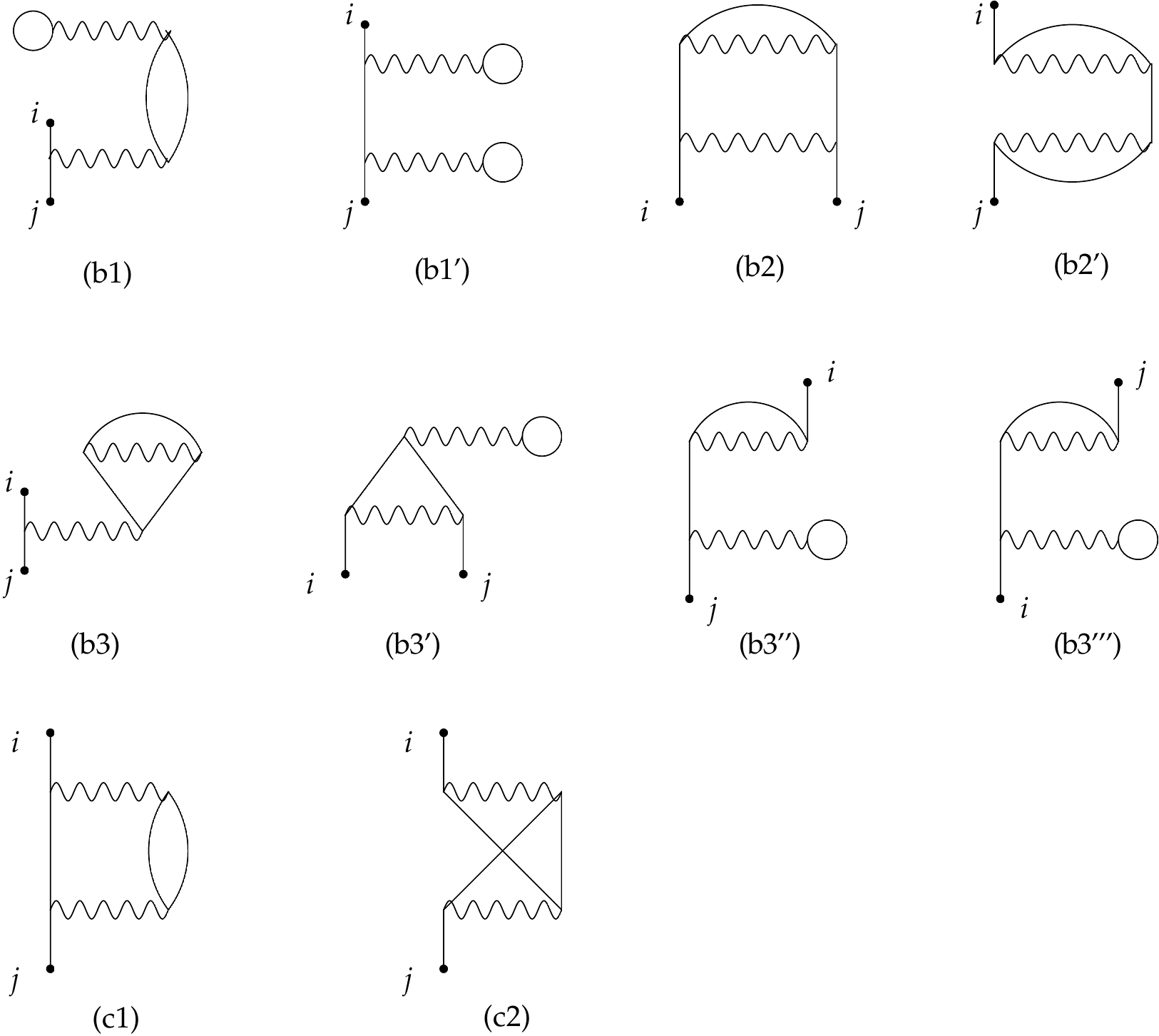}
  \end{center}
  \caption{Second-order expansion for $G_{ij}$. The lettering is obtained from that of the free energy
  diagrams in Fig.~\ref{fig:feynmanLink2} (b),(c), from which the Green's function diagrams may 
  be obtained by cutting lines.}
  \label{fig:feynmanG2}
\end{figure}

\subsection{Why we do not use fully unlabeled Green's function diagrams}
Why not reduce the redundancy of diagrams by considering a notion of fully unlabeled 
Green's function diagrams? One reason is that the notion of symmetry factor would be different, 
yielding an unpleasant extra factor in Theorem \ref{thm:Gexpand}. 
Moreover, in the development of the bold diagrammatic expansion of section \ref{sec:feynmanBold}, 
we will consider an operation in which propagator lines are replaced by Green's function diagrams. 
Since different orientations of such an `insertion' might yield different topologies of the resulting 
diagram, it is good to keep track of non-isomorphic external labelings separately.

Finally, by 
retaining an external labeling, there is a clearer interpretation of each diagram as a matrix 
yielded by contracting out internal indices. Note carefully, however, that the Feynman 
amplitude of a non-symmetric 
diagram, i.e., a diagram whose isomorphism class is changed by a relabeling of the 
external vertices, is in general a non-symmetric matrix. By contrast, $G$ is symmetric. 
Therefore any reasonable truncation of the expansion of Theorem \ref{thm:Gexpand} 
should not include any non-symmetric diagram without also including all diagrams 
obtained by different external labelings.

\subsection{Feynman rules for $\Sigma$}\label{sec:feynmanSigma}
The computation of $G$ by diagrammatic methods can be further simplified via 
the introduction of the notion of the \emph{self-energy}. This notion can be 
motivated diagrammatically as follows. Observe that diagrams such as (b1'), (b2'), 
and (b3'') in Fig.~\ref{fig:feynmanG2} are `redundant' in that they can be 
constructed by `stitching' first-order diagrams together at external vertices.
Such diagrams will be removed in 
the diagrammatic expansion for the self-energy matrix $\Sigma$, defined
as the difference between the inverse of $G$ and that of $G^{0}$ as
\begin{equation}
  \Sigma = \left(G^{0}\right)^{-1} - G^{-1}.
  \label{eqn:Sigmadef}
\end{equation}
Observe that once $\Sigma$ is known, $G$ can be computed simply via Eq.~\eqref{eqn:Sigmadef}.

However, Eq.~\eqref{eqn:Sigmadef} does not clarify the diagrammatic motivation for the 
self-energy. Note that the definition of the self-energy matrix in
Eq.~\eqref{eqn:Sigmadef} is equivalent to
\begin{equation}
  G = G^{0} + G^{0}\Sigma G,
  \label{eqn:dyson}
\end{equation}
which is called the \textit{Dyson equation}. By plugging the formula for $G$ specified by 
the Dyson equation back into the right-hand side of Eq.~\eqref{eqn:dyson} and then 
repeating this procedure \textit{ad infinitum}, one obtains the formal equation 
\begin{equation}
  G = G^{0} + G^{0} \Sigma G^{0} + G^{0} \Sigma G^{0} \Sigma
  G^{0} + \cdots,
  \label{eqn:dysonFormal}
\end{equation}
which suggests a diagrammatic interpretation for $\Sigma$. To wit, in order to avoid 
counting the same Green's function diagram twice in the right-hand side of 
Eq.~\eqref{eqn:dysonFormal}, $\Sigma$ should only
include those diagrams that cannot be separated into two disconnected
components when removing one bare propagator line.
In physics terminology, these are called the 
\textit{one-particle irreducible} (1PI) diagrams.

\begin{figure}[h]
  \begin{center}
    \includegraphics[width=0.65\textwidth]{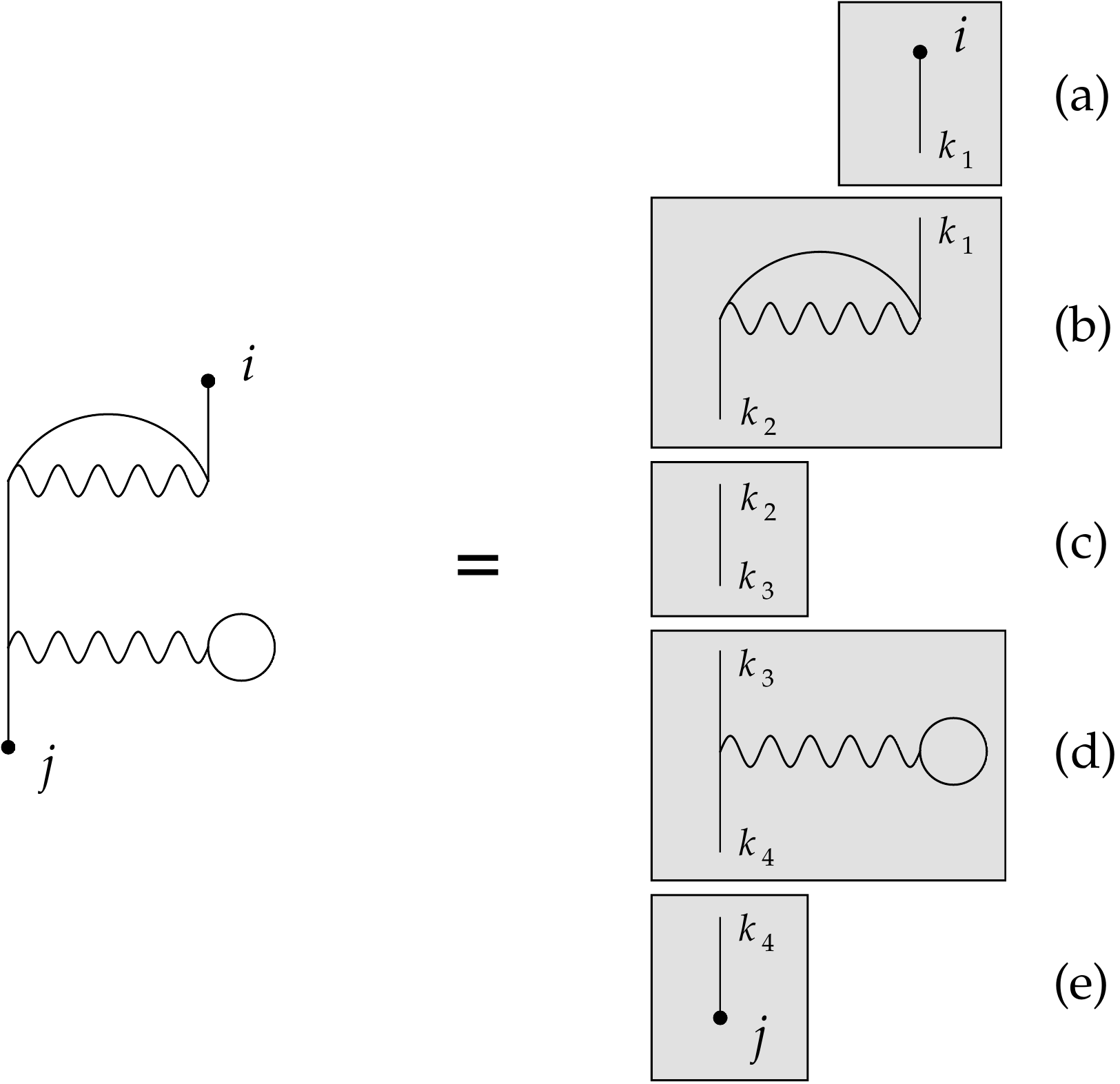}
    \vspace{2 mm}
  \end{center}
  \caption{Decomposing a Green's function diagram into truncated 1PI diagrams 
  and bare propagators.}
  \label{fig:stitching}
\end{figure}

We must be careful about what exactly is meant by such a diagram. 
We want to be able to produce Green's function diagrams by stitching together 
1PI diagrams via a bare propagator line $G^0$, as depicted in Fig.~\ref{fig:stitching}. 
In order to avoid double-counting the propagators at each `stitch,' our self-energy 
diagrams should \emph{not} include a contribution from the propagator where the 
stitch is made. In the example shown in Fig.~\ref{fig:stitching}, we write the matrix 
represented by the diagram on the left-hand side as a product (a)(b)(c)(d)(e) of 
matrices represented by the diagrams on the right-hand side. Here (a), (c), and (e) 
simply represent the propagator $G^0$. Diagrams (b) and (d) are the self-energy diagrams
representing the matrices with $(k_1,k_2)$ entry given by $ v_{k_1 k_2} G^{0}_{k_1 k_2}$ and 
$(k_3, k_4)$ entry given by $\delta_{k_3 k_4} v_{k_3 k_3} G^0_{k_3 k_3}$, respectively.
Since these diagrams are like 
Green's function diagrams, except missing the external propagator contributions, we 
refer to them as \emph{truncated Green's function diagrams}.

\begin{definition}
\label{def:truncatedDiagram}
A truncated Green's function diagram $\Gamma$ is obtained from a Green's 
function diagram $\Gamma'$. The internal half-edges of $\Gamma'$ paired with the 
external half-edges of $\Gamma'$ labeled $i$ and $j$ are referred to as the 
external half-edges of $\Gamma$ and are labeled $i$ and $j$,
respectively. The 1PI diagrams 
are the truncated Green's function diagrams that cannot be disconnected by the removal 
of a single bare propagator line. The set of all truncated Green's function
diagrams is denoted by $\mf{F}^{\mr{c,t}}_{2}$, and the set 
of all 1PI diagrams is denoted by $\mf{F}^{\mr{1PI}}_{2}$. 
The diagrams in 
$\mf{F}^{\mr{1PI}}_{2}$ are alternatively referred to as self-energy diagrams.
\end{definition}

Analogously one can define $\mf{F}^{\mr{c,t}}_{2m}$ and
$\mf{F}^{\mr{1PI}}_{2m}$ for $m>1$, 
but we will not make use of such notions.

As a data structure, a truncated Green's function diagram is really equivalent 
to its `parent' Green's function diagram, but the interpretation is different, and 
we visually distinguish the truncated diagrams from their counterparts by removing the 
dot at the external vertex. In addition, a truncated Green's function diagram has 
a different notion of (matrix-valued) Feynman amplitude $F_{\Gamma}(i,j)$, computed as follows:
\begin{enumerate}
    \item Assign a dummy index to each of the $4n-2$ internal half-edges as well as indices $i,j$ to each 
    of the external half-edges according to the labeling furnished by Definition \ref{def:truncatedDiagram}.
    \item Each \emph{internal} edge with half-edge indices $a,b$ yields a factor $G^{0}_{ab}$.
    \item Each interaction line with half-edge indices $a,b,c,d$ yields a factor
      $-v_{ab}\delta_{ac}\delta_{bd}$.
  \item Multiply all factors obtained via steps 2 and 3, and sum over all dummy indices from $1$ to $N$ to obtain a 
  matrix in the indices $i,j$.
\end{enumerate}
However, the symmetry factor $S_{\Gamma}$ of a truncated Green's function diagram 
is unchanged from that of the underlying Green's function diagram.

We may further introduce the concept of 
two-particle irreducible (2PI) Green's function diagrams as the subset of
diagrams in $\mf{F}_{2}^{\mathrm{1PI}}$ that cannot be disconnected by 
the removal of any two edges. The set of all such diagrams is
denoted by $\mf{F}_{2}^{\mathrm{2PI}}$. The 2PI diagrams will be used
to define the bold diagrams in section~\ref{sec:feynmanBold}.

The first-order self-energy diagrams are depicted in Fig.~\ref{fig:feynmanSigma1}. The only
difference from Fig.~\ref{fig:feynmanG1} is that the external vertices
are removed to produce truncated diagrams. The second-order self-energy 
diagrams are shown in Fig.~\ref{fig:feynmanSigma1}. Note that the Green's function 
diagrams $(\mr{b1}')$, $(\mr{b2}')$,
$(\mr{b3}'')$, and $(\mr{b3}''')$ in Fig.~\ref{fig:feynmanG2}---after removing the external vertices to 
yield self-energy diagrams---are not 1PI, hence not self-energy diagrams.

\begin{figure}[h]
  \begin{center}
    \includegraphics[width=0.5\textwidth]{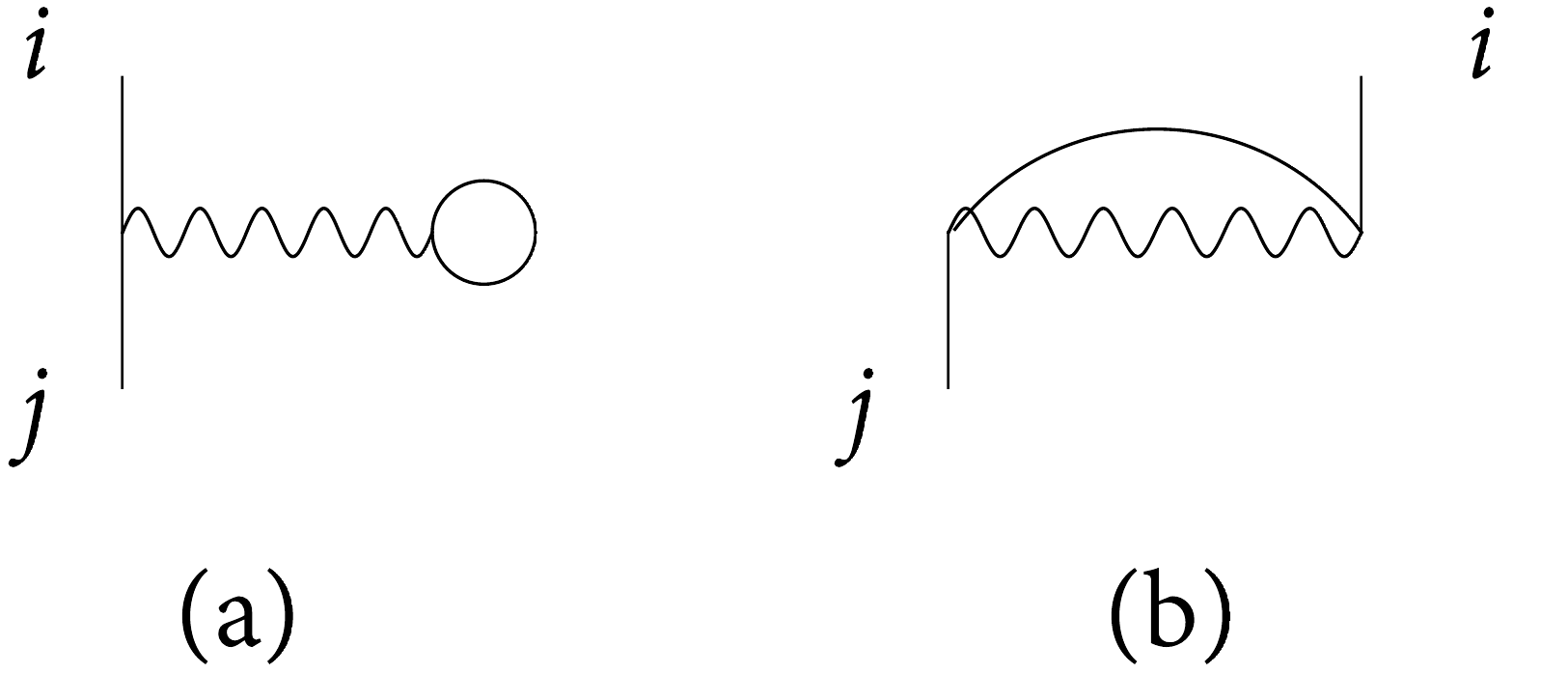}
  \end{center}
  \caption{First-order diagrams for $\Sigma_{ij}$.}
  \label{fig:feynmanSigma1}
\end{figure}

\begin{figure}[h]
  \begin{center}
    \includegraphics[width=0.5\textwidth]{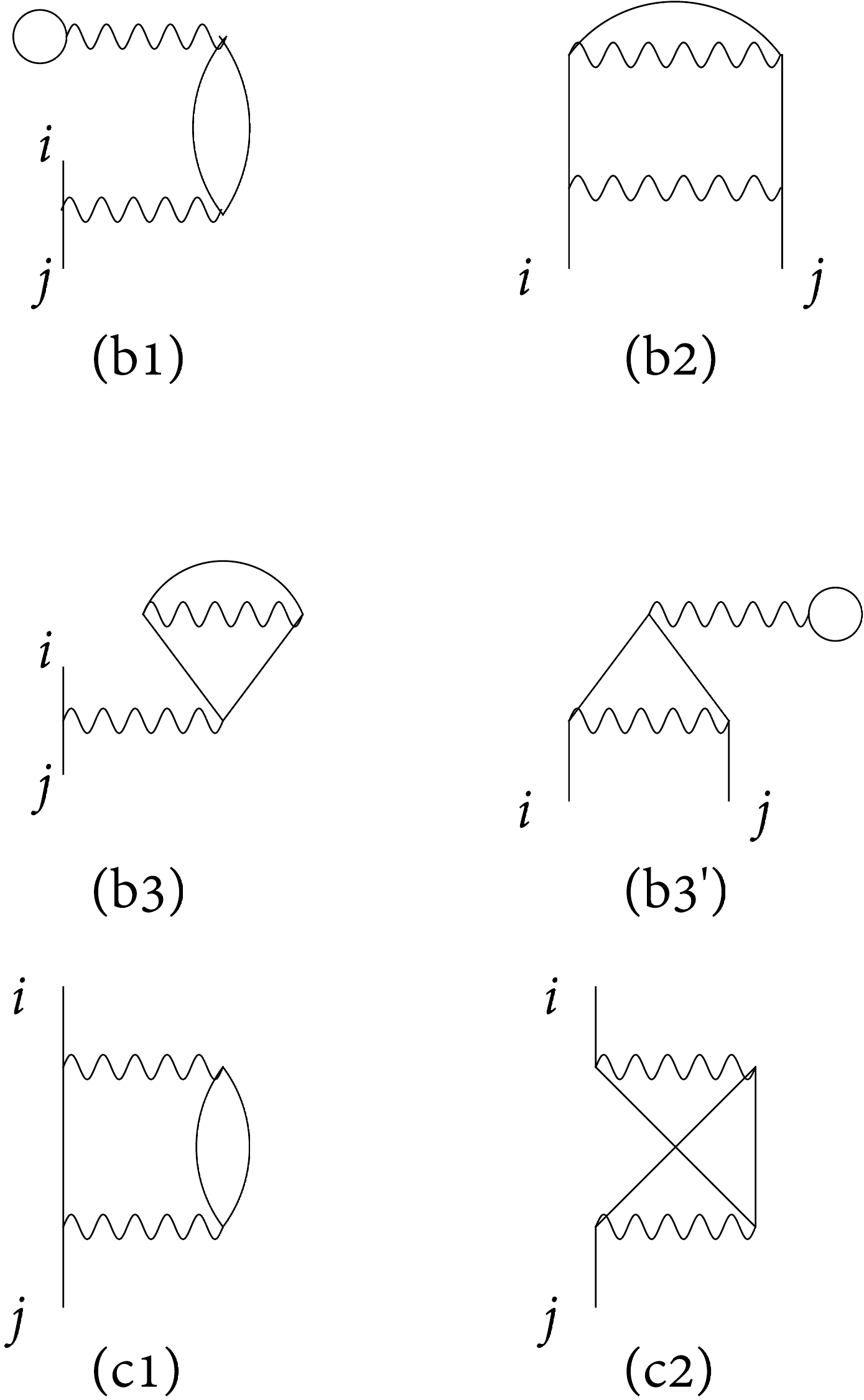}
  \end{center}
  \caption{Second-order diagrams for $\Sigma_{ij}$. The labels correspond
  to those of the `parent' Green's function diagrams in Fig.~\ref{fig:feynmanG2}.}
  \label{fig:feynmanSigma2}
\end{figure}

\begin{theorem}\label{thm:Sigmaexpand}
  The asymptotic series expansion for $\Sigma_{ij}$, where $1\le i,j \le N$, is
  \begin{equation}
    \Sigma_{ij} = \sum_{\Gamma\in \mf{F}^{\mr{1PI}}_{2}} \frac{F_{\Gamma}(i,j)}{S_{\Gamma}}.
    \label{eqn:SigmaFeynman}
  \end{equation}
\end{theorem}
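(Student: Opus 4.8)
The plan is to identify the matrix-valued formal power series $\tilde\Sigma := \sum_{\Gamma \in \mf{F}^{\mr{1PI}}_2} F_\Gamma/S_\Gamma$ (whose $(i,j)$ entry is the claimed right-hand side) with the self-energy $\Sigma = (G^0)^{-1} - G^{-1}$ of Eq.~\eqref{eqn:Sigmadef}. Since $G^0$ and $G$ are invertible, it suffices to establish the resummed Dyson series $G = \sum_{k\ge 0} G^0 (\tilde\Sigma G^0)^k$, i.e.\ that feeding $\tilde\Sigma$ into Eq.~\eqref{eqn:dysonFormal} reproduces the Green's function expansion of Theorem~\ref{thm:Gexpand}. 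Everything is to be read order-by-order in the coupling, where only finitely many terms contribute at each order, so the geometric resummation is legitimate as an identity of formal power series.

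The combinatorial heart is a unique skeleton decomposition (Proposition~\ref{prop:selfenergycompose}). First I would truncate each connected Green's function diagram $\Gamma \in \mf{F}^{\mr{c}}_2$ to an element of $\mf{F}^{\mr{c,t}}_2$ and locate its \emph{bridges} --- the bare propagator lines whose removal disconnects the external half-edges $i$ and $j$. Since the truncated diagram is connected with exactly two external half-edges, the bridges are linearly ordered along the unique $i$-to-$j$ route, and cutting at all of them exhibits $\Gamma$ uniquely as a chain $\Sigma_1, G^0, \Sigma_2, \ldots, G^0, \Sigma_k$ capped on both ends by external propagators, where each $\Sigma_l \in \mf{F}^{\mr{1PI}}_2$ is a maximal bridgeless (hence $1$PI) block. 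Conversely any such chain stitches back to a unique element of $\mf{F}^{\mr{c}}_2$, so stitching is a bijection; the $k=0$ case is the bare propagator $G^0$ itself.

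Given the decomposition I would check the two bookkeeping identities. For amplitudes, summing the shared dummy index across each bridge realizes matrix multiplication, so $F_\Gamma(i,j) = [G^0 F_{\Sigma_1} G^0 \cdots F_{\Sigma_k} G^0]_{ij}$. For symmetry factors I claim $S_\Gamma = \prod_{l=1}^k S_{\Sigma_l}$, with \emph{no} extra factorial even when blocks coincide: the labels $i,j$ are fixed and distinct, so the chain is rigidly oriented and no automorphism may permute or reverse the blocks; hence $\mathrm{Aut}(\Gamma) \cong \prod_l \mathrm{Aut}(\Sigma_l)$, exactly as the external-vertex asymmetry was exploited in the proof of Theorem~\ref{thm:Gexpand}. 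The contrast with the \emph{unordered} disjoint unions of the linked-cluster proof (which genuinely produced $n_\Gamma!$ factors) is the point to emphasize here.

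Summing over $\mf{F}^{\mr{c}}_2$ and reorganizing by the number $k$ of $1$PI blocks then gives, via Theorem~\ref{thm:Gexpand},
\[
G_{ij} = \sum_{k=0}^\infty \ \sum_{\Sigma_1,\ldots,\Sigma_k \in \mf{F}^{\mr{1PI}}_2} \frac{\bigl[G^0 F_{\Sigma_1} G^0 \cdots F_{\Sigma_k} G^0\bigr]_{ij}}{\prod_{l} S_{\Sigma_l}} = \sum_{k=0}^\infty \left[ G^0 (\tilde\Sigma G^0)^k \right]_{ij},
\]
whence $G = G^0 + G^0 \tilde\Sigma G$ and therefore $\tilde\Sigma = (G^0)^{-1} - G^{-1} = \Sigma$, which is Eq.~\eqref{eqn:SigmaFeynman}. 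The main obstacle is the second step --- proving the skeleton decomposition rigorously in the half-edge formalism and verifying that stitching is a genuine bijection that respects symmetry factors; the amplitude factorization and the final resummation are routine once that is in place.
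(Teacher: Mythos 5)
Your proposal is correct and follows essentially the same route as the paper: both identify the diagrammatic series with $\Sigma$ by verifying the Dyson equation as an identity of formal power series (invoking uniqueness of inverses in that ring), via a unique decomposition of each connected Green's function diagram into 1PI blocks joined by bare propagators, with the factorizations $F_\Gamma = [G^0 F_{\Sigma_1} G^0 \cdots]$ and $S_\Gamma = \prod_l S_{\Sigma_l}$ justified by the rigidity imposed by the distinguished external labels $i,j$. The only cosmetic difference is that you cut at all bridges at once to obtain the fully resummed chain $G = \sum_{k\ge 0} G^0 (\tilde\Sigma G^0)^k$, whereas the paper peels off only the first 1PI block (propagator, 1PI piece, remaining Green's function diagram) and so lands on $G = G^0 + G^0 \Sigma G$ directly.
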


\begin{proof}
One can think of the Dyson equation \eqref{eqn:dyson} 
as an equation of formal power series, where $G$ and $\Sigma$ indicate the asymptotic 
series expansions of $G$ and $\Sigma$. (Recall that we may think of our diagrammatic 
expansions as power series in a parameter $\lambda$ that scales the interaction strength. 
It is not hard to see directly from the definition of $\Sigma$, as for all other quantities we 
consider, that an asymptotic series in $\lambda$ exists in the first place.) Now the series 
for $G$ is known from Theorem \ref{thm:Gexpand}, and we claim that the series for 
$\Sigma$ is the unique formal power series satisfying Eq.~\eqref{eqn:dyson} as an equation 
of formal power series. Indeed, if some power series $\Sigma$ satisfies 
Eq.~\eqref{eqn:dyson}, then $\Sigma$ satisfies Eq.~\eqref{eqn:Sigmadef} as well 
(as an equation of formal power series), but inverses are unique in the ring of formal 
power series, so $\Sigma$ satisfying Eq.~\eqref{eqn:Sigmadef} is uniquely determined.

Thus all we need to show is that Eq.~\eqref{eqn:dyson} holds when we plug in the series 
for $\Sigma$ from
Eq.~\eqref{eqn:SigmaFeynman}. To this end, write, via Theorem \ref{thm:Gexpand}, 
\begin{equation}
\label{eqn:GexpandRepeat}
G = \sum_{\Gamma\in \mf{F}^{\mr{c}}_{2}} \frac{F_{\Gamma}}{S_{\Gamma}},
\end{equation}
where $F_{\Gamma}$ appearing in the summand is a matrix.
Now every $\Gamma \in  \mf{F}^{\mr{c}}_{2}$ that is of order greater than $1$ 
can be decomposed uniquely into a bare propagator line at the external half-edge labeled $i$,  
a self-energy diagram $\Gamma'$ connected to this propagator line at one external half-edge, 
and another Green's function diagram $\Gamma''$ connected to $\Gamma'$ at its other 
external half-edge. (This fact should be clear graphically, though 
a more careful proof is left to the reader.) For example, in Fig.~\ref{fig:stitching}, $\Gamma'$ 
corresponds to (b) and $\Gamma''$ corresponds to (c)(d)(e).

Moreover, we have the equality 
(of matrices) 
\[
F_{\Gamma} = G^0 F_{\Gamma'} F_{\Gamma''}.
\]
Also, due to the fact that $\Gamma$ distinguishes the labels $i,j$, we have that
$S_{\Gamma} = S_{\Gamma'} S_{\Gamma''}$. Indeed, any automorphism of (a fully labeled version of) 
$\Gamma$ must fix the label $i$ and $j$ of the external half-edges, as well as the labels of the internal half-edges 
connected directly to them. Then such an automorphism can only permute labels within the component 
$\Gamma'$; otherwise, the automorphism would induce a graph automorphism of $\Gamma'$ with 
another subgraph of $\Gamma$ containing the external half-edge labeled by $i$ as well as some half-edge 
in $\Gamma''$, which would consequently fail to be one-particle irreducible, contradicting the one-particle 
irreducibility of $\Gamma'$.

Thus from Eq.~\eqref{eqn:GexpandRepeat} we obtain the equality of power series
\[
G = G^0 + G^0 \sum_{\Gamma'\in \mf{F}^{\mr{1PI}}_{2}} 
\sum_{\Gamma''\in \mf{F}^{\mr{c}}_{2}} 
\frac{F_{\Gamma'}}{S_{\Gamma'}} \frac{F_{\Gamma''}}{S_{\Gamma''}} = G^0 + G^0 \left[\sum_{\Gamma'\in \mf{F}^{\mr{1PI}}_{2}} 
\frac{F_{\Gamma'}}{S_{\Gamma'}} \right] G,
\]
as was to be shown.
\end{proof}

\section{Bold diagrams}\label{sec:feynmanBold}

It turns out further redundancy can be removed from the diagrammatic series 
for the self-energy by consideration of the so-called bold diagrams. 
Note that so far all diagrammatic series are defined
using the non-interacting Green's function $G^{0}$ (alternatively the bare propagator),
which can be viewed as the non-interacting 
counterpart to the \emph{interacting} Green's function $G$ (alternatively the 
`dressed' or `renormalized' propagator). What if
we replace all of the $G^{0}$ in our self-energy expansion by $G$? 
Accordingly let us introduce the convention of
a \emph{doubled line} (also called a \emph{bold line}) to denote $G$.
After replacing all thin lines by bold lines in a diagram, the
resulting diagram is called a \textit{bold diagram}. (Topologically the diagram is 
not altered by this procedure, but the interpretation and Feynman amplitude, 
as well as our visual representation of the diagram, are changed.)
A bold diagram can be
understood as a shorthand for an infinite sum of bare diagrams by swapping each  
bold line out for the bare diagrammatic expansion of $G$. 
An example of a bold diagram and its representation as a sum of bare diagrams is 
provided in Fig.~\ref{fig:feynmanSigmaG1}. Note that this representation is 
considered as an equality \emph{only at the level of formal power series}.

\begin{figure}[h]
  \begin{center}
    \vspace{2mm}
    \includegraphics[width=0.95\textwidth]{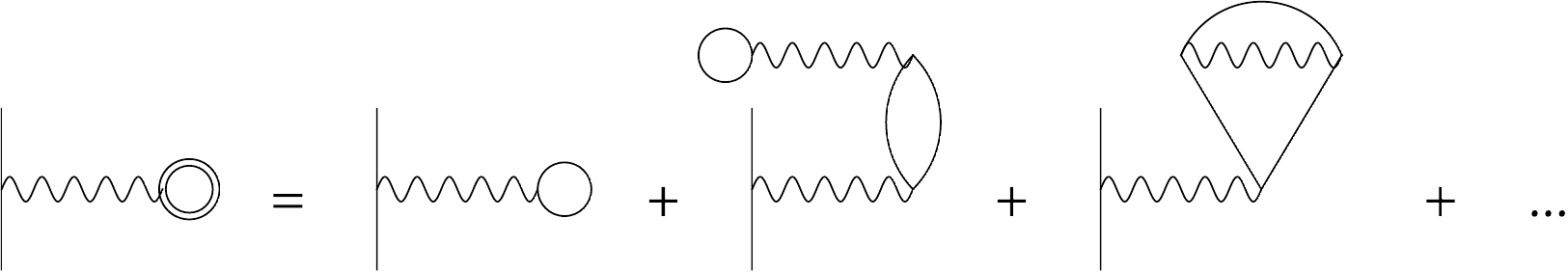}
    \vspace{4mm}
  \end{center}
  \caption{A bold self-energy diagram (the dumbbell, or Hartree, diagram), together with
  its expansion as a series of bare diagrams. Here we omit the labels $i,j$ in the 
  self-energy diagram, leaving a dangling half-edge (without a dot) to indicate
  their existence.}
  \label{fig:feynmanSigmaG1}
\end{figure}

If one were to replace all self-energy diagrams by their bold counterparts, the 
resulting bold diagram expansion would overcount many of the original bare self-energy 
diagrams. Indeed, notice that the second and third terms on the right-hand side of 
Fig.~\ref{fig:feynmanSigmaG1} account for the bare self-energy diagrams (b1) and (b3) 
of Fig.~\ref{fig:feynmanSigma2}. The bold versions of (b1) and (b3) would also count 
these terms, so as a result these contributions would be double-counted. Therefore if we 
can concoct a successful bold diagrammatic expansion for the self-energy, it should 
involve `fewer' diagrams than the bare expansion. From a certain perspective, passage 
to the bold diagrams can then be thought of as
a means to further economize on diagrammatic bookkeeping.

Which self-energy diagrams should be left out of the bold expansion? Notice that the 
disqualifying feature of diagrams (b1) and (b3) of Fig.~\ref{fig:feynmanSigma2} is that 
they contain Green's function diagram insertions---for short, simply 
\emph{Green's function insertions} or even \emph{insertions} when the context is clear.
In other words, we can disconnect each of these diagrams 
into two separate diagrams by cutting two propagator lines. The resulting component 
\emph{not} containing the external half-edges of the original diagram is itself a 
Green's function diagram with external half-edges at the cut locations. (For now 
we are being a bit casual about the distinction 
 between truncated and non-truncated diagrams because there is essentially no 
 topological difference.) In the
component that does contain 
the external half-edges of the original diagram, the two half-edges that have been left 
dangling due to the cuts can be sewn together 
with a bold line to yield the `parent' bold diagram. The insertion procedure yielding 
diagram (b3) Fig.~\ref{fig:feynmanSigma2} is 
depicted in Fig.~\ref{fig:SigmaInsertion}.

In general a bare self-energy diagram 
may contain many such Green's function insertions, possibly viewed as being nested 
within one another. 
However, it will soon pay to introduce a notion of a maximal Green's function insertion, 
or \emph{maximal insertion} for short. This is a Green's function insertion that is not contained within 
any other insertion. Then we will find that any bare self-energy diagram can be 
represented uniquely via its set of maximal insertions.

\begin{figure}[h]
  \begin{center}
    \includegraphics[width=0.75\textwidth]{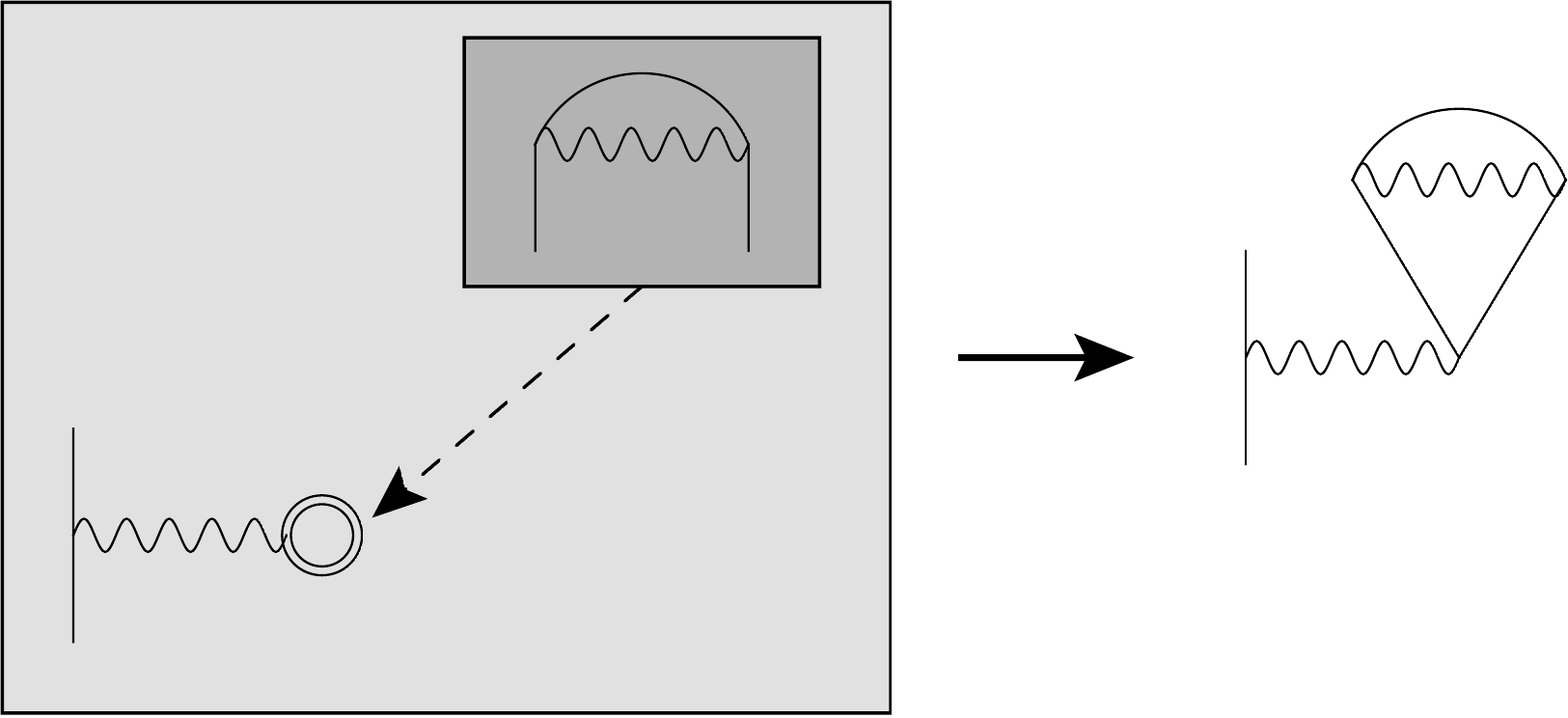}
  \end{center}
  \caption{Green's function insertion yielding diagram (b3) of Fig.~\ref{fig:feynmanSigma2}.}
  \label{fig:SigmaInsertion}
\end{figure}

Note that a diagram admits a Green's function insertion if and only if it 
can be disconnected by removing two propagator lines. Then the candidates 
for the bold self-energy diagrams are the self-energy 
diagrams with Green's function insertions; namely, the 2PI self-energy 
diagrams introduced earlier, though now considered with bold lines. 
This set will be denoted $\mf{F}_{2}^{\mathrm{2PI}}$ as before. We distinguish 
diagrams a bold via the notation for the Feynman amplitude as $\mathbf{F}_{\Gamma}$, 
as opposed to $F_{\Gamma}$.
We call diagrams in $\mf{F}_{2}^{\mathrm{2PI}}$ \textit{skeleton
diagrams} and diagrams in $\mf{F}_{2}^{\mathrm{1PI}}\backslash
\mf{F}_{2}^{\mathrm{2PI}}$ \emph{non-skeleton self-energy diagrams}, 
or simply \emph{non-skeleton diagrams} for short.

The idea of the bold diagrammatic expansion is to write 
\begin{equation}
\label{eqn:bold}
\Sigma_{ij} = \sum_{\Gamma_{\mr{s}} \in \mf{F}_2^{\mr{2PI}}} \frac{\mathbf{F}_{\Gamma_{\mr{s}}}(i,j)}{S_{\Gamma_{\mr{s}}}}. 
\end{equation} This equation must be interpreted rather carefully to
yield a rigorous statement. 
However, formally speaking for now, note that in order for the bold diagram expansion~\eqref{eqn:bold} to match 
the bare diagram expansion~\eqref{eqn:SigmaFeynman} for the self-energy, $S_{\Gamma_{s}}^{-1}$ should be the 
right guess of the pre-factor for the diagram $\Gamma_{s}$ in~\eqref{eqn:bold}. 
Indeed, if we formally substitute the bare expansion for the Green's function in for each bold propagator line  
of a bold diagram $\Gamma_s$, then the first term in the 
resulting expansion for the bold Feynman amplitude of $\Gamma_s$ will be the 
Feynman amplitude of $\Gamma_s$ interpreted as a \emph{bare} diagram, which should indeed be counted with the 
pre-factor $S_{\Gamma_{s}}^{-1}$ as in the bare expansion 
for the self-energy.
But then we have to establish that the rest of the bare
self-energy diagrams (i.e., those with non-skeleton topology) are counted with the appropriate pre-factors.
This turns out to be non-trivial and constitutes the major task of this section.
Our efforts culminate in Theorem \ref{thm:boldExpand} of section
\ref{sec:finallyBold} below, in which we give precise meaning to and
prove Eq.~\eqref{eqn:bold}.  We recommend readers to skip to
section~\ref{sec:finallyBold} for the applications of the bold
diagrammatic expansion first and then to return to the intervening details later.

\subsection{Skeleton decomposition}
Our first goal is to show that every self-energy diagram can be decomposed 
(uniquely, in some sense) as a skeleton diagram with Green's function insertions. 
We now turn to defining the notion of insertion more carefully.

\begin{definition}
Given a truncated Green's function diagram 
$\Gamma$, together with a half-edge pair $\{h_1,h_2\}$ in $\Gamma$\footnote{So 
$\{h_1,h_2\}$ is contained in the pairing $\Pi_{\Gamma}$ of half-edges associated 
with $\Gamma$.}
and another truncated Green's function diagram $\Gamma'$, 
the insertion of $\Gamma'$ into $\Gamma$ at $(h_1,h_2)$, denoted $\Gamma \oplus_{(h_1,h_2)} \Gamma'$ is defined to be the truncated 
Green's function diagram constructed by taking the collection of all vertices and half-edges (along with their 
pairings) from $\Gamma$ and $\Gamma'$, then defining a new half-edge pairing by 
removing $\{h_1,h_2\}$ and adding $\{h_1,e_1\}$ and $\{h_2,e_2\}$, where $e_1$ and $e_2$ are 
the external half-edges of $\Gamma'$ 
labeled $i$ and $j$, respectively.
\end{definition}

Notice that the ordering of $(h_1,h_2)$ in $\Gamma \oplus_{(h_1,h_2)} \Gamma'$ 
matters in this definition because it determines the orientation of the inserted diagram. Here the definition has 
also made use of the fact that 
truncated Green's function diagrams distinguish their external half-edges via the labels $i$ and $j$.

We can define a simultaneous insertion of truncated Green's function
diagrams $\Gamma^{(1)},\ldots,\Gamma^{(K)}$ along several edges of a
diagram, as follows:
\begin{definition}
  Let $\Gamma$ be a truncated Green's function diagram, and consider a collection of distinct
  half-edge pairs $\left\{h_1^{(k)},h_2^{(k)}\right\}$ for $k=1,\ldots,K$. 
  Let $\Gamma_0 = \Gamma$ and recursively define 
  $\Gamma_{k+1} := \Gamma_k \oplus_{(h_1^{(k+1)},h_2^{(k+1)})} \Gamma^{(k+1)}$ 
  for $k=0,\ldots,K-1$.
  Then the resulting $\Gamma_K$ is the insertion of
  $\Gamma^{(1)},\ldots,\Gamma^{(K)}$ into $\Gamma$ along 
  $(h_1^{(1)},h_2^{(1)},\ldots,h_1^{(K)},h_2^{(K)})$, denoted
  \[
  \Gamma \oplus_{(h_1^{(1)},h_2^{(1)},\ldots,h_1^{(K)},h_2^{(K)})} \left[\Gamma^{(1)},\ldots,\Gamma^{(K)} \right].
  \]
\end{definition}

Notice that the simultaneous insertion does not depend on the ordering of the $k$ half-edge pairs, though it does 
depend in general on the ordering of the half-edges within each pair.

\begin{definition}
  We say that a truncated Green's function diagram $\Gamma$ admits an insertion $\Gamma''$ at $(h_1,h_2)$ if it can be written as 
  $\Gamma' \oplus_{(h_1,h_2)} \Gamma''$, where $\{h_1,h_2\}$ is a pair in $\Gamma'$ and 
  $\Gamma''$ is a nonempty truncated Green's function diagram. (Note that $\Gamma$ 
  admits such an insertion if and only if $\Gamma$ can be disconnected by removing the half-edges $h_1$ and $h_2$.)
  We say that this 
  insertion is maximal if $\Gamma'$ does not in turn admit an insertion containing either of the half-edges 
  $h_1,h_2$.
\end{definition}

For example, consider in self-energy diagram of Fig.~\ref{fig:maxInsertions}, which 
admits two maximal insertions, shown in blue and red, respectively. (Note that each of the 
maximal insertions admits insertions itself, i.e., the overall diagram admits several 
insertions that are not maximal.) The remaining half-edges and interaction lines in 
the diagram (shown in black) form the `skeleton' of the diagram.

\begin{figure}[h]
  \begin{center}
    \includegraphics[width=0.6\textwidth]{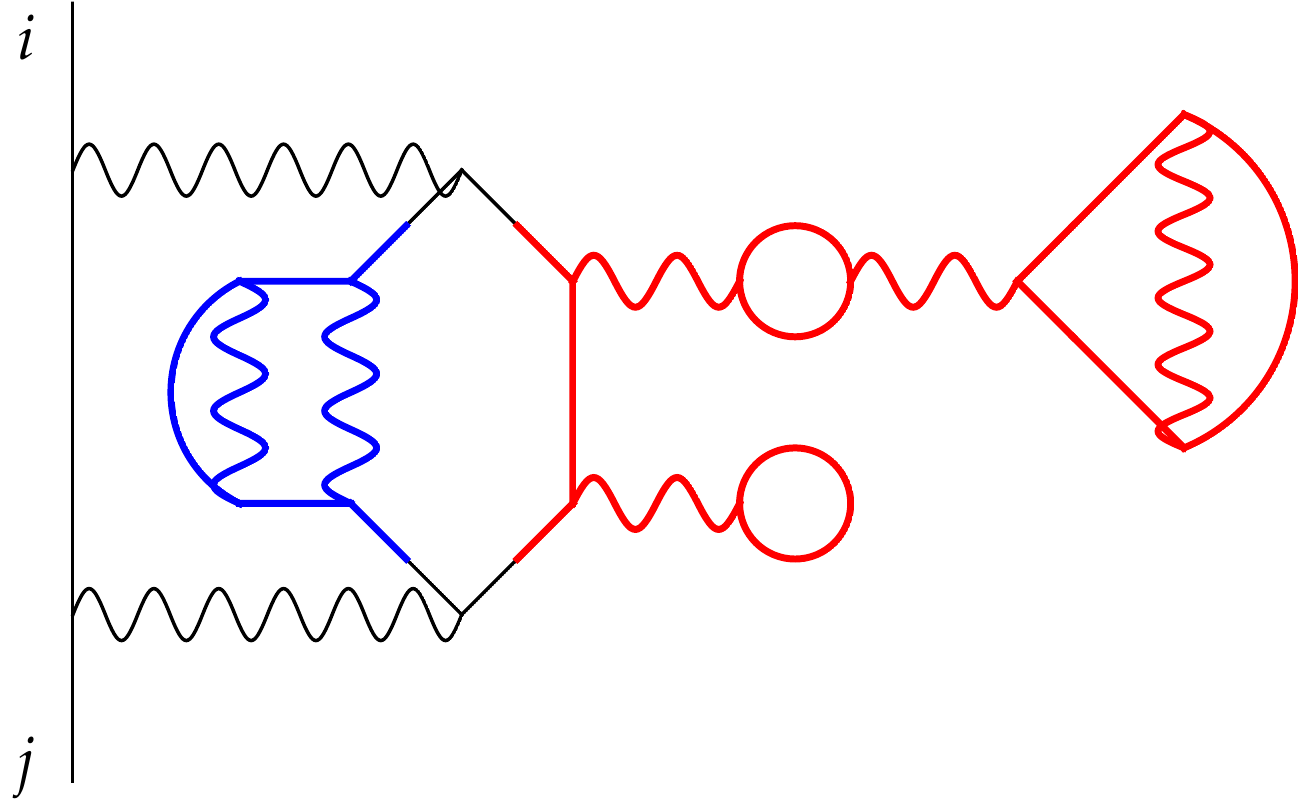}
  \end{center}
  \caption{A self-energy diagram with two maximal insertions, depicted in blue and red, respectively.}
  \label{fig:maxInsertions}
\end{figure}

The following result characterizes every self-energy diagram uniquely in terms of its 
maximal insertions and an underlying 
skeleton diagram (hence the name `skeleton') obtained by collapsing each of these 
insertions into a single propagator line. The proof is given in Appendix \ref{sec:appSkeleton}. 
It somewhat technical and may be skipped on first reading to avoid interrupting the 
flow of the developments that follow.

\begin{proposition}[Skeleton decomposition]
  \label{prop:selfenergycompose}
  Any diagram $\Gamma \in \mf{F}_2^{\mathrm{1PI}}$ can be 
  written as 
  \begin{equation}
    \label{eqn:skeletonInsertion}
    \Gamma = \Gamma_{\mr{s}} \oplus_{(h_1^{(1)},h_2^{(1)},\ldots,h_1^{(K)},h_2^{(K)})} \left[\Gamma^{(1)},\ldots,\Gamma^{(K)} \right],
  \end{equation}
 where $\Gamma_{\mr{s}} \in \mf{F}_2^{\mathrm{2PI}}$ and $\Gamma^{(k)} \in \mf{F}_2^{\mr{c,t}}$ for $k=1,\ldots,K$.
 Moreover such a decomposition is unique up to the external labeling of the $\Gamma^{(k)}$ and the ordering of the pair 
 $(h_1^{(k)},h_2^{(k)})$ for each fixed $k$, and the $\Gamma^{(k)}$ are precisely the maximal insertions 
 admitted by $\Gamma$ 
 (ignoring distinction of insertions based on
  external labelings).
\end{proposition}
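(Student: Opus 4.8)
The plan is to recast the statement in purely graph-theoretic terms, prove a structural \emph{nesting lemma} for insertions, and then read off existence, the identification of the $\Gamma^{(k)}$, and uniqueness. First I would translate one-particle irreducibility into a connectivity statement: form the closed-up graph $\widehat{\Gamma}$ by adjoining a single edge joining the two external half-edges $i,j$ of $\Gamma \in \mf{F}_2^{\mr{1PI}}$. Then $\Gamma$ is 1PI precisely when $\widehat{\Gamma}$ is $2$-edge-connected (bridgeless), and $\widehat{\Gamma}$ is $4$-regular. Under this dictionary an admissible insertion corresponds exactly to a connected vertex set $Q$ containing at least one interaction vertex and no external leg, whose edge boundary in $\widehat{\Gamma}$ consists of exactly two edges landing on two distinct half-edges of $Q$; those two edges are the cut lines, and $Q$ re-equipped with the corresponding external half-edges is the inserted truncated diagram. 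A skeleton $\Gamma_{\mr{s}} \in \mf{F}_2^{\mr{2PI}}$ is then one that is $2$-edge-connected and admits no such $Q$. (A single collapsed propagator, having no interaction vertex, is \emph{not} an insertion; this will be important later.)

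The key step is the nesting lemma: any two insertions $Q_1,Q_2$ are either nested, vertex-disjoint, or else $Q_1\cup Q_2$ is again an insertion. I would prove this via submodularity of the cut function $\partial(\cdot)$ of $\widehat{\Gamma}$. Writing $A = Q_1\cap Q_2$ and letting $D$ be the complement of $Q_1\cup Q_2$ (which contains the external legs, hence is nonempty), and using $|\partial(Q_1)| = |\partial(Q_2)| = 2$ together with the fact that every nonempty proper cut has size $\ge 2$ by $2$-edge-connectivity, submodularity gives $4 \ge |\partial(A)| + |\partial(Q_1\cup Q_2)|$ and forces $|\partial(A)| = |\partial(Q_1\cup Q_2)| = 2$ whenever $Q_1,Q_2$ cross. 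A short edge count then shows there are no edges between $Q_1\setminus Q_2$ and $Q_2\setminus Q_1$, and since $A\ne\emptyset$ the set $Q_1\cup Q_2$ is connected with boundary of size two, i.e. an insertion. An immediate corollary, using finiteness of the poset of insertions under inclusion, is that every insertion lies inside an inclusion-maximal one and that distinct maximal insertions are pairwise vertex-disjoint.

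With the lemma in hand, existence and the identification of the $\Gamma^{(k)}$ are direct. I would take $\Gamma^{(1)},\dots,\Gamma^{(K)}$ to be the (pairwise disjoint) maximal insertions and obtain $\Gamma_{\mr{s}}$ by collapsing each to a single propagator line, i.e. deleting its interior and its two boundary edges and re-pairing the host half-edges $h_1^{(k)},h_2^{(k)}$. That $\Gamma_{\mr{s}}$ is again $2$-edge-connected (hence 1PI) holds because the complement of an insertion in a $2$-edge-connected graph is connected, so no collapsed edge is a bridge. That $\Gamma_{\mr{s}}$ is $2$PI is a lifting argument: any insertion $R$ of $\Gamma_{\mr{s}}$ must contain a vertex, which is necessarily a skeleton vertex, and $R$ lifts to an insertion $\widetilde R$ of $\Gamma$ by un-collapsing the maximal insertions interior to $R$ and, for any boundary edge of $R$ that is itself a collapsed edge, pushing the corresponding maximal insertion to one side; but $\widetilde R$ then contains a skeleton vertex lying in no maximal insertion, contradicting that every insertion sits inside a maximal one. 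Reinserting $\Gamma^{(1)},\dots,\Gamma^{(K)}$ into $\Gamma_{\mr{s}}$ recovers $\Gamma$, establishing \eqref{eqn:skeletonInsertion}.

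For uniqueness I would show any decomposition as in \eqref{eqn:skeletonInsertion} uses \emph{exactly} the maximal insertions. If some inserted $\Gamma^{(k)}$ were not maximal, it would sit strictly inside a larger insertion $Q'$, and collapsing $\Gamma^{(k)}$ would leave $Q'/\Gamma^{(k)}$ as a genuine insertion of $\Gamma_{\mr{s}}$, contradicting $2$PI; conversely an omitted maximal insertion would survive as an insertion of $\Gamma_{\mr{s}}$, again a contradiction. Thus the inserted pieces are forced, and the only remaining freedom is the simultaneous swap of the ordered pair $(h_1^{(k)},h_2^{(k)})$ with the external labeling $\{i,j\}$ of $\Gamma^{(k)}$, precisely the ambiguity allowed in the statement. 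I expect the main obstacle to be the nesting lemma together with the bookkeeping in the lifting argument---in particular, handling insertions of $\Gamma_{\mr{s}}$ whose boundary edges are themselves collapsed edges, and checking throughout that the two boundary edges of each candidate subgraph land on distinct half-edges, so that it is a bona fide truncated Green's function diagram rather than a degenerate object.
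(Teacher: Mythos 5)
Your route is genuinely different from the paper's --- you uncross insertions via submodularity of the cut function on a closed-up graph, whereas the paper works directly with parity counts of unpaired half-edges --- and much of the skeleton of your argument is sound. However, the dictionary you build on contains a concrete error: for a \emph{connected} truncated two-point diagram $\Gamma$, the closed-up graph $\widehat{\Gamma}$ is \emph{automatically} bridgeless, whether or not $\Gamma$ is 1PI. Indeed, a parity count (each vertex carries four half-edges, so each component left after cutting one propagator has an even number of unpaired half-edges, hence contains exactly one of the two external half-edges) shows that any single propagator whose removal disconnects $\Gamma$ necessarily separates $i$ from $j$, and the adjoined closure edge then restores connectivity. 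So ``$\Gamma$ is 1PI iff $\widehat{\Gamma}$ is $2$-edge-connected'' is false; the correct translation is that $\Gamma$ is 1PI iff the closure edge lies in no $2$-edge-cut of $\widehat{\Gamma}$. The error is harmless where you only need bridgelessness (the submodularity computation, which correctly forces crossing insertions to merge into a larger insertion), but it invalidates your proof that $\Gamma_{\mr{s}}$ is 1PI: showing that no collapsed edge is a bridge of $\widehat{\Gamma_{\mr{s}}}$ is free and proves nothing. You need a separate argument there, e.g.\ the paper's: if cutting a collapsed edge $\{h_1^{(k)},h_2^{(k)}\}$ disconnected $\Gamma_{\mr{s}}$, then cutting the single propagator $\{h_1^{(k)},e_1^{(k)}\}$ would disconnect $\Gamma$, contradicting $\Gamma\in\mf{F}_2^{\mr{1PI}}$, and a non-collapsed cut edge lifts directly.

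Two smaller gaps. First, pairwise \emph{vertex}-disjointness of the maximal insertions is not enough for the simultaneous collapse to be well defined: you must also rule out two maximal insertions sharing a boundary edge (the paper's ``touching''). This follows from your machinery (one or two edges between disjoint insertions would make their union an insertion or a closed component, contradicting maximality or connectedness), but it has to be said. Second, the uniqueness step ``collapsing $\Gamma^{(k)}$ leaves $Q'/\Gamma^{(k)}$ as a genuine insertion of $\Gamma_{\mr{s}}$'' glosses over the real difficulty: the larger insertion $Q'$ may \emph{cross} some of the other inserted pieces $\Gamma^{(j)}$, $j\neq k$, so it does not descend to $\Gamma_{\mr{s}}$ until it has been enlarged (via your nesting lemma) to be nested with or disjoint from every $\Gamma^{(j)}$. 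This is exactly where the paper spends the final third of its proof (the iterated case analysis on $\vert E^{(j,k)}\vert\in\{2,4\}$); your tools suffice, but the step is not yet an argument. With the dictionary corrected and these points filled in, the submodularity approach would give a clean and arguably more systematic proof than the paper's half-edge bookkeeping.
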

\begin{remark}
\label{rem:uniqueness}
Here we record some comments on the meaning of the uniqueness result of 
Proposition \ref{prop:selfenergycompose}.
It is purely an artifact of our \textrm{`$\,\oplus$'} notation for insertions, which privileges 
an ordering of each pair $\{h_1^{(k)},h_2^{(k)}\}$ of half-edges as in \eqref{eqn:skeletonInsertion}, 
that one could just as well write $\Gamma$ in the form of 
\eqref{eqn:skeletonInsertion} by exchanging the roles of $h_1^{(k)}$ and $h_2^{(k)}$ and 
permuting the external labels of the insertion $\Gamma^{(k)}$. The statement 
is then that the decomposition in Proposition \ref{prop:selfenergycompose} is unique 
up to this redundancy, which is resolved by fixing an external labeling for each of the 
maximal insertions admitted by $\Gamma$.

This sort of non-uniqueness (which is really just a notational artifact and reflects no 
interesting topological properties of a diagram), should be contrasted with a 
notion appearing later on, which is to be motivated in section \ref{sec:examples} and 
fully sharpened in section \ref{sec:ways}. Indeed, we will be interested 
in the number of `ways' (in a sense to be clarified) of producing a diagram 
\emph{isomorphic} to $\Gamma \in \mf{F}_2^{\mathrm{1PI}}$ from 
its skeleton $\Gamma_{\mr{s}} \in \mf{F}_2^{\mathrm{2PI}}$ via Green's 
function insertions. By contrast, Proposition \ref{prop:selfenergycompose} 
above concerns the number of ways of producing the \emph{actual} diagram 
$\Gamma$ from its skeleton $\Gamma_{\mr{s}}$, stating that that there is 
in fact only one (up to the notational ambiguity we have discussed).

\end{remark}

\vspace{6mm}

Now we return to the task of developing a bold diagrammatic expansion for 
the self-energy. Proposition \ref{prop:selfenergycompose} tells us that each bare self-energy 
diagram can be constructed from a unique skeleton diagram via 
Green's function insertions. It is not hard to see that, conversely, the result of making 
insertions into a skeleton diagram is a 1PI diagram, i.e., a self-energy diagram. 
If we view skeleton diagrams as \emph{bold} 
diagrams, this implies that by summing over all (bold) skeleton diagrams (and then 
formally replacing each bold line with a sum over Green's function diagrams), 
we recover all of the bare self-energy diagrams.
However, there remains the question 
of whether these diagrams are counted appropriately. To answer this question we need to 
understand three items: (1) how many ways a given (isomorphism class of) self-energy diagram can 
be obtained via insertions from its underlying skeleton, (2) how to represent 
the automorphism groups (hence also symmetry factors) of self-energy diagrams in terms of the 
decomposition of Proposition \ref{prop:selfenergycompose}, and (3) the relation between 
items (1) and (2). These items will be addressed in Sections \ref{sec:ways}, \ref{sec:autSkeleton}, 
and \ref{sec:action}, respectively. First, however, to gain familiarity with what we are 
trying to prove, we 
discuss some motivating examples in section \ref{sec:examples}.

\subsection{Motivating examples}
\label{sec:examples}
Consider the non-skeleton diagram $\Gamma$ in
Fig.~\ref{fig:feynmanSigma_decompose} (a), for which we have $S_{\Gamma} = 2$.  It
can be uniquely decomposed into a skeleton diagram $\Gamma_{\mr{s}}$ in
Fig.~\ref{fig:feynmanSigma_decompose} (b) and the single maximal insertion in 
$S_{\Gamma_{\mr{g}}}$ shown in (c). Evidently 
$S_{\Gamma_{\mr{s}}}=2$ and $S_{\Gamma_{\mr{g}}}=1$. 
Roughly speaking (for now), there is only one `way' in which (c) can be 
inserted into (b) to produce a diagram isomorphic to (a), so we say that 
the \emph{redundancy factor} of $\Gamma$ is $1$ and write $r_{\Gamma}=1$.
This notion will be defined more carefully below. For now we mention that 
we \emph{do not} count separately the oppositely `oriented' insertions of $S_{\Gamma_{\mr{g}}}$
into $S_{\Gamma_{\mr{s}}}$ because $\Gamma_{\mr{g}}$
is \emph{symmetric}, 
i.e., its isomorphism 
class is unchanged by the exchange of its two external labels. Note, e.g., that all 
all diagrams in Fig.~\ref{fig:feynmanG2} are symmetric except $(\mr{b3}'')$ and $(\mr{b3}''')$.

\begin{figure}[h]
  \begin{center}
    \includegraphics[width=0.8\textwidth]{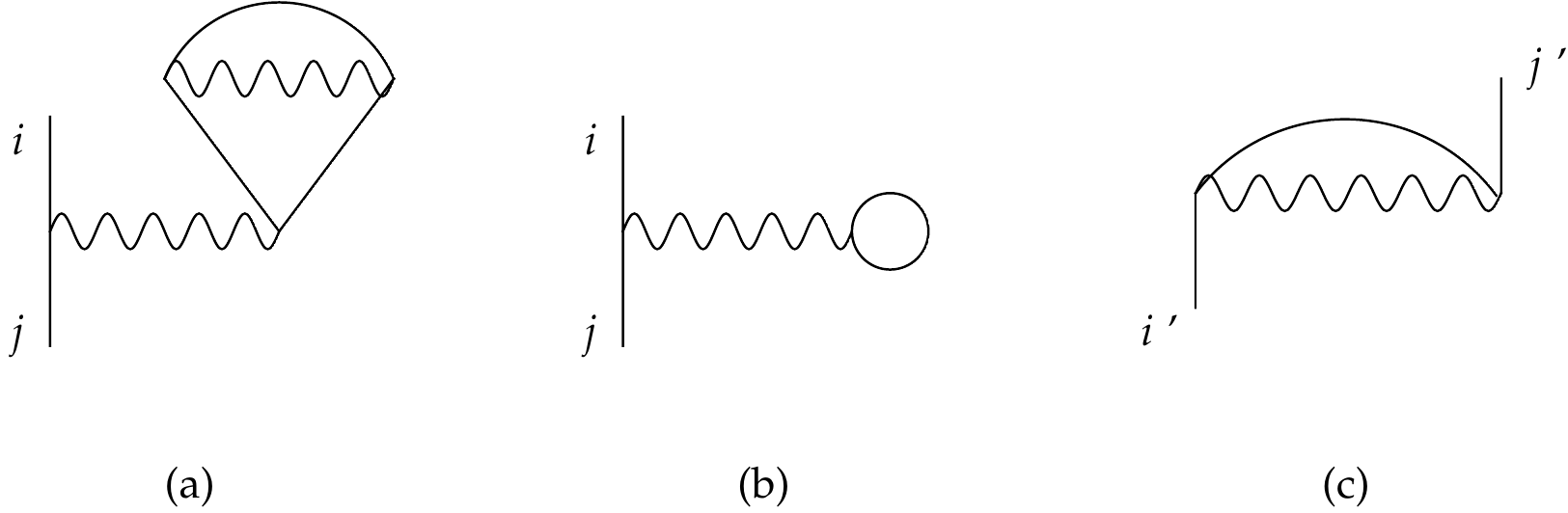}
  \end{center}
  \caption{Decomposition of a non-skeleton diagram $\Gamma$ in (a) into
  a skeleton diagram $\Gamma_{\mr{s}}$ in (b) and a truncated Green's function diagram
  $\Gamma_{\mr{g}}$ in (c).}
  \label{fig:feynmanSigma_decompose}
\end{figure}

In our proposed bold diagram expansion Eq.~\eqref{eqn:bold} for the self-energy, when we 
substitute the bare diagrammatic expansion for the Green's function in for 
each bold line, every bare self-energy diagram will be accounted for once 
for each of the $r_{\Gamma}$ `ways' that it can be produced from its skeleton via insertions, each time 
with a pre-factor equal to the reciprocal of the product of the symmetry 
factors of its skeleton and of all of its insertions. In other words, if 
$\Gamma \in \mf{F}_2^{\mathrm{1PI}}$ is decomposed as in 
Eq.~\eqref{eqn:skeletonInsertion}, then $\Gamma$ will be accounted for 
with a pre-factor of 
\[
\frac{r_{\Gamma}}{S_{\Gamma_{\mr{s}}} \cdot\, \prod_{k=1}^K S_{\Gamma^{(k)}} }.
\]
Then our hope is that 
\begin{equation}
\label{eqn:redundancy}
S_{\Gamma} = \frac{S_{\Gamma_{\mr{s}}} \cdot\, \prod_{k=1}^K S_{\Gamma^{(k)}} }{r_{\Gamma}}.
\end{equation}
This is a key reason for justifying bold diagrams, and is the content of
Corollary \ref{cor:redundancy} below. We can see from the above
discussion that it holds in the case of
Fig.~\ref{fig:feynmanSigma_decompose}.

For now we check Eq.~\eqref{eqn:redundancy} in a few more cases as we 
further develop the notion of the redundancy factor.

Consider Fig.~\ref{fig:feynmanSigma_decompose_3}. Here the insertion 
$\Gamma_{\mr{g}}$ is not symmetric, so we count $r_{\Gamma} = 2$ 
different ways of inserting it into the skeleton $\Gamma_{\mr{s}}$ to make 
a diagram isomorphic to $\Gamma$. Moreover, $S_{\Gamma_{\mr{g}}} = 2$, 
$S_{\Gamma_{\mr{s}}} = 2$, and $S_{\Gamma} = 2$, so Eq.~\eqref{eqn:redundancy} 
holds.

\begin{figure}[h]
  \begin{center}
    \includegraphics[width=0.8\textwidth]{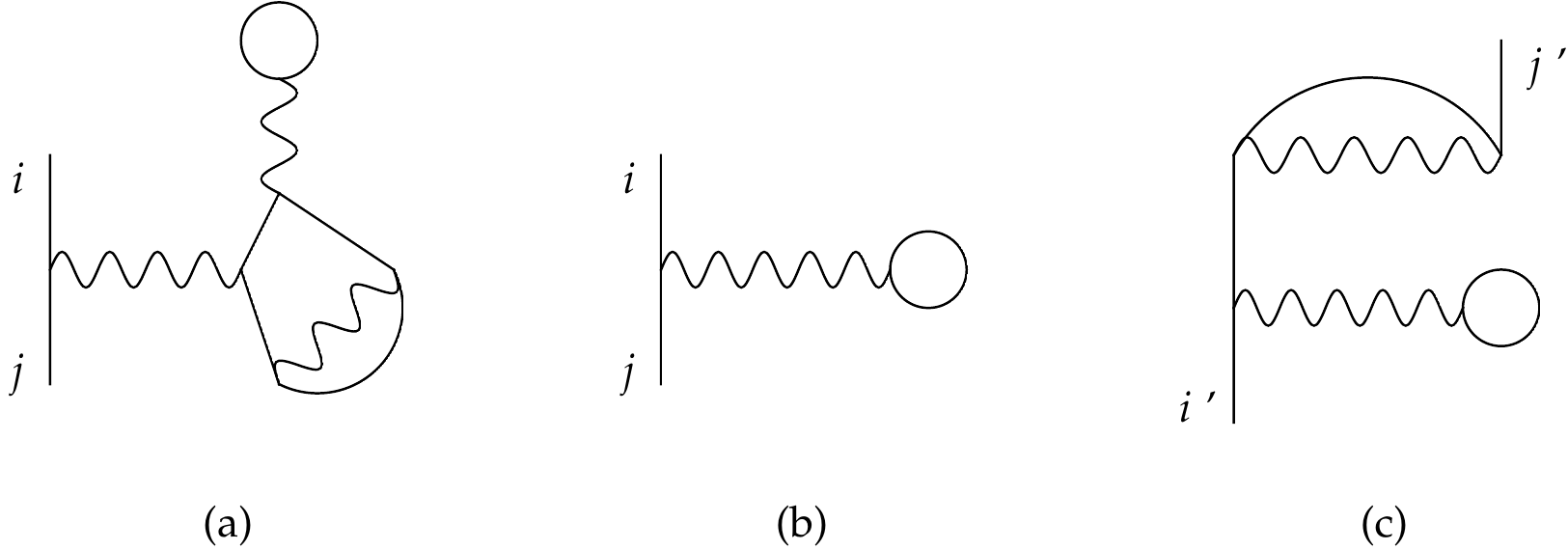}
  \end{center}
  \caption{Decomposition of a non-skeleton diagram $\Gamma$ in (a) into
  a skeleton diagram $\Gamma_{\mr{s}}$ in (b) and a truncated Green's function diagram
  $\Gamma_{\mr{g}}$ in (c).}
  \label{fig:feynmanSigma_decompose_3}
\end{figure}

Next consider Fig.~\ref{fig:feynmanSigma_decompose_2}. The diagram 
$\Gamma_{\mr{g}}$ in (c) can be inserted into the skeleton $\Gamma_{\mr{s}}$ in (d) 
into two different locations, yielding the isomorphic diagrams in (a) and (b). Hence 
$r_{\Gamma} = 2$. Moreover, observe that $S_{\Gamma_{\mr{g}}} = 1$, 
$S_{\Gamma_{\mr{s}}} = 4$, and $S_{\Gamma} = 2$, so Eq.~\eqref{eqn:redundancy} 
holds.

\begin{figure}[h]
  \begin{center}
    \includegraphics[width=0.75\textwidth]{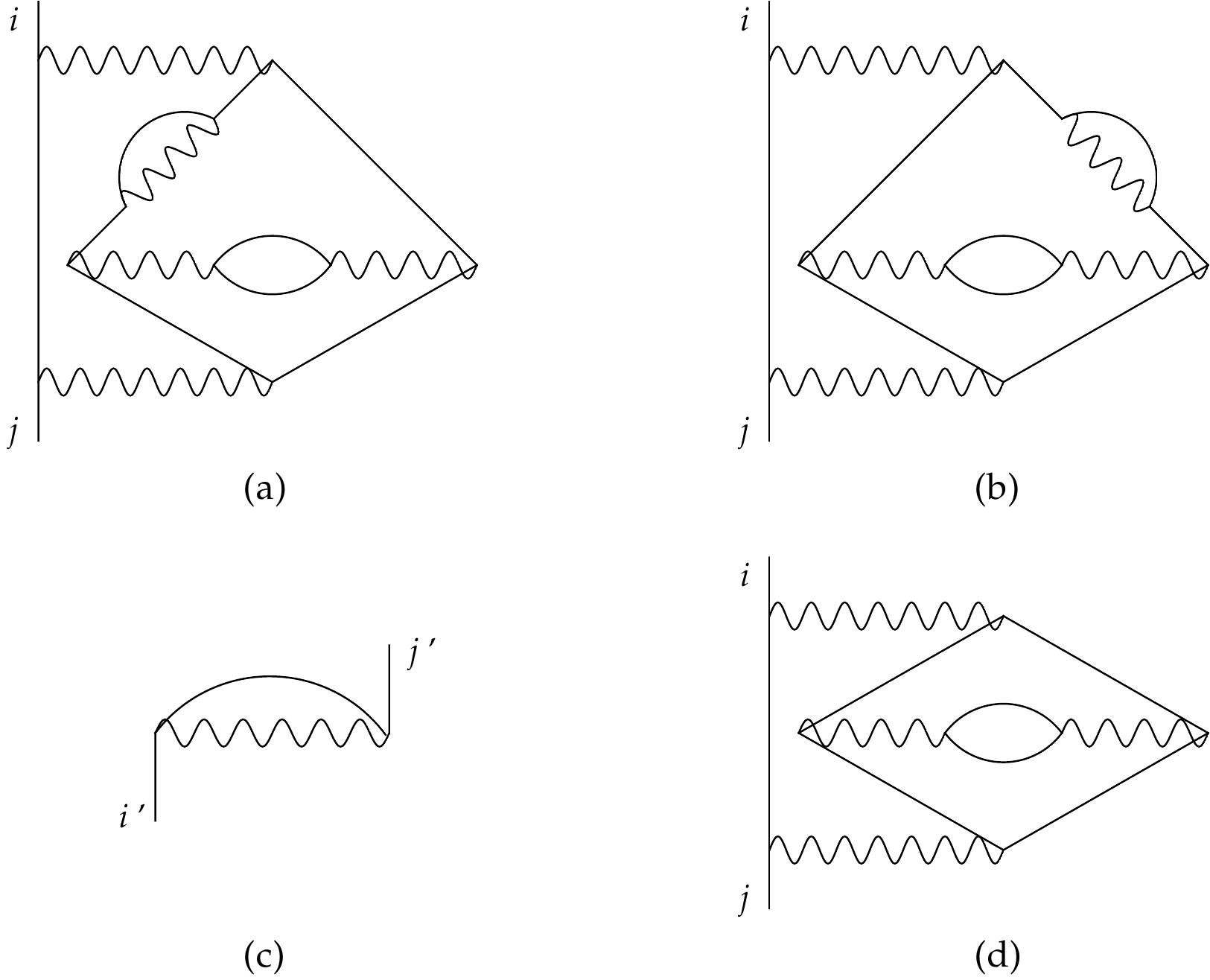}
    \vspace{2mm}
  \end{center}
  \caption{Decomposition of a more complex non-skeleton diagram $\Gamma$ in (a) into
  a skeleton diagram $\Gamma_{\mr{s}}$ in (d) and a truncated Green's function diagram
  $\Gamma_{\mr{g}}$ in (c). The non-skeleton diagram in (b) is isomorphic to 
  (a), but is obtained by the insertion of $\Gamma_{\mr{s}}$ into
  $\Gamma_{\mr{g}}$ at a different location.}
  \label{fig:feynmanSigma_decompose_2}
\end{figure}

\begin{figure}[h]
  \begin{center}
    \includegraphics[width=0.325\textwidth]{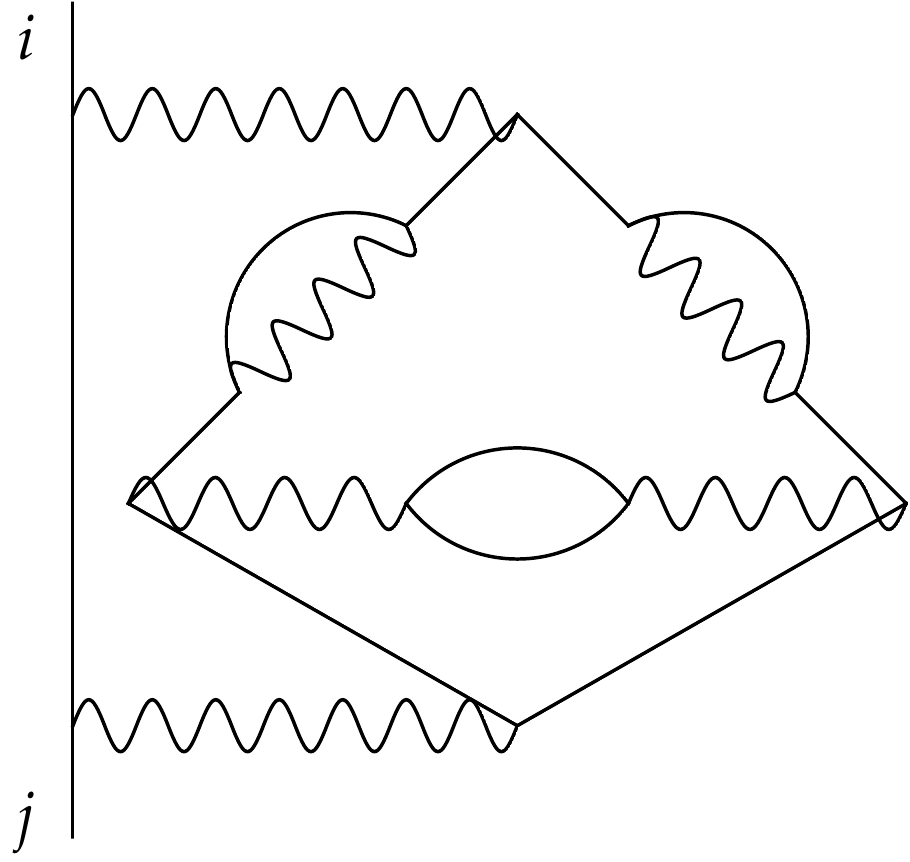}
    \vspace{2mm}
  \end{center}
  \caption{A non-skeleton diagram $\Gamma$ with redundancy factor $r_{\Gamma} = 1$ 
  and the same
  skeleton $\Gamma_{\mr{s}}$ as in Fig.~\ref{fig:feynmanSigma_decompose_2} (d).}
  \label{fig:feynmanSigma_decompose_4}
\end{figure}

By contrast the diagram $\Gamma$ in Fig.~\ref{fig:feynmanSigma_decompose_4}, which 
has the same skeleton $\Gamma_{\mr{s}}$ but 
admits two maximal insertions, has a redundancy factor of only $r_{\Gamma}=1$. 
The left-right symmetry of the diamond does not yield additional redundancy 
because the maximal insertions exchanged by this symmetry are isomorphic to 
each other---and in fact to the insertion 
of Fig.~\ref{fig:feynmanSigma_decompose_2} (c).
Thus in the bold diagram expansion Eq.~\eqref{eqn:bold}, $\Gamma$ is only accounted 
for \emph{once}. Meanwhile, $S_{\Gamma} = 4$, $S_{\Gamma_{\mr{s}}} = 4$, 
and the symmetry factors of the insertions are both one, so Eq.~\eqref{eqn:redundancy} holds.

In Fig.~\ref{fig:feynmanSigma_decompose_6_2} (a), we show a non-skeleton diagram 
$\Gamma$ which has the same skeleton $\Gamma_{\mr{s}}$ 
as in the last two examples. $\Gamma$ admits two (non-symmetric) maximal 
insertions, each isomorphic to
the diagram of Fig.~\ref{fig:feynmanSigma_decompose_3} (c). 
There are two nonequivalent ways of inserting these diagrams into $\Gamma_{\mr{s}}$ to 
yield a diagram isomorphic to $\Gamma$, depicted separately 
in Fig.~\ref{fig:feynmanSigma_decompose_6_2} (a), (b). 
We have $S_{\Gamma}=8$ (with 
a factor of $4$ coming from the two half-dumbbells) and 
$S_{\Gamma_{\mr{s}}} = 4$, and the symmetry factor of each insertion is $2$ (due to 
the half-dumbbell), so 
Eq.~\eqref{eqn:redundancy} holds.

\begin{figure}[h]
  \begin{center}
    \includegraphics[width=1.0\textwidth]{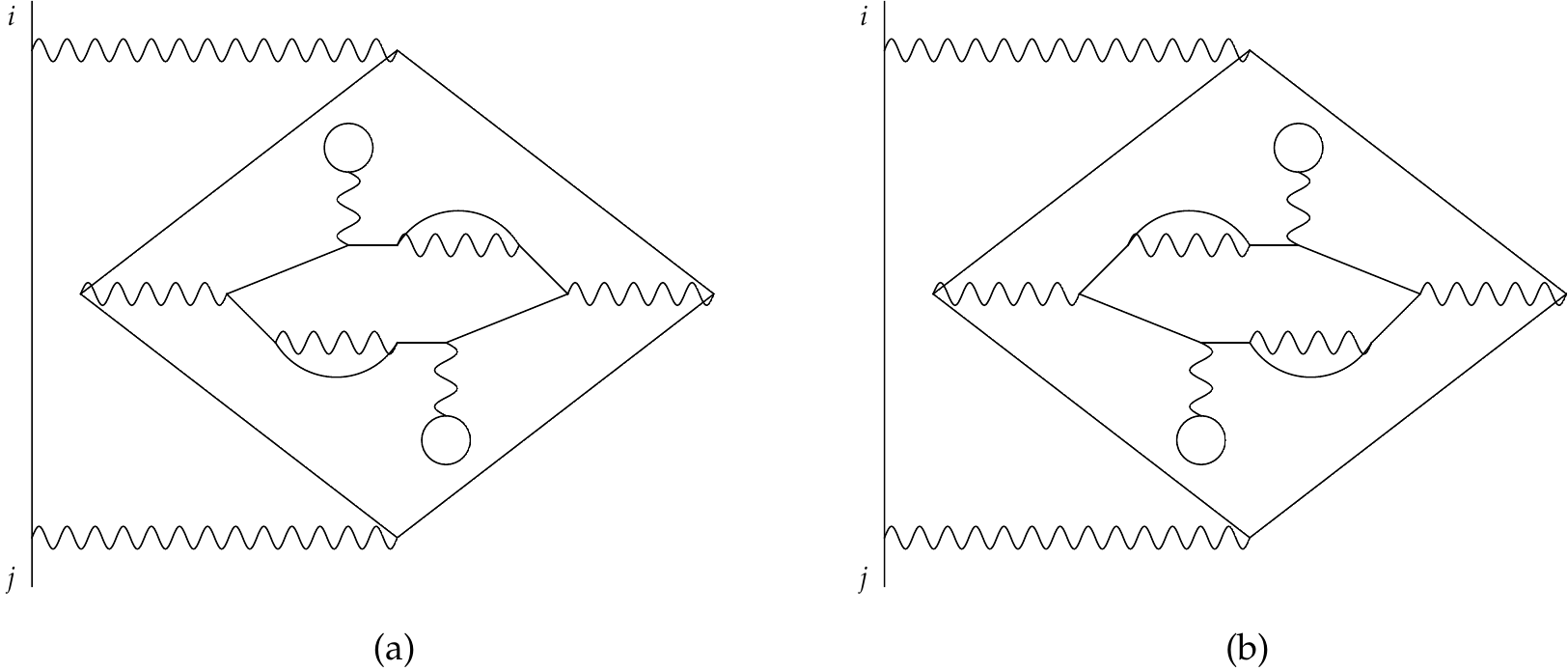}
  \end{center}
  \caption{A non-skeleton diagram $\Gamma$ in (a) with redundancy factor $r_{\Gamma} = 2$ and the same
   skeleton $\Gamma_{\mr{s}}$ as in Fig.~\ref{fig:feynmanSigma_decompose_2} (d). The diagram 
   in (b) is a diagram isomorphic to $\Gamma$ obtained from $\Gamma_{\mr{s}}$ 
   by a nonequivalent set of insertions.}
  \label{fig:feynmanSigma_decompose_6_2}
\end{figure}

Finally, in Fig.~\ref{fig:feynmanSigma_decompose_6_2}, we show 
a diagram $\Gamma$ which once again has the same skeleton $\Gamma_{\mr{s}}$ 
as in the last several examples.
$\Gamma$ admits one (non-symmetric) maximal insertion isomorphic to
the diagram of Fig.~\ref{fig:feynmanSigma_decompose_3} (c). This can be inserted into either 
propagator line of the `bubble' in the center of $\Gamma_{\mr{s}}$ and with either orientation to yield 
$\Gamma$ up to isomorphism, so $r_{\Gamma} = 4$. We have $S_{\Gamma}=2$ (due 
to the half-dumbbell) and 
$S_{\Gamma_{\mr{s}}} = 4$, and the symmetry factor of the insertion is $2$ (due to the half-dumbbell), so 
Eq.~\eqref{eqn:redundancy} holds.

\begin{figure}[h]
  \begin{center}
    \includegraphics[width=0.45\textwidth]{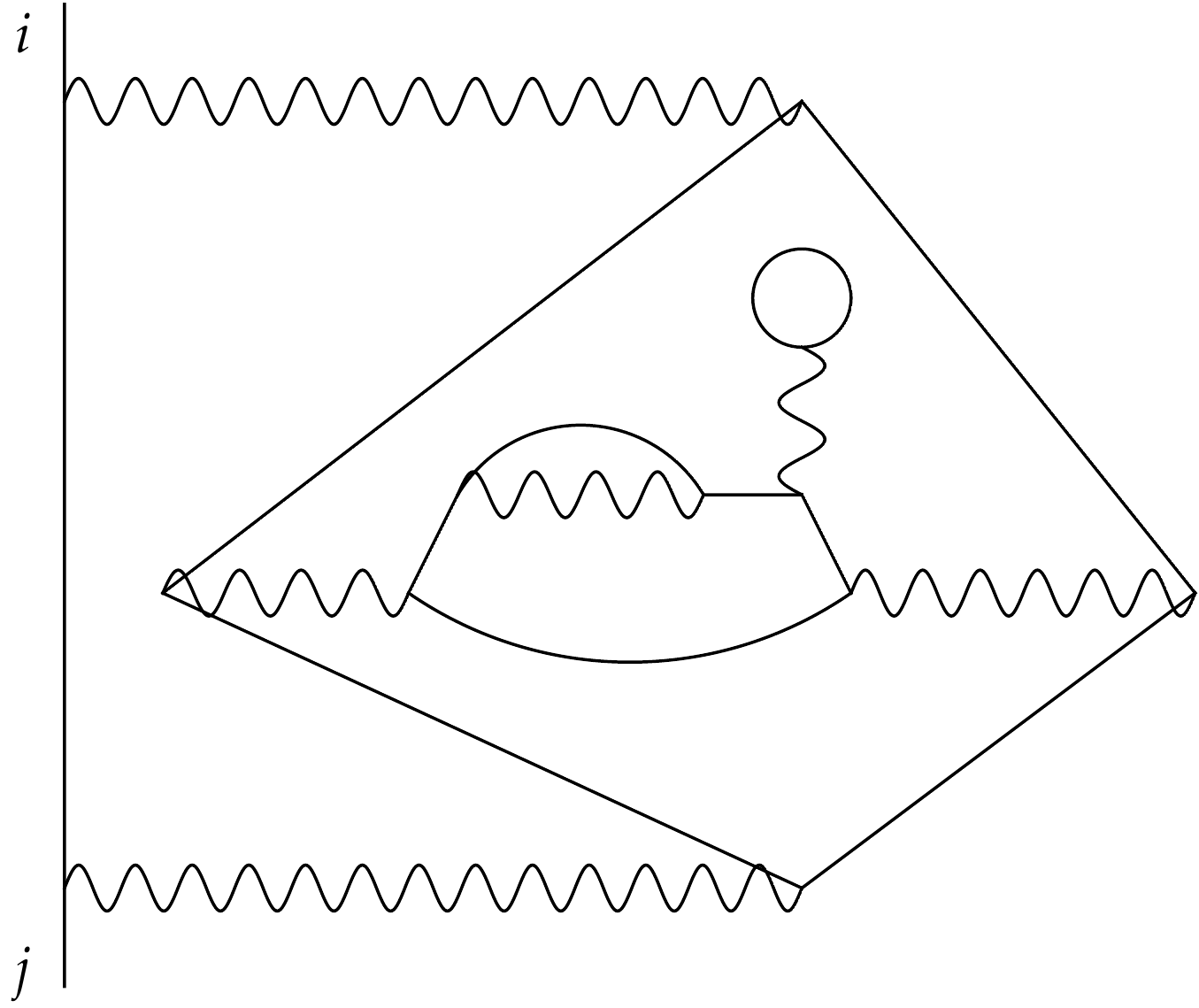}
  \end{center}
  \caption{A non-skeleton diagram $\Gamma$ with redundancy factor $r_{\Gamma} = 4$ and the same
   skeleton $\Gamma_{\mr{s}}$ as in Fig.~\ref{fig:feynmanSigma_decompose_2} (d).}
  \label{fig:feynmanSigma_decompose_5}
\end{figure}

We will refer back to these examples for concreteness in the developments that follow.

\subsection{Ways of producing a self-energy diagram from its skeleton}
\label{sec:ways}
As promised we provide a rigorous definition of the redundancy factor, as well 
as the set of ways of producing a self-energy diagram from its skeleton.
Consider a self-energy diagram $\Gamma$, and write 
\begin{equation}
\label{eqn:gammaDecompose}
\Gamma = \Gamma_{\mr{s}} \oplus_{(h_1^{(1)},h_2^{(1)},\ldots,h_1^{(K)},h_2^{(K)})} \left[\Gamma^{(1)},\ldots,\Gamma^{(K)} \right]
\end{equation}
via Proposition \ref{prop:selfenergycompose}.

\begin{remark}
\label{rem:orientation}
We assume that the ordering within each pair $(h_1^{(k)},h_2^{(k)})$ 
is chosen so that if $\Gamma^{(j)}$, $\Gamma^{(k)}$ are 
non-isomorphic for some $j,k$, then in fact $\Gamma^{(j)}$ and $\Gamma^{(k)}$ are non-isomorphic after 
any external relabeling. In other words, the insertions $\Gamma^{(k)}$ are externally labeled  
such that if any two of them are isomorphic up to external labeling, then they are actually 
isomorphic (with external labeling taken into account). We follow this convention for all 
decompositions of the form of Eq.~\eqref{eqn:gammaDecompose} in the sequel.
\end{remark}

 Implicitly $\Gamma_{\mr{s}}$ is 
a `subdiagram' of $\Gamma$ in that its vertex and half-edge sets are subsets of those of 
$\Gamma$. Roughly speaking, there is only one way to construct $\Gamma$ 
from $\Gamma_{\mr{s}}$ via 
Green's function insertions (namely, via the procedure represented in 
Eq.~\eqref{eqn:gammaDecompose}), 
but there may be many ways to construct diagrams \emph{isomorphic} to $\Gamma$ 
from $\Gamma_{\mr{s}}$ via 
Green's function insertions. 
The uniqueness result of Proposition \ref{prop:selfenergycompose} guarantees that 
any such way must involve (up to isomorphism) the insertion of the 
same set $\{ \Gamma^{(1)},\ldots,\Gamma^{(K)}\}$ of truncated Green's function 
diagrams.
Then let $I(\Gamma,\Gamma_{\mr{s}})$ be the set of ways of replacing $K$ propagator 
lines in $\Gamma_{\mr{s}}$ with $\Gamma^{(1)},\ldots,\Gamma^{(K)}$ such that the 
resulting diagram 
is \emph{isomorphic} to $\Gamma$ as a truncated Green's function diagram. 
(There is some abuse of notation here because $I(\Gamma,\Gamma_{\mr{s}})$ 
additionally depends on the decomposition of Eq.~\eqref{eqn:gammaDecompose}, but 
the meaning will be clear from context.)

More precisely, 
each such `way' consists of the following data: a set of ordered pairs of half-edges 
$(h_1'^{(1)},h_2'^{(1)}),\ldots,(h_1'^{(K)},h_2'^{(K)})$ in $\Gamma_{\mr{s}}$
such that 
\[
\Gamma' := \Gamma_{\mr{s}} \oplus_{(h_1'^{(1)},h_2'^{(1)},\ldots,h_1'^{(K)},h_2'^{(K)})} \left[\Gamma^{(1)},\ldots,\Gamma^{(K)} \right]
\]
is \emph{isomorphic} to $\Gamma$, subject to an \textit{equivalence
relation}. 
Specifically, ways are not distinguished if they differ only by reordering the $K$ 
half-edge pairs by a permutation $\tau \in S_K$ such that $\Gamma^{(\tau(k))}$ is 
isomorphic to $\Gamma^{(k)}$ for all $k$.
Moreover, ways are not distinguished if they differ only by the ordering \emph{within} 
the $k$-th half-edge pair for $k$ such that $\Gamma^{(k)}$ is \emph{symmetric}.
We refer to the equivalence class of $(h_1'^{(1)},h_2'^{(1)}),\ldots,(h_1'^{(K)},h_2'^{(K)})$ 
as the element of $I(\Gamma,\Gamma_{\mr{s}})$ \emph{specified} by these half-edge pairs.

Observe that if we sum over the skeleton diagrams and then  
formally replace each bold line with a sum over Green's function diagrams, then 
in the resulting formal sum over self-energy diagrams, 
each self-energy diagram $\Gamma$ will be counted precisely $\vert I(\Gamma,\Gamma_{\mr{s}})\vert$ 
times, where $\Gamma_{\mr{s}}$ is the skeleton of $\Gamma$. This number 
 $r_{\Gamma} := \vert I(\Gamma,\Gamma_{\mr{s}})\vert$ depends only on 
 $\Gamma \in \mf{F}_2^{\mr{1PI}}$, and as suggested earlier we call 
 it the \emph{redundancy factor} of $\Gamma$.

It is worthwhile to treat the distinction between symmetric and non-symmetric 
insertions a bit more elegantly (and, moreover, in a way that does not so clearly privilege the 
fact that our insertions have \emph{two} external half-edges). For a truncated Green's 
function diagram, the external half-edges have labels `$i$' and `$j$.'
Let the \emph{external symmetry group}, denoted $S(\Gamma_{\mr{g}})$,
of a truncated Green's function diagram $\Gamma_{\mr{g}}$ be the subgroup of $\mr{Sym}\{i,j\} \simeq S_2$ consisting of 
permutations of the labels `$i$' and `$j$' that fix the isomorphism class of the diagram. Therefore 
for symmetric diagrams the external symmetry group is $S_2$, and for non-symmetric diagrams  
it is the trivial group. For future convenience, 
let the action of $\sigma \in S_2$ on a truncated Green's function diagram 
$\Gamma_{\mr{g}}$ defined via permutation of the external labels be
denoted $\sigma\star\Gamma_{\mr{g}}$. (The `$\,\star\,$' notation is meant to distinguish 
from the group action `$\,\cdot\,$' defined earlier.)

Then using this language we can say that ways that 
for any $\sigma \in S(\Gamma^{(k)})$ the modification of  
$(h_{1}'^{(k)},h_{2}'^{(k)})$ to 
$(h_{\sigma(1)}'^{(k)},h_{\sigma(2)}'^{(k)})$ does not yield a distinct element of 
$I(\Gamma,\Gamma_{\mr{s}})$.

\subsection{Understanding automorphisms in terms of the skeleton}
\label{sec:autSkeleton}
Now we turn to item (2), i.e., characterizing the structure of automorphisms of $\Gamma$ in terms of 
its decomposition furnished by Proposition \ref{prop:selfenergycompose}.

With notation as in the section \ref{sec:ways}, 
let $n$ be the order of $\Gamma$, 
and let $p$ be the order of $\Gamma_{\mr{s}}$. Then $q=n-p$ is the order of 
$\Gamma_{\mr{g}} := \bigcup_{k=1}^K \Gamma^{(k)}$. 

We can 
view the skeleton diagram $\Gamma_{\mr{s}}$ as well as the insertions $\Gamma^{(k)}$ 
as labeled truncated Green's function diagrams via the 
labeling of interaction lines and half-edges inherited from $\Gamma$. Let 
$\mr{Aut}(\Gamma,\Gamma_{\mr{s}})$ be the set of 
automorphisms of $\Gamma_{\mr{s}}$ that can be \emph{extended} 
to automorphisms of $\Gamma$ by relabeling the rest of the diagram, i.e., permuting 
the vertex and half-edge labels of $\Gamma_{\mr{g}}$. 

For example, the automorphism of $\Gamma_{\mr{s}}$ of Fig.~\ref{fig:feynmanSigma_decompose_2} (d)
corresponding to the left-right reflection of the `diamond' 
can be extended to an automorphism of the diagram in Fig.~\ref{fig:feynmanSigma_decompose_4}, 
but it \emph{cannot} be extended to an automorphism of any of the diagrams of 
Fig.~\ref{fig:feynmanSigma_decompose_2} (a), (b), nor of any of the diagrams of 
Figs.~\ref{fig:feynmanSigma_decompose_6_2} and \ref{fig:feynmanSigma_decompose_5}. 
Next, consider the automorphism of $\Gamma_{\mr{s}}$ obtained by composing a 
left-right reflection of the diamond with a swap of the two propagator lines in the 
`bubble' at the center of the diamond. This extends to an automorphism 
of each of the diagrams of 
Fig.~\ref{fig:feynmanSigma_decompose_2} (a), (b).

More precisely, viewing 
$\mathbf{R}_q$ as acting on labelings of $\Gamma_{\mr{g}}$, an 
element $g\in \mathrm{Aut}(\Gamma_{\mr{s}})$ is defined to be in $\mr{Aut}(\Gamma,\Gamma_{\mr{s}})$ 
if there exists $h \in \mathbf{R}_q$ such that $gh \in \mr{Aut}(\Gamma)$. (Since $\Gamma_{\mr{s}}$ and 
$\Gamma_{\mr{g}}$ are disjoint, elements $g$ and $h$ 
as in the preceding commute.) Note that $\mr{Aut}(\Gamma,\Gamma_{\mr{s}})$ is 
a subgroup of $\mathrm{Aut}(\Gamma_{\mr{s}})$: indeed, if $g_1, g_2 \in \mr{Aut}(\Gamma,\Gamma_{\mr{s}})$, 
then there exist $h_1,h_2 \in \mathbf{R}_q$ such that $g_1 h_1, g_2 h_2 \in \mr{Aut}(\Gamma)$, but 
then $(g_1 g_2) (h_1 h_2) = (g_1 h_1) (g_2  h_2)$ is in $\mr{Aut}(\Gamma)$, so 
$g_1 g_2 \in \mr{Aut}(\Gamma,\Gamma_{\mr{s}})$.

We have the following characterization of $\mr{Aut}(\Gamma,\Gamma_{\mr{s}})$:

\begin{lemma}
\label{lem:autGamma1}
Let $\Gamma \in \mf{F}_2^{\mr{1PI}}$, and write
\[
\Gamma = \Gamma_{\mr{s}} \oplus_{(h_1^{(1)},h_2^{(1)},\ldots,h_1^{(K)},h_2^{(K)})} \left[\Gamma^{(1)},\ldots,\Gamma^{(K)} \right]
\]
via Proposition \ref{prop:selfenergycompose}. An element 
$g\in \mr{Aut}(\Gamma_{\mr{s}})$ lies in $\mr{Aut}(\Gamma,\Gamma_{\mr{s}})$ 
if and only if for every $k$, there is some $k'$ and some $\sigma \in S_2$ such that 
$\Gamma^{(k)}$ is isomorphic 
to $\sigma \star \Gamma^{(k')}$ and
$\psi_g$ sends 
$(h_1^{(k)},h_2^{(k)})$ to $(h_{\sigma(1)}^{(k')},h_{\sigma(2)}^{(k')})$.
\end{lemma}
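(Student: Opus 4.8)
The plan is to analyze precisely when a combined relabeling $gh$, with $g \in \mr{Aut}(\Gamma_{\mr{s}})$ acting on the skeleton labels and $h \in \mathbf{R}_q$ acting on the labels of $\Gamma_{\mr{g}} = \bigcup_k \Gamma^{(k)}$, defines an automorphism of $\Gamma$. Since $g$ and $h$ act on disjoint pieces, the induced half-edge map satisfies $\psi_{gh} = \psi_g$ on the half-edges of $\Gamma_{\mr{s}}$ and $\psi_{gh} = \psi_h$ on those of $\Gamma_{\mr{g}}$. First I would sort the half-edge pairs of $\Gamma$ into three types: (i) pairs lying entirely in $\Gamma_{\mr{s}}$ other than the insertion sites $\{h_1^{(k)},h_2^{(k)}\}$; (ii) pairs lying entirely within a single insertion $\Gamma^{(k)}$; and (iii) the ``crossing'' pairs $\{h_a^{(k)},e_a^{(k)}\}$ for $a=1,2$, where $e_1^{(k)},e_2^{(k)}$ denote the external half-edges of $\Gamma^{(k)}$ (labeled $i$ and $j$), introduced by the operation $\oplus$ in place of $\{h_1^{(k)},h_2^{(k)}\}$. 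The assertion $gh \in \mr{Aut}(\Gamma)$ is then equivalent to the statement that $\psi_{gh}$ preserves all three families of pairs.

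The crux is the analysis of the crossing pairs. A crossing pair is the \emph{unique} type of pair with one half-edge in $\Gamma_{\mr{s}}$ and the other in $\Gamma_{\mr{g}}$; since $\psi_g$ maps skeleton half-edges to skeleton half-edges and $\psi_h$ maps insertion half-edges to insertion half-edges, the image $\{\psi_g(h_a^{(k)}),\psi_h(e_a^{(k)})\}$ of a crossing pair can be a pair of $\Gamma$ only if it is again a crossing pair. This forces $\psi_g$ to carry each insertion-site half-edge $h_a^{(k)}$ to some insertion-site half-edge; as $\psi_g$ is a bijection preserving the pairing of $\Gamma_{\mr{s}}$, it therefore permutes the insertion-site pairs, sending $(h_1^{(k)},h_2^{(k)})$ to $(h_{\sigma(1)}^{(k')},h_{\sigma(2)}^{(k')})$ for some index $k'$ and some $\sigma \in S_2$ recording the orientation. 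This is exactly the condition in the lemma, and injectivity of $\psi_g$ makes $k \mapsto k'$ a permutation of $\{1,\ldots,K\}$. Moreover, once $\psi_g$ permutes the insertion sites it automatically preserves the type-(i) pairs as well, since these are precisely the remaining pairs of $\Gamma_{\mr{s}}$.

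For the reverse direction I would show the condition suffices by constructing $h$. Given the data $(k',\sigma)$ attached to each $k$, the hypothesis that $\Gamma^{(k)}$ is isomorphic to $\sigma \star \Gamma^{(k')}$ supplies an isomorphism of truncated Green's function diagrams carrying $e_a^{(k)}$ to $e_{\sigma(a)}^{(k')}$. Assembling these isomorphisms over all $k$ (legitimate because $k \mapsto k'$ is a permutation) yields a self-isomorphism of $\Gamma_{\mr{g}}$, i.e., an element $h \in \mathbf{R}_q$ with $\psi_h(e_a^{(k)}) = e_{\sigma(a)}^{(k')}$. By construction $\psi_h$ preserves the type-(ii) pairs, and combined with the action of $\psi_g$ on the insertion sites it sends each crossing pair $\{h_a^{(k)},e_a^{(k)}\}$ to the crossing pair $\{h_{\sigma(a)}^{(k')},e_{\sigma(a)}^{(k')}\}$; hence $gh$ preserves all three pair types and lies in $\mr{Aut}(\Gamma)$. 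Conversely, the forward implication already extracts the required $(k',\sigma)$ from any witnessing $h$, so the two conditions are equivalent.

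The main obstacle I anticipate is the careful bookkeeping of orientations --- keeping straight how the ordering within each pair $(h_1^{(k)},h_2^{(k)})$ interacts with the external labels $i,j$ of the insertions through $\sigma$, and verifying that the labeling convention of Remark \ref{rem:orientation} makes the choice of $(k',\sigma)$ well defined (in particular, that $\sigma$ is forced precisely when $\Gamma^{(k)}$ is non-symmetric, whereas for symmetric insertions either choice of $\sigma$ is admissible, consistently with $S(\Gamma^{(k)})$). A secondary point requiring care is confirming that assembling the per-insertion isomorphisms genuinely produces an element of $\mathbf{R}_q$ rather than merely an abstract diagram isomorphism; here I would lean on the fact that $\Gamma_{\mr{g}}$ is the disjoint union of the $\Gamma^{(k)}$ together with the interchangeability of relabelings and unlabeled self-isomorphisms recorded earlier.
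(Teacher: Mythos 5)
Your proposal is correct, and the backward direction (assembling the per-insertion isomorphisms into an element $h\in\mathbf{R}_q$ so that $gh\in\mr{Aut}(\Gamma)$) is essentially identical to the paper's. The forward direction, however, takes a genuinely different route. The paper argues that an automorphism of $\Gamma$ must carry the maximal insertion $\Gamma^{(k)}$ to another maximal insertion, and then invokes the uniqueness clause of Proposition \ref{prop:selfenergycompose} to identify that image with some $\sigma\star\Gamma^{(k')}$ sitting at $(h_{\sigma(1)}^{(k')},h_{\sigma(2)}^{(k')})$. You instead classify the half-edge pairs of $\Gamma$ into skeleton pairs, intra-insertion pairs, and crossing pairs, and observe that because $\psi_g$ and $\psi_h$ act on disjoint half-edge sets, a crossing pair can only map to a crossing pair; this pins down both that $\psi_g$ permutes the insertion sites and that $\psi_h(e_a^{(k)})=e_{\sigma(a)}^{(k')}$, whence the restriction of $\psi_h$ to the component $\Gamma^{(k)}$ of $\Gamma_{\mr{g}}$ is the required isomorphism onto $\sigma\star\Gamma^{(k')}$. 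Your argument is more elementary and self-contained in that it does not re-invoke the skeleton-decomposition uniqueness, at the price of some bookkeeping; the paper's version is shorter because it reuses Proposition \ref{prop:selfenergycompose}. One small point you should make explicit when writing this up: the isomorphism $\Gamma^{(k)}\cong\sigma\star\Gamma^{(k')}$ in the forward direction is obtained by restricting $\psi_h$ to the connected component $\Gamma^{(k)}$ of $\Gamma_{\mr{g}}$ and noting that preservation of the intra-insertion pairs forces $\psi_h$ to permute these components, with the crossing-pair computation determining which component is hit and how its external labels are carried; you gesture at this but do not quite state it.
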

\begin{proof}
First we prove the forward direction, so let $g\in \mr{Aut}(\Gamma,\Gamma_{\mr{s}})$. 
Then $g$ extends to an automorphism of $\Gamma$, which we shall also call $g$. 
Let $e_1^{(k)},e_2^{(k)}$ be the external half-edges of the truncated 
Green's function diagram $\Gamma^{(k)}$ paired with $h_1^{(k)},h_2^{(k)}$, respectively, 
in the overall diagram $\Gamma$ (equivalently, the external half-edges labeled 
`$i$' and `$j$,' respectively). Then the maximal insertion $\Gamma^{(k)}$ is 
disconnected from the rest of $\Gamma$ by unpairing $\{e_1^{(k)},h_1^{(k)} \}$ 
and $\{e_2^{(k)},h_2^{(k)}\}$ in $\Gamma$. Since $g$ is an automorphism, 
removing the links  
$\{\psi_g(e_1^{(k)}),\psi_g(h_1^{(k)}) \}$ and $\{\psi_g(e_2^{(k)}),\psi_g(h_2^{(k)}) \}$ 
from $\Gamma$ must also a disconnect a maximal insertion (isomorphic to 
$\Gamma^{(k)}$)
at $(\psi_g(h_1^{(k)}), \psi_g(h_2^{(k)}))$ 
with external half-edges $\psi_g(e_1^{(k)}), \psi_g(e_2^{(k)})$ 
labeled `$i$' and `$j$,' respectively.
Since this diagram is a maximal insertion, by Proposition \ref{prop:selfenergycompose} 
it must be $\sigma \star \Gamma^{(k')}$ for some $k'$, where 
$\sigma \in S_2$, and moreover $\psi_g(h_1^{(k)}) = h_{\sigma(1)}^{(k')}$
and $\psi_g(h_2^{(k)}) = h_{\sigma(2)}^{(k')}$. This concludes the proof 
of the forward direction.

Now assume that $g\in \mr{Aut}(\Gamma_{\mr{s}})$ and 
that for every $k$, there is some $k' = k'(k)$ and some 
$\sigma = \sigma(k) \in S_2$ such that 
$\Gamma^{(k')}$ is isomorphic 
to $\sigma \star \Gamma^{(k)}$ via some isomorphism 
$(\vp^{(k)},\psi^{(k)})$ and
$\psi_g$ sends 
$(h_1^{(k)},h_2^{(k)})$ to $(h_{\sigma(1)}^{(k')},h_{\sigma(2)}^{(k')})$. Then 
we aim to extend $g$ to an automorphism of $\Gamma$, i.e., we aim to 
extend $(\vp_g, \psi_g)$ to an isomorphism from $\Gamma$ to itself.
This can be 
done simply by mapping vertices and half-edges lying in the $\Gamma^{(k)}$ 
via $(\vp^{(k)},\psi^{(k)})$. It is straightforward to check that this indeed defines 
an automorphism.
\end{proof}

We also have the following result characterizing the structure of automorphisms of $\Gamma$ 
in terms of $\mr{Aut}(\Gamma,\Gamma_{\mr{s}})$:

\begin{lemma}
\label{lem:autGamma2}
Let $\Gamma \in \mf{F}_2^{\mr{1PI}}$ be decomposed as 
in Eq.~\eqref{eqn:skeletonInsertion}. 
Then any $g \in \mathrm{Aut}(\Gamma)$ restricts to an automorphism of $\Gamma_{s}$ 
(in particular, only permutes vertex labels \emph{within} the subdiagram 
$\Gamma_{s}$). By definition, this induced automorphism of $\Gamma_{\mr{s}}$ then lies in 
$\mr{Aut}(\Gamma,\Gamma_{\mr{s}})$. Moreover, if $\Gamma$ admits a maximal insertion 
$\Gamma'$ at $(h_1',h_2')$, then $\Gamma$ also admits a maximal insertion $\Gamma''$ isomorphic to
$\Gamma'$ at $(\psi_g (h_1'),\psi_g (h_2'))$. Furthermore,  
$g$ sends all vertex labels from $\Gamma'$ to $\Gamma''$, i.e., $\vp_g$ sends each vertex 
of $\Gamma'$ to a vertex of $\Gamma''$.
\end{lemma}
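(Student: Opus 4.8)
The plan is to show that every $g \in \mathrm{Aut}(\Gamma)$ must permute the maximal insertions of $\Gamma$ among themselves, from which all of the assertions follow. The guiding principle is that ``being a maximal insertion'' is a purely topological property of the half-edge/pairing data, hence is preserved by the self-isomorphism $(\vp_g,\psi_g)$ of $\Gamma$.

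First I would establish the statements about maximal insertions. Suppose $\Gamma$ admits a maximal insertion $\Gamma'$ at $(h_1',h_2')$, with external half-edges $e_1',e_2'$ (labeled $i,j$) paired with $h_1',h_2'$ respectively in the pairing $\Pi$ of $\Gamma$; thus $\Gamma'$ is disconnected from the rest of $\Gamma$ by unpairing $\{e_1',h_1'\}$ and $\{e_2',h_2'\}$. Since $\psi_g$ is a bijection preserving $\Pi$ and $\vp_g$ preserves the vertices and their half-edge partitions $\{H_1,H_2\}$, the image $\Gamma''$ of $\Gamma'$ under $(\vp_g,\psi_g)$ is again a subdiagram of $\Gamma$, isomorphic to $\Gamma'$ via the restriction of $g$, and it is disconnected from the rest of $\Gamma$ by unpairing $\{\psi_g(e_1'),\psi_g(h_1')\}$ and $\{\psi_g(e_2'),\psi_g(h_2')\}$. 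With the ordering $(\psi_g(h_1'),\psi_g(h_2'))$ the external labels line up, so $\Gamma''$ is an insertion at $(\psi_g(h_1'),\psi_g(h_2'))$ isomorphic to $\Gamma'$. Its maximality is inherited from that of $\Gamma'$: if $\Gamma''$ sat inside a strictly larger insertion, then applying $g^{-1} \in \mathrm{Aut}(\Gamma)$ would exhibit $\Gamma'$ inside a strictly larger insertion, contradicting maximality. Since $\vp_g$ carries the vertices of $\Gamma'$ onto those of $\Gamma''$, the final assertion holds too.

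Next I would deduce that $g$ restricts to an automorphism of $\Gamma_{\mr{s}}$. By Proposition \ref{prop:selfenergycompose} the maximal insertions of $\Gamma$ are exactly $\Gamma^{(1)},\ldots,\Gamma^{(K)}$, and $\Gamma_{\mr{s}}$ is the subdiagram carried by the complementary vertices and half-edges together with the collapsed pairs $\{h_1^{(k)},h_2^{(k)}\}$. By the previous step $(\vp_g,\psi_g)$ permutes the collection $\{\Gamma^{(1)},\ldots,\Gamma^{(K)}\}$, hence maps $\Gamma_{\mr{g}}=\bigcup_k \Gamma^{(k)}$ onto itself and therefore maps the complementary skeleton vertices and half-edges onto themselves; in particular $\vp_g$ sends $\Gamma_{\mr{s}}$-vertices to $\Gamma_{\mr{s}}$-vertices. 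It remains to check that this restriction respects the pairing of $\Gamma_{\mr{s}}$. Pairs internal to $\Gamma_{\mr{s}}$ (both endpoints outside every insertion) are preserved because $g$ preserves $\Pi$ and their endpoints stay in the skeleton. For a collapsed pair $\{h_1^{(k)},h_2^{(k)}\}$, the previous step sends the insertion at $(h_1^{(k)},h_2^{(k)})$ to a maximal insertion at $(\psi_g(h_1^{(k)}),\psi_g(h_2^{(k)}))$, which by the uniqueness clause of Proposition \ref{prop:selfenergycompose} is $(h_{\sigma(1)}^{(k')},h_{\sigma(2)}^{(k')})$ for some $k'$ and some $\sigma \in S_2$; thus $\{h_1^{(k)},h_2^{(k)}\}$ maps to the collapsed pair $\{h_1^{(k')},h_2^{(k')}\}$. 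Because $g$ also preserves each vertex's partition $\{H_1,H_2\}$, the induced maps define a self-isomorphism of $\Gamma_{\mr{s}}$ as an unlabeled diagram, i.e.\ an element of $\mathrm{Aut}(\Gamma_{\mr{s}})$.

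The remaining assertion is then immediate: the automorphism $g|_{\Gamma_{\mr{s}}}$ of $\Gamma_{\mr{s}}$ extends to the automorphism $g$ of $\Gamma$ (acting as $g$ on $\Gamma_{\mr{g}}$), so by definition $g|_{\Gamma_{\mr{s}}} \in \mr{Aut}(\Gamma,\Gamma_{\mr{s}})$. I expect the main obstacle to be the careful bookkeeping needed to argue that maximality is preserved and, above all, to invoke the uniqueness clause of Proposition \ref{prop:selfenergycompose} correctly in order to identify the image of each collapsed pair with another collapsed pair; the topological content is intuitive from the pictures but must be expressed purely in terms of the half-edge/pairing data structure.
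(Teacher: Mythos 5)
Your proposal is correct and follows essentially the same route as the paper's proof: show that the self-isomorphism $(\vp_g,\psi_g)$ carries maximal insertions to maximal insertions (hence permutes the $\Gamma^{(k)}$ and preserves the skeleton's vertices and half-edges), and then check that the induced map on $\Gamma_{\mr{s}}$ is an automorphism which, being extended by $g$ itself, lies in $\mr{Aut}(\Gamma,\Gamma_{\mr{s}})$. You merely spell out in more detail the step the paper dismisses with ``it can be readily checked, by collapsing maximal insertions,'' in particular the identification of the image of each collapsed pair $\{h_1^{(k)},h_2^{(k)}\}$ with another collapsed pair via the uniqueness clause of Proposition \ref{prop:selfenergycompose}.
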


\begin{proof}
Let $g \in \mathrm{Aut}(\Gamma)$. First we prove that $g$ only permutes vertex labels
\emph{within} $\Gamma_{\mr{s}}$. Indeed, suppose not.
Then $\vp_g$ sends a vertex that is not contained in any insertion 
to a vertex that is contained in some insertion. The property of 
whether or not a vertex 
is contained in an insertion is preserved under diagram isomorphism, 
so we have a contradiction.

Furthermore, 
an isomorphism of unlabeled diagrams sends 
maximal insertions to maximal insertions; i.e., 
if $\Gamma$ admits a maximal insertion 
$\Gamma'$ at $(h_1',h_2')$, then $\Gamma$ also admits a maximal insertion $\Gamma''$ isomorphic to
$\Gamma'$ at $(\psi_g (h_1'),\psi_g (h_2'))$ via $(\vp_g,\psi_g)$.
(In particular $\vp_g$ sends each vertex 
of $\Gamma'$ to a vertex of $\Gamma''$.)

Then it can be readily checked, by collapsing maximal insertions, 
that $g$ descends to an automorphism of the skeleton $\Gamma_{\mr{s}}$. 
Then by definition we can view $g\in \mr{Aut}(\Gamma,\Gamma_{\mr{s}})$.
\end{proof}

The preceding two lemmas can be used to compute the symmetry factor 
of $\Gamma \in \mf{F}_2^{\mr{1PI}}$ via its skeleton decomposition:

\begin{lemma}
\label{lem:autGamma3}
Let $\Gamma \in \mf{F}_2^{\mr{1PI}}$, decomposed as 
in Eq.~\eqref{eqn:skeletonInsertion}. Then
\begin{equation}
\label{eqn:sizeAutSkeleton}
S_{\Gamma} = \vert \mr{Aut}(\Gamma,\Gamma_{\mr{s}}) \vert \cdot \prod_{k=1}^K S_{\Gamma^{(k)}}
\end{equation}
\end{lemma}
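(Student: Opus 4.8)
The plan is to realize \eqref{eqn:sizeAutSkeleton} as the orbit--stabilizer/first-isomorphism-theorem count attached to a short exact sequence of groups. Concretely, I would introduce the restriction map $\rho : \mr{Aut}(\Gamma) \to \mr{Aut}(\Gamma,\Gamma_{\mr{s}})$ sending each $g \in \mr{Aut}(\Gamma)$ to the automorphism of $\Gamma_{\mr{s}}$ it induces by collapsing the maximal insertions, exactly as furnished by Lemma \ref{lem:autGamma2}. That lemma guarantees simultaneously that $\rho$ is well-defined (its image really lands in $\mr{Aut}(\Gamma,\Gamma_{\mr{s}})$) and that $g$ permutes the maximal insertions among themselves, carrying each to an isomorphic one. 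Since collapsing insertions is compatible with composition of the induced maps $(\vp_g,\psi_g)$ on the skeleton, $\rho$ is a group homomorphism.

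Next I would check that $\rho$ is surjective, which is essentially the definition of $\mr{Aut}(\Gamma,\Gamma_{\mr{s}})$: an element of $\mr{Aut}(\Gamma_{\mr{s}})$ lies in $\mr{Aut}(\Gamma,\Gamma_{\mr{s}})$ precisely when it extends to an automorphism of $\Gamma$, and any such extension is a $\rho$-preimage. One may invoke the explicit extension built in the converse direction of the proof of Lemma \ref{lem:autGamma1}.

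The main step is to identify $\ker \rho$ with $\prod_{k=1}^K \mr{Aut}(\Gamma^{(k)})$. An element $g \in \ker \rho$ restricts to the identity on $\Gamma_{\mr{s}}$, so $\psi_g$ fixes each half-edge $h_1^{(k)},h_2^{(k)}$ individually. By Lemma \ref{lem:autGamma2}, $g$ then carries the maximal insertion at $(h_1^{(k)},h_2^{(k)})$ to a maximal insertion at $(\psi_g(h_1^{(k)}),\psi_g(h_2^{(k)})) = (h_1^{(k)},h_2^{(k)})$, which is $\Gamma^{(k)}$ itself; moreover, because the pairings $\{e_1^{(k)},h_1^{(k)}\}$ and $\{e_2^{(k)},h_2^{(k)}\}$ linking each insertion to the skeleton are preserved while the $h_i^{(k)}$ are fixed, the external half-edges $e_1^{(k)},e_2^{(k)}$ are fixed as well. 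Hence $g$ restricts on each insertion to an automorphism of $\Gamma^{(k)}$ respecting its external labeling, i.e.\ an element of $\mr{Aut}(\Gamma^{(k)})$. As the vertex and half-edge sets of the $\Gamma^{(k)}$ are disjoint, this yields an injective homomorphism $\ker \rho \to \prod_{k=1}^K \mr{Aut}(\Gamma^{(k)})$. Conversely, any tuple $(g_1,\ldots,g_K)$ with $g_k \in \mr{Aut}(\Gamma^{(k)})$ assembles into an automorphism of $\Gamma$ acting as $g_k$ on $\Gamma^{(k)}$ and as the identity on $\Gamma_{\mr{s}}$; this is well-defined precisely because each $g_k$ fixes $e_1^{(k)},e_2^{(k)}$, so the cross-boundary pairings are respected. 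Thus $\ker \rho \cong \prod_{k=1}^K \mr{Aut}(\Gamma^{(k)})$ and $\vert\ker\rho\vert = \prod_{k=1}^K S_{\Gamma^{(k)}}$.

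Finally, applying the first isomorphism theorem to the surjection $\rho$ gives $\vert\mr{Aut}(\Gamma)\vert = \vert\ker \rho\vert\cdot\vert\mr{Aut}(\Gamma,\Gamma_{\mr{s}})\vert$, which is exactly \eqref{eqn:sizeAutSkeleton}. The obstacle I anticipate lies entirely in the kernel computation: confirming that a kernel element must fix each external half-edge $e_i^{(k)}$ \emph{individually} (so that it genuinely restricts to an externally labeled automorphism of $\Gamma^{(k)}$), and that the reassembly of the $g_k$ respects the pairing structure at the insertion boundaries. Both points hinge on the fact that collapsing a maximal insertion replaces it by a single propagator line whose endpoints are the fixed half-edges $h_1^{(k)},h_2^{(k)}$.
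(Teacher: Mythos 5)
Your proof is correct and is essentially the paper's argument recast as the short exact sequence $1 \to \prod_{k=1}^K \mathrm{Aut}(\Gamma^{(k)}) \to \mathrm{Aut}(\Gamma) \to \mathrm{Aut}(\Gamma,\Gamma_{\mathrm{s}}) \to 1$: the paper likewise uses Lemma \ref{lem:autGamma2} to show every automorphism of $\Gamma$ descends, Lemma \ref{lem:autGamma1} to show every element of $\mathrm{Aut}(\Gamma,\Gamma_{\mathrm{s}})$ extends, and then counts the extensions of a fixed skeleton automorphism as $\prod_{k} S_{\Gamma^{(k)}}$, which is exactly your fiber computation. The only cosmetic difference is that the paper counts the fiber over a general $g$ directly (using that the set of isomorphisms from $\Gamma^{(k)}$ to $\sigma \star \Gamma^{(k')}$ is a torsor over $\mathrm{Aut}(\Gamma^{(k)})$), whereas you compute the fiber over the identity and invoke the homomorphism property of the restriction map to equidistribute the fibers.
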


\begin{proof}
Lemma \ref{lem:autGamma2} says that every $g\in \mr{Aut}(\Gamma)$ descends to 
$g\in \mr{Aut}(\Gamma,\Gamma_{\mr{s}})$ and moreover defines
an isomorphism from each insertion $\Gamma^{(k)}$ to its image 
under $g$.

Conversely, by Lemma \ref{lem:autGamma1}, for any 
$g\in \mr{Aut}(\Gamma,\Gamma_{\mr{s}})$ and any $k$,
$\psi_g$ sends 
$(h_1^{(k)},h_2^{(k)})$ to $(h_{\sigma(1)}^{(k')},h_{\sigma(2)}^{(k')})$ 
for some $\sigma = \sigma(k) \in S_2$, 
where $k' = k'(k) $ is such that $\Gamma^{(k)}$ is isomorphic to
 $\sigma \star \Gamma^{(k')}$. Any choice of isomorphisms from 
 the $\Gamma^{(k)}$ to the $\sigma \star \Gamma^{(k')}$ defines 
 an extension of $g$ to an automorphism of $\Gamma$. 

Thus any automorphism of $\Gamma$ can be yielded constructively by starting with 
$g \in \mr{Aut}(\Gamma,\Gamma_{\mr{s}})$ and then choosing, for each $k$, an isomorphism
from
$\Gamma^{(k)}$ to $\sigma \star \Gamma^{(k')}$. 
The number of such 
isomorphisms is the same as the number of automorphisms of $\Gamma^{(k)}$, so 
Eq.~\eqref{eqn:sizeAutSkeleton} follows.
\end{proof}

\subsection{The action of $\mr{Aut}(\Gamma_{\mr{s}})$ on $I(\Gamma,\Gamma_{\mr{s}})$}
\label{sec:action}
Finally, we turn to item (3).
Again decompose $\Gamma \in \mf{F}_2^{\mr{1PI}}$ as 
in Eq.~\eqref{eqn:skeletonInsertion}. Notice that Eq.~\eqref{eqn:skeletonInsertion} 
itself defines an element of $I(\Gamma,\Gamma_{\mr{s}})$. Call this element $\iota^*$.

The key observation here is that $\mr{Aut}(\Gamma_{s})$ acts transitively on $I(\Gamma,\Gamma_{\mr{s}})$
and that the stabilizer of any $\iota\in I(\Gamma,\Gamma_{\mr{s}})$ is $\mr{Aut}(\Gamma,\Gamma_{\mr{s}})$.
We define the action as follows.
Let  $g\in \mr{Aut}(\Gamma_{\mr{s}})$, and consider
an element $\iota$ of $I(\Gamma,\Gamma_{\mr{s}})$ specified by a set of ordered 
pairs of half-edges 
$(h_1'^{(1)},h_2'^{(1)}),\ldots,(h_1'^{(K)},h_2'^{(K)})$ in $\Gamma_{\mr{s}}$, 
Then $g \cdot \iota$ is defined to be
the element of $I(\Gamma,\Gamma_{\mr{s}})$ specified by the ordered pairs 
$(\psi_g(h_1'^{(1)}),\psi_g(h_2'^{(1)})),\ldots,(\psi_g (h_1'^{(K)}), \psi_g(h_2'^{(K)}))$.

For example, recall the automorphism of $\Gamma_{\mr{s}}$ of Fig.~\ref{fig:feynmanSigma_decompose_2} (d)
corresponding to the left-right reflection of the `diamond.' The action of this automorphism 
fixes the only element in $I(\Gamma,\Gamma_{\mr{s}})$ represented by 
Fig.~\ref{fig:feynmanSigma_decompose_4}. Meanwhile, it swaps the elements of 
$I(\Gamma,\Gamma_{\mr{s}})$ represented in Fig.~\ref{fig:feynmanSigma_decompose_2} (a) and (b). 
In a slightly more indirect way, it also swaps the elements of $I(\Gamma,\Gamma_{\mr{s}})$ 
represented in Figs.~\ref{fig:feynmanSigma_decompose_6_2} (a) and (b).
Next, consider the automorphism of $\Gamma_{\mr{s}}$ obtained by a swap 
of the two propagator lines in the 
`bubble' at the center of the diamond. The action of this automorphism 
also swaps the elements of $I(\Gamma,\Gamma_{\mr{s}})$ 
represented in Figs.~\ref{fig:feynmanSigma_decompose_6_2} (a) and (b).

\begin{lemma}
\label{lem:autGamma4}
With notation as in the preceding, the action of $\mr{Aut}(\Gamma_{s})$ 
on $I(\Gamma,\Gamma_{\mr{s}})$ is transitive, and 
the stabilizer of $\iota^*$ is $\mr{Aut}(\Gamma,\Gamma_{\mr{s}})$.
\end{lemma}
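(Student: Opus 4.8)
The plan is to read this as an orbit--stabilizer statement and establish the two assertions—transitivity and the identification of the stabilizer of $\iota^*$—separately, letting Lemmas \ref{lem:autGamma1} and \ref{lem:autGamma2} carry most of the weight. First I would dispatch the stabilizer claim by unwinding definitions: an element $g\in \mr{Aut}(\Gamma_{\mr{s}})$ fixes $\iota^*$ exactly when the ordered pairs $(\psi_g(h_1^{(k)}),\psi_g(h_2^{(k)}))_{k=1}^K$ specify the same element of $I(\Gamma,\Gamma_{\mr{s}})$ as the original pairs $(h_1^{(k)},h_2^{(k)})_{k=1}^K$. By the equivalence relation defining $I(\Gamma,\Gamma_{\mr{s}})$, this holds precisely when there is a permutation $\tau\in S_K$ and elements $\sigma(k)\in S_2$, with $\Gamma^{(\tau(k))}$ isomorphic to $\Gamma^{(k)}$ and with $\sigma(k)$ nontrivial only when $\Gamma^{(k)}$ is symmetric, such that $\psi_g$ sends $(h_1^{(k)},h_2^{(k)})$ to $(h_{\sigma(1)}^{(\tau(k))},h_{\sigma(2)}^{(\tau(k))})$. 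I would then observe that, after invoking the orientation convention of Remark \ref{rem:orientation}, this condition is logically identical (taking $k'=\tau(k)$) to the characterization of $\mr{Aut}(\Gamma,\Gamma_{\mr{s}})$ furnished by Lemma \ref{lem:autGamma1}; in particular, when $\sigma(k)$ is nontrivial the convention forces $\Gamma^{(k)}$ and $\Gamma^{(\tau(k))}$ to be genuinely isomorphic and hence symmetric, which matches the clause ``$\Gamma^{(k)}$ isomorphic to $\sigma\star\Gamma^{(k')}$'' in both directions. Hence the stabilizer of $\iota^*$ is exactly $\mr{Aut}(\Gamma,\Gamma_{\mr{s}})$.

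For transitivity, given an arbitrary $\iota\in I(\Gamma,\Gamma_{\mr{s}})$ specified by pairs $(h_1'^{(k)},h_2'^{(k)})$, let $\Gamma'=\Gamma_{\mr{s}}\oplus_{(\ldots)}[\Gamma^{(1)},\ldots,\Gamma^{(K)}]$ be the corresponding diagram, which by the definition of $I$ admits an isomorphism $\Phi:\Gamma'\to\Gamma$. The key point is that both $\Gamma'$ and $\Gamma$ have the \emph{same} labeled diagram $\Gamma_{\mr{s}}$ as skeleton and the same $\Gamma^{(k)}$ as maximal insertions: by the uniqueness in Proposition \ref{prop:selfenergycompose} these data are isomorphism invariants, and both diagrams are literally built from $\Gamma_{\mr{s}}$. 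I would then run the argument of Lemma \ref{lem:autGamma2}—which applies verbatim to an isomorphism between two diagrams sharing a skeleton, not merely to a self-isomorphism—to conclude that $\Phi$ carries maximal insertions to maximal insertions and the skeleton copy inside $\Gamma'$ to the skeleton copy inside $\Gamma$, hence collapses to an automorphism $g$ of $\Gamma_{\mr{s}}$. Tracking the attachment half-edges through the collapse shows that $g$ sends each pair $(h_1'^{(k)},h_2'^{(k)})$ to $(h_{\sigma(1)}^{(\tau(k))},h_{\sigma(2)}^{(\tau(k))})$ for suitable $\tau,\sigma$, i.e.\ $g\cdot\iota=\iota^*$; since $\mr{Aut}(\Gamma_{\mr{s}})$ is a group, $g^{-1}\cdot\iota^*=\iota$, which is transitivity.

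I expect the main obstacle to be bookkeeping rather than conceptual. The delicate points are justifying that $\Phi$ descends to a genuine automorphism of the \emph{labeled} skeleton $\Gamma_{\mr{s}}$ (so that the induced $g$ truly lies in $\mr{Aut}(\Gamma_{\mr{s}})$ and acts as the action was defined), and threading the two clauses of the equivalence relation on $I(\Gamma,\Gamma_{\mr{s}})$—reordering of the $K$ pairs and within-pair swaps for symmetric insertions—consistently through both the stabilizer computation and the collapse in the transitivity argument. The orientation convention of Remark \ref{rem:orientation} is exactly what reconciles the ``up to external relabeling'' isomorphisms produced by $\Phi$ with the honest isomorphisms demanded by Lemma \ref{lem:autGamma1}, so I would take care to invoke it precisely at the points where an external swap $\sigma\in S_2$ appears. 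Once both pieces are assembled the lemma is immediate, and in particular it yields the orbit--stabilizer count $r_\Gamma=\vert I(\Gamma,\Gamma_{\mr{s}})\vert=\vert\mr{Aut}(\Gamma_{\mr{s}})\vert/\vert\mr{Aut}(\Gamma,\Gamma_{\mr{s}})\vert$ that the subsequent development of Eq.~\eqref{eqn:redundancy} will rely on.
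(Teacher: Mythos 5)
Your proposal is correct and follows essentially the same route as the paper: the stabilizer identification is obtained by matching the equivalence relation on $I(\Gamma,\Gamma_{\mr{s}})$ against the characterization in Lemma \ref{lem:autGamma1} via Remark \ref{rem:orientation}, and transitivity comes from collapsing the maximal insertions of an isomorphism between two insertion-built diagrams to get an automorphism of $\Gamma_{\mr{s}}$. The only cosmetic difference is that you route transitivity through $\iota^*$ and compose, while the paper compares two arbitrary elements directly; this changes nothing of substance.
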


\begin{proof}
First we establish that the action is transitive. To this end, consider arbitrary elements 
$\iota_1, \iota_2 \in I(\Gamma,\Gamma_{\mr{s}})$, i.e., two different ways of making insertions 
in $\Gamma_{\mr{s}}$ to yield diagrams $\Gamma_1,\Gamma_2$, respectively, that are 
each isomorphic to $\Gamma$. Our isomorphism from 
$\Gamma_1$ to $\Gamma_2$ descends 
(by collapsing the maximal insertions) to an isomorphism from $\Gamma_{\mr{s}}$ to itself, 
i.e., an automorphism $g \in \mathrm{Aut}(\Gamma_{\mr{s}})$, 
and evidently 
this automorphism satisfies $g\cdot \iota_1 = \iota_2$. This establishes transitivity.

Now we turn to the claim about the stabilizer. Let $g\in \mr{Aut}(\Gamma,\Gamma_{s})$. 
We want to show that $g \cdot \iota^* = \iota^*$. 
By Lemma \ref{lem:autGamma1}, there exists $\tau \in S_K$ and 
$\sigma_k \in S_2$ for $k=1,\ldots,K$ such that 
$\Gamma^{(k)}$ is isomorphic 
to $\sigma_k \star \Gamma^{(\tau(k))}$ and
$\psi_g$ sends 
$(h_1^{(k)},h_2^{(k)})$ to $(h_{\sigma_k(1)}^{(\tau(k))},h_{\sigma_k(2)}^{(\tau(k))})$. 
Hence $g\cdot \iota^*$ is specified by the ordered pairs 
$(h_{\sigma_k(1)}^{(\tau(k))},h_{\sigma_k(2)}^{(\tau(k))})$, $k=1,\ldots,K$.

By Remark \ref{rem:orientation}, since $\Gamma^{(k)}$ and $\Gamma^{(\tau(k))}$ 
are isomorphic up to external labeling, they are in fact isomorphic. But since 
 $\Gamma^{(k)}$ and $\sigma_k \star \Gamma^{(\tau(k))}$ are isomorphic, 
 this means that in turn $\sigma_k \star \Gamma^{(\tau(k))}$ is isomorphic to 
 $\Gamma^{(\tau(k))}$, i.e., $\sigma_k \in S(\Gamma^{(k)})$ for all $k$. 
 
Then recalling the equivalence relation used to define $I(\Gamma,\Gamma_{\mr{s}})$, we 
see that $g \cdot \iota^*$ is equivalently specified by the ordered pairs 
$(h_{1}^{(k)},h_{2}^{(k)})$, $k=1,\ldots,K$, i.e., $g\cdot \iota^* = \iota^*$.

Conversely,  suppose that $g \cdot \iota^* = \iota^*$ for some $g\in \mathrm{Aut}(\Gamma_{\mr{s}})$. 
Then there exist $\tau \in S_K$ and $\sigma_k \in S(\Gamma^{(k)})$ for $k=1,\ldots,K$ 
such that $\psi_g$ sends $(h_1^{(k)},h_2^{(k)})$ to $(h_{\sigma_k(1)}^{(\tau(k))},h_{\sigma_k(2)}^{(\tau(k))})$ 
for $k=1,\ldots,K$, and moreover $\Gamma^{(k)}$ is isomorphic to $\Gamma^{(\tau(k))}$ for all
$k$.
Since $\sigma_k \in S(\Gamma^{(k)})$, this means 
that $\Gamma^{(k)}$ is isomorphic to $\sigma_k \star \Gamma^{(\tau(k))}$ for all $k$. 
But then by Lemma \ref{lem:autGamma1} we have that $g\in \mr{Aut}(\Gamma,\Gamma_{\mr{s}})$.
\end{proof}

Then the orbit-stabilizer theorem, together with Lemmas \ref{lem:autGamma3} 
and \ref{lem:autGamma4}, yields the following corollary:

\begin{corollary}
\label{cor:redundancy}
For $\Gamma \in \mf{F}_2^{\mr{1PI}}$, the redundancy factor of $\Gamma$ is given by 
\[
r_{\Gamma}
= \frac{ S_{\Gamma_{\mr{s}}} \cdot \prod_{k=1}^K S_{\Gamma^{(k)}} }{ S_{\Gamma} }.
\]
\end{corollary}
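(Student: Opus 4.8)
The plan is to read the result off directly from the orbit-stabilizer theorem, since the genuine combinatorial content has already been packaged into Lemmas \ref{lem:autGamma3} and \ref{lem:autGamma4}; what remains is a short counting identity. First I would recall that by definition $r_{\Gamma} = \vert I(\Gamma,\Gamma_{\mr{s}}) \vert$, so the goal is simply to compute the cardinality of the set of ways of producing $\Gamma$ from its skeleton.

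The central tool is Lemma \ref{lem:autGamma4}, which asserts that $\mr{Aut}(\Gamma_{\mr{s}})$ acts transitively on $I(\Gamma,\Gamma_{\mr{s}})$ with the stabilizer of the distinguished element $\iota^*$ (the one specified by the decomposition \eqref{eqn:skeletonInsertion} itself) equal to $\mr{Aut}(\Gamma,\Gamma_{\mr{s}})$. Applying the orbit-stabilizer theorem to this action, and using that $\vert \mr{Aut}(\Gamma_{\mr{s}}) \vert = S_{\Gamma_{\mr{s}}}$ by the definition of the symmetry factor, I would obtain
\[
r_{\Gamma} = \vert I(\Gamma,\Gamma_{\mr{s}}) \vert = \frac{\vert \mr{Aut}(\Gamma_{\mr{s}}) \vert}{\vert \mr{Aut}(\Gamma,\Gamma_{\mr{s}}) \vert} = \frac{S_{\Gamma_{\mr{s}}}}{\vert \mr{Aut}(\Gamma,\Gamma_{\mr{s}}) \vert}.
\]

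Next, I would eliminate $\vert \mr{Aut}(\Gamma,\Gamma_{\mr{s}}) \vert$ in favor of symmetry factors using Lemma \ref{lem:autGamma3}, which gives $S_{\Gamma} = \vert \mr{Aut}(\Gamma,\Gamma_{\mr{s}}) \vert \cdot \prod_{k=1}^K S_{\Gamma^{(k)}}$, hence $\vert \mr{Aut}(\Gamma,\Gamma_{\mr{s}}) \vert = S_{\Gamma} / \prod_{k=1}^K S_{\Gamma^{(k)}}$. Substituting into the previous display yields
\[
r_{\Gamma} = \frac{S_{\Gamma_{\mr{s}}} \cdot \prod_{k=1}^K S_{\Gamma^{(k)}}}{S_{\Gamma}},
\]
which is exactly the claimed identity.

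There is no substantial obstacle remaining at this stage, precisely because the delicate work—verifying transitivity, identifying the stabilizer of $\iota^*$ with $\mr{Aut}(\Gamma,\Gamma_{\mr{s}})$, and relating $\mr{Aut}(\Gamma,\Gamma_{\mr{s}})$ to the symmetry factors of the skeleton and the insertions—was already discharged in Lemmas \ref{lem:autGamma3} and \ref{lem:autGamma4}. The one point requiring a moment of care is to ensure that both lemmas are applied to the \emph{same} externally labeled decomposition \eqref{eqn:gammaDecompose}, subject to the convention of Remark \ref{rem:orientation}, so that the stabilizer appearing in the orbit-stabilizer computation is literally the group whose order is computed in Lemma \ref{lem:autGamma3}. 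Granting this consistency, the chain of equalities above closes immediately.
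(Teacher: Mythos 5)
Your proposal is correct and follows exactly the paper's own argument: apply the orbit-stabilizer theorem to the transitive action from Lemma \ref{lem:autGamma4} to get $r_\Gamma = S_{\Gamma_{\mr{s}}} / \vert \mr{Aut}(\Gamma,\Gamma_{\mr{s}})\vert$, then eliminate $\vert \mr{Aut}(\Gamma,\Gamma_{\mr{s}})\vert$ via Lemma \ref{lem:autGamma3}. Nothing further is needed.
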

\begin{proof}
Applying the orbit-stabilizer theorem via Lemma \ref{lem:autGamma4} we obtain 
\[
r_{\Gamma} = \vert I(\Gamma,\Gamma_{\mr{s}}) \vert 
= \frac{\vert \mr{Aut}(\Gamma_{\mr{s}}) \vert }{\vert \mr{Aut}(\Gamma, \Gamma_{\mr{s}}) \vert}
= \frac{ S_{\Gamma_{\mr{s}}} }{\vert \mr{Aut}(\Gamma, \Gamma_{\mr{s}}) \vert}.
\]
The result then follows from Lemma \ref{lem:autGamma3}.
\end{proof}

\subsection{Bold diagrammatic expansion for the self-energy}
\label{sec:finallyBold}
At last we can prove the bold diagrammatic expansion for the self-energy, 
stated as follows:

\begin{theorem}
\label{thm:boldExpand}
For $1\leq i,j \leq N$, we have the equality of formal power series (in the coupling constant)
\begin{equation}
\label{eqn:boldExpand}
\Sigma_{ij} = \sum_{\Gamma_{\mr{s}} \in \mf{F}_2^{\mr{2PI}}} \frac{\mathbf{F}_{\Gamma_{\mr{s}}}(i,j)}{S_{\Gamma_{\mr{s}}}},
\end{equation}
where $\Sigma_{ij}$ is interpreted as a power series via Theorem \ref{thm:Sigmaexpand} and 
where, 
for every $\Gamma_{s} \in \mf{F}_2^{\mr{2PI}}$, 
the expression $\mathbf{F}_{\Gamma_{s}}(i,j)$ is interpreted as the power series obtained by 
applying the Feynman rules for $\Gamma_{s}$ with propagator $G$, where $G$ is in turn interpreted 
as a formal power series via Theorem \ref{thm:Gexpand}.
\end{theorem}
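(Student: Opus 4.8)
The plan is to take the right-hand side of \eqref{eqn:boldExpand}, substitute the bare series for $G$ from Theorem \ref{thm:Gexpand} into each bold propagator line of every skeleton diagram, and then reorganize the resulting sum of bare amplitudes as a sum over bare self-energy diagrams $\Gamma \in \mf{F}_2^{\mr{1PI}}$. The goal is to show that after this reorganization each $\Gamma$ is accounted for with total pre-factor $S_\Gamma^{-1}$, so that the right-hand side coincides term-by-term with the bare expansion of Theorem \ref{thm:Sigmaexpand}. Since at each fixed order in the coupling constant only finitely many skeletons and insertions contribute, the interchange of summations is harmless at the level of formal power series, and it suffices to show that each bare self-energy diagram $\Gamma$ is assigned total weight $S_\Gamma^{-1}$.

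First I would carry out the substitution explicitly. Writing $G_{ab} = \sum_{\Delta \in \mf{F}^{\mr{c}}_2} S_\Delta^{-1} F_\Delta(a,b)$ and expanding the product of propagator factors in $\mathbf{F}_{\Gamma_{\mr{s}}}$ over the bold lines yields
\begin{equation*}
\frac{\mathbf{F}_{\Gamma_{\mr{s}}}(i,j)}{S_{\Gamma_{\mr{s}}}} = \sum_{(\Delta_e)} \frac{1}{S_{\Gamma_{\mr{s}}} \prod_e S_{\Delta_e}} \, F_{\Gamma(\{\Delta_e\})}(i,j),
\end{equation*}
where the sum runs over assignments $e \mapsto \Delta_e \in \mf{F}^{\mr{c}}_2$ of a (possibly trivial, i.e., equal to the bare propagator) Green's function diagram to each line $e$ of $\Gamma_{\mr{s}}$, and $\Gamma(\{\Delta_e\})$ denotes the diagram obtained by inserting $\Delta_e$ at each line. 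Here I would use the multiplicativity of the Feynman amplitude under insertion (exactly as in the identity $F_\Gamma = G^0 F_{\Gamma'} F_{\Gamma''}$ in the proof of Theorem \ref{thm:Sigmaexpand}) to identify each bracketed factor with the bare amplitude of the inserted diagram. By the converse direction of the skeleton decomposition (the result of inserting Green's function diagrams into a skeleton is 1PI), every $\Gamma(\{\Delta_e\})$ lies in $\mf{F}^{\mr{1PI}}_2$, so the doubly-indexed sum over $(\Gamma_{\mr{s}},(\Delta_e))$ can be grouped by the isomorphism class $\Gamma \in \mf{F}^{\mr{1PI}}_2$ of $\Gamma(\{\Delta_e\})$.

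Next I would compute, for a fixed $\Gamma \in \mf{F}^{\mr{1PI}}_2$ with skeleton decomposition $\Gamma = \Gamma_{\mr{s}} \oplus [\Gamma^{(1)},\ldots,\Gamma^{(K)}]$ via Proposition \ref{prop:selfenergycompose}, the total coefficient of $F_\Gamma(i,j)$. By uniqueness of the skeleton, only the term with $\Gamma_{\mr{s}} \in \mf{F}_2^{\mr{2PI}}$ equal to the skeleton of $\Gamma$ contributes; moreover I would argue (using that $\Gamma_{\mr{s}}$ is 2PI, so an inserted connected diagram cannot merge with skeleton structure to form a strictly larger insertion) that for any assignment with $\Gamma(\{\Delta_e\}) \cong \Gamma$ the nontrivial $\Delta_e$ are precisely the maximal insertions of $\Gamma$, hence isomorphic to $\Gamma^{(1)},\ldots,\Gamma^{(K)}$. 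Consequently $\prod_e S_{\Delta_e} = \prod_{k=1}^K S_{\Gamma^{(k)}}$ is the same constant for every contributing assignment (using $S_{\sigma\star\Delta} = S_\Delta$). The crux is then a bijection between the set of contributing assignments and $I(\Gamma,\Gamma_{\mr{s}})$: an oriented choice of skeleton line together with a choice of external labeling of the inserted class---i.e.\ the choice between $\Gamma^{(k)}$ and $\sigma\star\Gamma^{(k)}$ for a non-symmetric insertion---encodes exactly the ordered pair $(h_1^{\prime(k)},h_2^{\prime(k)})$ appearing in the definition of $I(\Gamma,\Gamma_{\mr{s}})$ in section \ref{sec:ways}, while the equivalences defining $I$ (reordering isomorphic insertions, and within-pair swaps for symmetric insertions) correspond precisely to the fact that an assignment records only unordered iso-classes per line and that a symmetric insertion admits a single class. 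Thus the number of contributing assignments equals $\vert I(\Gamma,\Gamma_{\mr{s}})\vert = r_\Gamma$, giving total coefficient $\frac{r_\Gamma}{S_{\Gamma_{\mr{s}}} \prod_{k=1}^K S_{\Gamma^{(k)}}}$.

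Finally, Corollary \ref{cor:redundancy} gives $r_\Gamma = S_{\Gamma_{\mr{s}}} \prod_{k=1}^K S_{\Gamma^{(k)}} / S_\Gamma$, so the total coefficient of $F_\Gamma(i,j)$ is exactly $S_\Gamma^{-1}$; summing over $\Gamma \in \mf{F}^{\mr{1PI}}_2$ and invoking Theorem \ref{thm:Sigmaexpand} then yields \eqref{eqn:boldExpand}. I expect the main obstacle to be the bijection in the third step: carefully matching the substitution-based count of assignments (which sums over isomorphism classes in $\mf{F}^{\mr{c}}_2$ with their symmetry factors and a fixed orientation convention on the skeleton lines) with the abstractly defined set $I(\Gamma,\Gamma_{\mr{s}})$, including the correct treatment of orientation for symmetric versus non-symmetric insertions and the verification that the inserted connected diagrams are genuinely the maximal insertions of the assembled diagram.
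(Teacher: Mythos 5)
Your proposal is correct and follows essentially the same route as the paper's proof: substitute the bare expansion of $G$ into each bold line of a skeleton, observe that each $\Gamma \in \mf{F}_2^{\mr{1PI}}$ with skeleton $\Gamma_{\mr{s}}$ is thereby produced $r_\Gamma = \vert I(\Gamma,\Gamma_{\mr{s}})\vert$ times with weight $F_\Gamma(i,j)/(S_{\Gamma_{\mr{s}}}\prod_k S_{\Gamma^{(k)}})$, and invoke Corollary \ref{cor:redundancy} together with Theorem \ref{thm:Sigmaexpand}. The paper's proof is in fact terser than yours---it asserts the $r_\Gamma$-fold counting directly---so the extra care you devote to the bijection between propagator-line assignments and $I(\Gamma,\Gamma_{\mr{s}})$ only makes the argument more explicit, not different.
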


\begin{remark}
\label{rem:boldExpand}
The interpretation of the bold diagrammatic expansion of the self-energy is at this point
somewhat cryptic. For the moment it can only be interpreted as a reorganization of 
the terms in the asymptotic series for the self-energy. However, since the terms on the 
right-hand side of 
Eq.~\eqref{eqn:boldExpand} depend only on $G$ and $v$ (and not on $G^0$), one 
might conjecture based on the expansion that the self-energy depends only on 
$G,v$. This is indeed a major goal of Part II, where we will indeed construct the self-energy (non-perturbatively) 
as a (matrix-valued) functional $\Sigma[G,v]$ of $G$ and $v$ only, and interpret the 
bold diagrammatic expansion as an asymptotic series in the coupling constant for 
the self-energy at fixed $G$, with terms given by the $\Sigma^{(k)}[G,v]$ to be specified below. 
The non-perturbative perspective will guarantee 
the existence of such an asymptotic series, but Theorem \ref{thm:boldExpand} 
will be used to show that this series is in fact given by Eq.~\eqref{eqn:boldExpand}.
\end{remark}

\begin{proof}
In the following all expressions should be suitably interpreted as in the statement of the theorem. 
For $\Gamma_{\mr{s}} \in \mf{F}_2^{\mr{2PI}}$, the series $\mathbf{F}_{\Gamma_{\mr{s}}}(i,j)$ 
counts $r_{\Gamma}$ times every self-energy 
diagram $\Gamma \in \mf{F}_2^{\mr{1PI}}$ with skeleton isomorphic to $\Gamma_{\mr{s}}$, 
each with factor $F_{\Gamma}(i,j) / \prod_{k=1}^K S_{\Gamma^{(k)}}$, where the 
$\Gamma^{(k)}$ are the maximal insertions of $\Gamma$. Then by 
Corollary \ref{cor:redundancy}, $\mathbf{F}_{\Gamma_{\mr{s}}}(i,j) / S_{\Gamma_{\mr{s}}}$
equals the sum of $F_{\Gamma}(i,j) / S_{\Gamma}$ 
over self-energy diagrams $\Gamma$ with skeleton isomorphic to $\Gamma_{\mr{s}}$.
Therefore the right-hand side of Eq.~\eqref{eqn:boldExpand} 
is the sum of $F_{\Gamma}(i,j) / S_{\Gamma}$ over all self-energy diagrams $\Gamma$.
\end{proof}

Following Remark \ref{rem:boldExpand}, each term in the bold 
diagrammatic expansion can be thought of as a functional of $G$ and $v$.  
We indicate by $\Sigma^{(k)}[G,v]$ the $k$-th order bold contribution to the 
self-energy, i.e., the contribution of the terms in the diagrammatic expansion 
that are of order $k$ in the interaction $v$.
In particular, the `first-order' bold contribution to the self-energy is
given by 
\begin{equation}
  \left(\Sigma^{(1)}[G,v]\right)_{ij} = 
  -\frac12 \left(\sum_{k} v_{ik} G_{kk}\right)
  \delta_{ij} - v_{ij} G_{ij}.
  \label{eqn:sigma1bold}
\end{equation}

These two terms are represented in Fig.~\ref{fig:feynmanSigmaG2} (a), (b), and 
we denote them by $\Sigma_{\mr{H}}[G,v]$ and $\Sigma_{\mr{F}}[G,v]$ for 
`Hartree' and `Fock,' respectively. 
 The associated diagrams happen to be the same as the first-order bare self-energy 
diagrams, but with thin lines replaced by bold lines.

\begin{figure}[h]
  \begin{center}
    \includegraphics[width=0.55\textwidth]{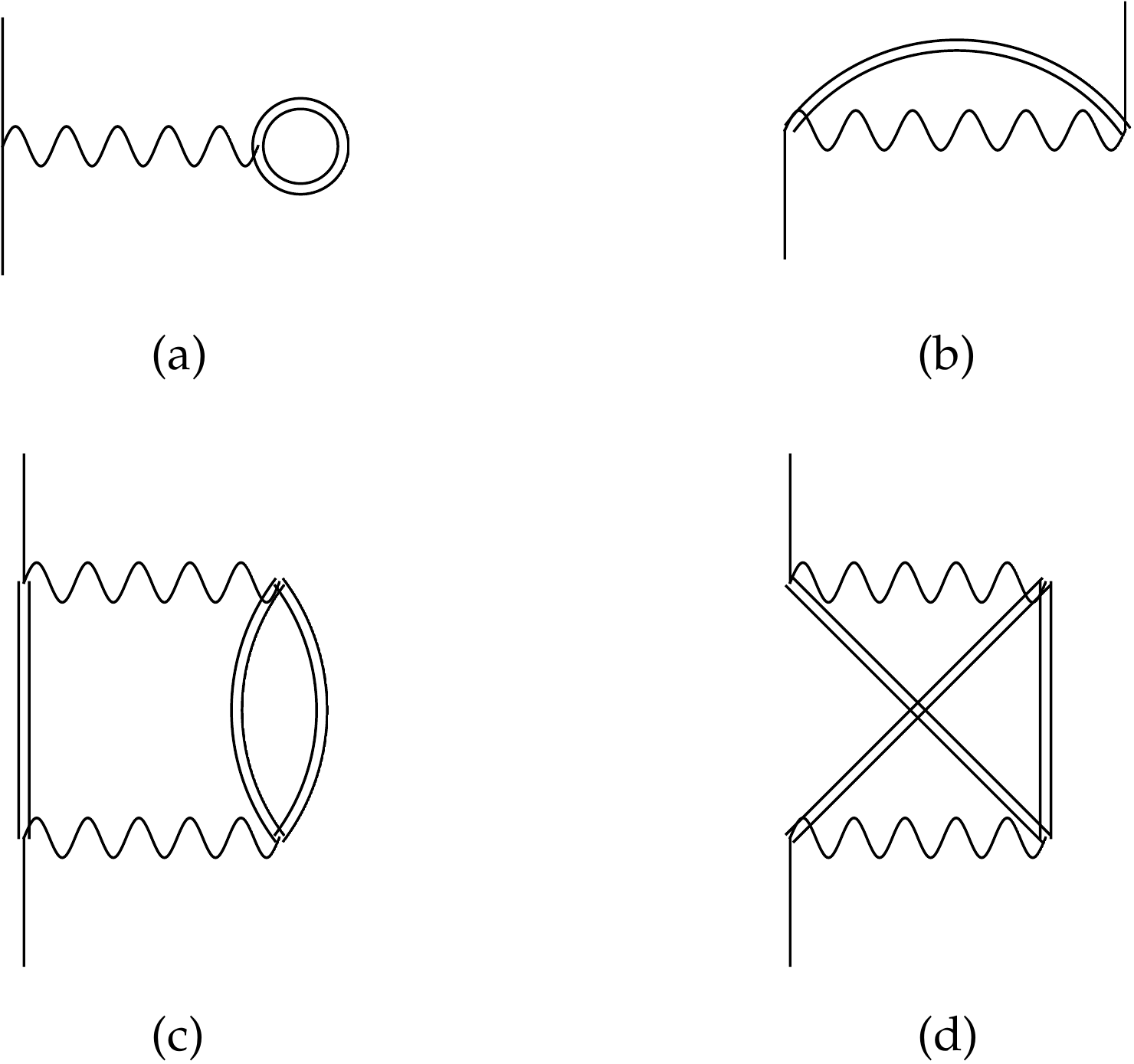}
  \end{center}
  \caption{Bold diagrammatic expansion of the self-energy up to second order (external labelings omitted).}
  \label{fig:feynmanSigmaG2}
\end{figure}

The approximation $\Sigma[G,v]\approx \Sigma^{(1)}[G,v]$ 
is known as the \emph{Hartree-Fock approximation}. One can likewise approximate
$\Sigma[G,v]\approx \Sigma^{(1)}[G,v]+\Sigma^{(2)}[G,v]$, where the second-order 
contribution can be written 
\begin{equation}
  \left(\Sigma^{(2)}[G]\right)_{ij} = 
  \frac12 G_{ij} \sum_{k,l} v_{ik} G^2_{kl}
  v_{lj} + \sum_{k,l} v_{ik} G_{kj} G_{kl} G_{li} v_{jl},
  \label{eqn:sigma2bold}
\end{equation}
and the second-order bold diagrams are shown in 
Fig.~\ref{fig:feynmanSigmaG2} (c), (d). The latter is known as the 
`second-order exchange,' while the former is an example of 
what is called a `ring diagram,' for 
reasons to be made clear later.

\subsection{Green's function methods}
\label{sec:greenMethod}
A \emph{Green's function method} using bold diagram expansion is a method for computing the Green's function 
via an ansatz for the self-energy $\Sigma_{\mr{ans}}[G,v] \approx \Sigma[G,v]$. 
This ansatz should be viewed as some sort of approximation of the full self-energy, 
usually consisting of diagrammatic contributions meant to incorporate certain physical 
effects.

After choosing an ansatz, one substitutes $\Sigma \leftarrow \Sigma_{\mr{ans}}[G,v]$ in the 
Dyson equation \eqref{eqn:Sigmadef} and attempts to solve it 
\emph{self-consistently} for $G$. In other words, one solves
\[
G = (A- \Sigma_{\mr{ans}} [G,v] )^{-1}
\]
for $G$, where $A$ and $v$ are specified in advance.

The most immediate technique for solving this system is a fixed-point iteration
that we refer to as  
the \emph{Dyson iteration}, which is defined by the update 
\[
G^{(k+1)} = (A- \Sigma_{\mr{ans}} [G^{(k)},v] )^{-1}.
\]
This iteration can be combined with damping techniques to yield better 
convergence in practice, but in general the convergence behavior of 
the Dyson iteration (even with damping) 
may depend strongly on the ansatz for the self-energy.

The Green's function method obtained by the ansatz 
$\Sigma_{\mr{ans}}[G,v] = \Sigma^{(1)}[G,v]$ is known as the 
\emph{Hartree-Fock method}. One is likewise free to consider higher-order
approximations for the self-energy. However it should be 
emphasized that it is not obvious \emph{a priori} which order of 
approximation is optimal for a given problem specification. 
Some numerical comparison of 
methods will be undertaken in Section~\ref{sec:numer}, but further 
comparisons are a subject of detailed study to be left for future work.

It should be noted that once the Green's function is computed, 
it can be used to compute the internal energy of the system via the Galitskii-Migdal 
formula of Theorem \ref{thm:migdal}. It can also be used to compute the 
Gibbs free energy via the Luttinger-Ward formalism in a way to be 
explained below. More remarkably, the framework of Green's function methods 
can even be used, as in the dynamical mean field theory
(DMFT)~\cite{GeorgesKotliarKrauthEtAl1996}, to compute effective Hamiltonians 
on smaller subsystems (`fragments') that self-consistently account for their interaction 
with their environments.  This will be studied in future work.

\subsubsection{The GW approximation}
\label{sec:GW}
In order to provide a broader perspective on both Green's function methods and 
diagrammatic manipulations, we include here a diagrammatic derivation of the GW 
approximation~\cite{Hedin1965} for the self-energy, which corresponds 
to an ansatz for the self-energy yielded by a further summation over an \emph{infinite} subset of the 
bold diagrams. The GW method is the corresponding Green's function method.

This summation, which is represented graphically in Fig.~\ref{fig:GWsum}, includes 
the Hartree diagram, together with all of the so-called ring diagrams.

\begin{figure}[h]
  \begin{center}
    \includegraphics[width=0.9\textwidth]{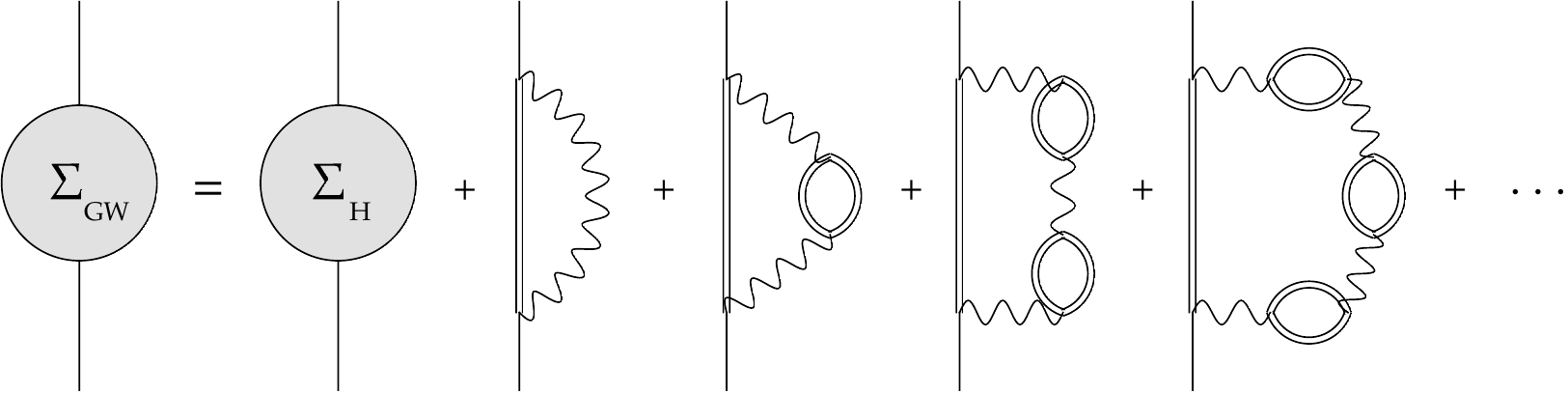}
  \end{center}
  \caption{Diagrammatic depiction of the GW self-energy.}
  \label{fig:GWsum}
\end{figure}

The Fock exchange diagram of Fig.~\ref{fig:feynmanSigmaG2} (b) can be thought of as 
the `zeroth' ring diagram, and Fig.~\ref{fig:feynmanSigmaG2} (c) is the first ring diagram. 
Notice that the $k$-th ring diagram $\Gamma_k$ has a symmetry factor of 
$S_{\Gamma_k} = 2^k$, with a factor of 
2 deriving from the each symmetry that exchanges the two propagators in one of the  
$k$ `bubbles.'

Furthermore, the $k$-the ring diagram has Feynman amplitude given by 
\[
\mathbf{F}_{\Gamma_k}(i,j) = - \left([-v (G\odot G)]^k v \right)_{ij} G_{ij}, 
\]
or equivalently, 
\[
\mathbf{F}_{\Gamma_k}= - G \odot \left([-v (G\odot G)]^k v \right), 
\]
where `$\odot$' indicates the entrywise matrix product.

Then formally summing the geometric series 
we obtain 
\[
\sum_{k=0}^\infty \frac{\mathbf{F}_{\Gamma_k}(i,j)}{S_{\Gamma_k}} 
=  - G \odot \left( \left[ I + \frac{1}{2} v (G\odot G) \right]^{-1}  v \right),
\]
where the factor $\frac12$ arises from the symmetry factor.
Incorporating the Hartree term, we arrive at the GW ansatz for the self-energy:
\[
\Sigma_{\mr{GW}}[G,v] = \Sigma_{\mr{H}}[G,v] - G \odot W[G,v],
\]
where 
\[
W[G,v] := \left( \left[ I + \frac{1}{2} v (G\odot G) \right]^{-1}  v \right) = \left[ v^{-1} + \frac{1}{2} (G\odot G) \right]^{-1},
\]
is known as the \emph{screened Coulomb interaction}. Thus the GW 
self-energy (whose name derives from the $G\odot W$ term appearing therein) 
looks very much like the Hartree-Fock self energy, but with the Fock exchange 
replaced by a \emph{screened exchange}, in which $W$ assumes the role of $v$.

\subsection{A preview of the Luttinger-Ward formalism}
\label{sec:LWpre}

There is in fact a more fundamental formalism underlying the 
self-energy ansatzes and Green's function methods outlined above, 
which can be recovered from ansatzes for the so-called 
\emph{Luttinger-Ward (LW) functional}.

It will turn out (as will be demonstrated in Part II) that the 
exact self-energy $\Sigma[G,v]$, viewed as a matrix-valued functional of 
$G$ for fixed $v$, can be written as the \emph{matrix derivative} of a scalar-valued 
functional of $G$, as in 
\begin{equation}
\label{eqn:LWpre}
\Sigma[G,v] = \frac{\partial \Phi}{\partial G} [G,v].
\end{equation}
Here $\Phi[G,v]$ is the LW functional, which additionally  
satisfies $\Phi[0,v] = 0$.

We must include some technical commentary to make precise sense of 
Eq.~\eqref{eqn:LWpre}. In fact, $\Phi[\,\cdot\,,v]$ should be thought of as a 
function $\pd \ra \R$, where $\pd$ is the set of symmetric positive definite 
$N\times N$ matrices. Then the derivative $\frac{\partial}{\partial G}$ 
should be defined in terms of variations \emph{within} $\pd$. Letting 
$E^{(ij)} \in \symm$ be defined by $E^{(ij)}_{kl} = \delta_{ik}\delta_{jl} + \delta_{il}\delta_{jk}$, 
we then define $\frac{\partial}{\partial G} := \left(\frac{\partial}{\partial G_{ij}}\right) := \frac12 \left(D_{E^{(ij)}}\right)$, 
where $D_{E^{(ij)}}$ indicates the directional derivative in the direction 
$E^{(ij)}$. If $f : \pd \ra \R$ is obtained by the restriction of a function 
$f: \R^{N\times N} \ra \R$ that is specified by a formula $X\mapsto f(X)$ 
in which the roles of $X_{kl}$ and $X_{lk}$ are the same for all $l,k$, then 
$\frac{\partial}{\partial G}$ simply coincides with the usual matrix 
derivative. This is the basic scenario underlying various computations below.

The 
LW functional relates to the free energy in the following manner.
Consider the free energy as a functional of $A$ and $v$ via 
$\Omega = \Omega[A,v]$. Then for $A$ and $v$ that  
yield Green's function $G$, we will derive in Part II the relation 
\begin{equation}
  \Omega[A,v] = \frac12 \Tr[AG] -\frac12 \Tr \log G - \frac12 \left( \Phi[G,v] + \Phi_0 \right),
  \label{eqn:omegaLW}
\end{equation}
where $\Phi_0 = N \log(2\pi e)$ is a constant.

Moreover, much like $\Sigma[G,v]$, for fixed $G$ the LW functional $\Phi[G,v]$ 
will admit an asymptotic series in the coupling constant with terms denoted 
$\Phi^{(k)}[G,v]$. It will follow, on the basis of Eq.~\eqref{eqn:LWpre}, that 
\begin{equation}
\label{eqn:PhiBold}
\Phi^{(k)}[G,v] = \frac{1}{2k} \mr{Tr} \left( G \Sigma^{(k)}[G,v] \right).
\end{equation}

Therefore $\Phi$ admits a bold diagrammatic expansion obtained 
by linking the external 
half-edges of every skeleton diagram with a bold propagator to obtain 
a closed diagram, as in Fig.~\ref{fig:feynmanSigmaG2}.

\begin{figure}[h]
  \begin{center}
    \includegraphics[width=0.55\textwidth]{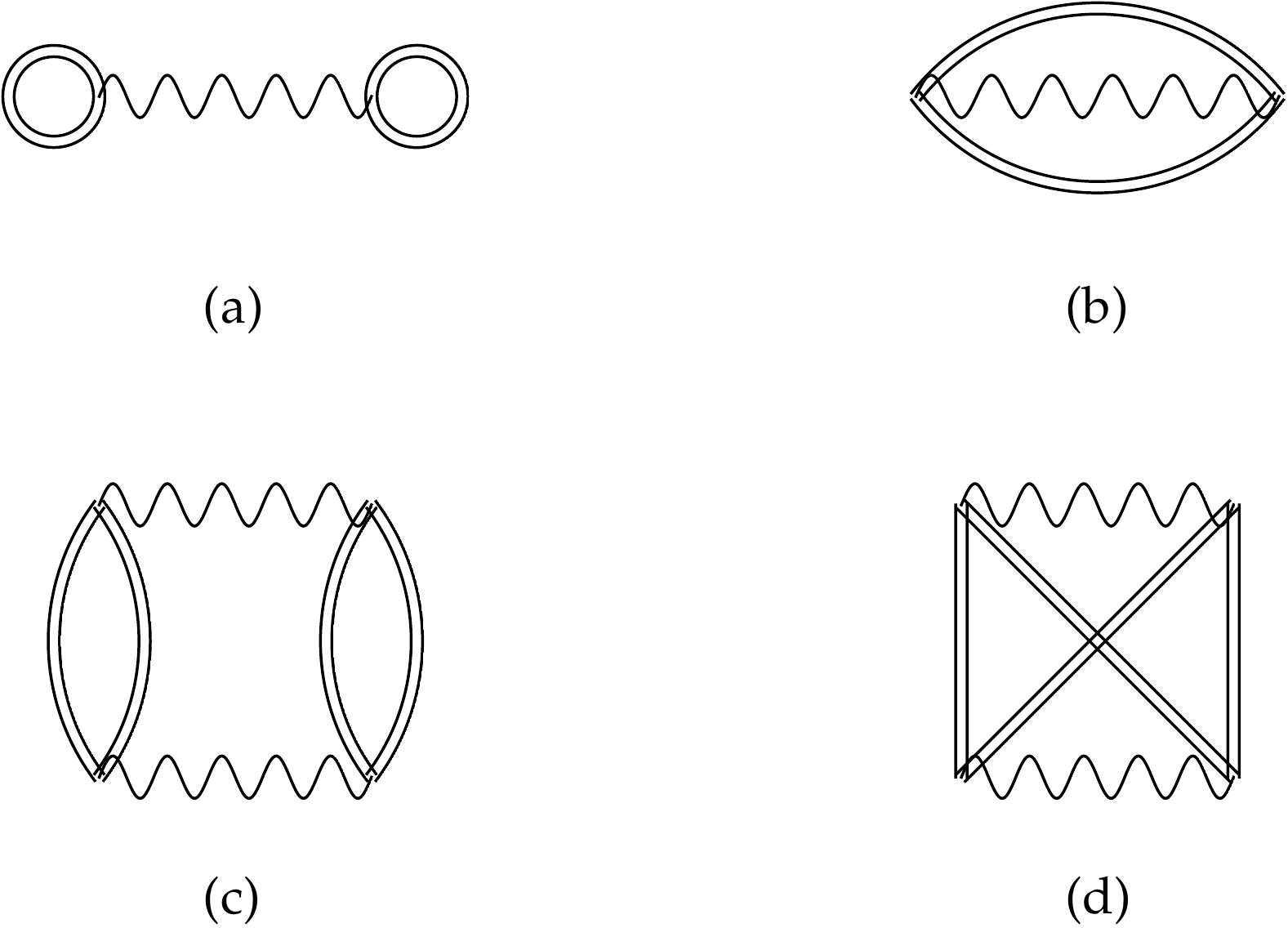}
  \end{center}
  \caption{Bold diagrammatic expansion of the LW functional up to second order.}
  \label{fig:feynmanLW2}
\end{figure}

It is important to realize that the pre-factors for these diagrams are obtained 
rather differently than the diagrams we have already seen (though the 
Feynman amplitudes are computed as usual). Indeed, for each 
LW diagram, one must sum over the prefactors of all skeleton diagrams 
from which it can be produced, then divide by $2k$, where $k$ is the number of 
interaction lines.

For example, we obtain 
\begin{equation}
\label{eqn:LW1bold}
\Phi^{(1)}[G,v] = - \frac{1}{4} \sum_{i,j} v_{ij} G_{ii} G_{jj} - \frac12 \sum_{i,j} v_{ij} G_{ij} G_{ij}
\end{equation}
and 
\begin{equation}
\label{eqn:LW2bold}
  \Phi^{(2)}[G] = 
  \frac18 \sum_{i,j,k,l} G_{ij}^2 v_{ik} G^2_{kl}
  v_{lj} + \frac14 \sum_{i,j,k,l} v_{ik} v_{jl} G_{ij} G_{kj} G_{kl} G_{li} ,
\end{equation}
for the first- and second-order contributions from Eqs.~\eqref{eqn:sigma1bold}
and \eqref{eqn:sigma2bold}, together with Eq.~\eqref{eqn:PhiBold}. Moreover 
we denote by $\Phi_{\mr{H}}[G,v]$ and $\Phi_{\mr{F}}[G,v]$ the first and second 
terms of Eq.~\eqref{eqn:LW1bold}, for `Hartree' and `Fock,' respectively.

\subsubsection{$\Phi$-derivability}

An ansatz $\Sigma_{\mr{ans}}[G,v]$ for $\Sigma[G,v]$ specifies a Green's function method. 
Among these ansatzes, some can be written as
matrix derivatives, i.e., can be viewed 
as being obtained from an ansatz $\Phi_{\mr{ans}}[G,v]$ for the  Luttinger-Ward functional via
\[
\Sigma_{\mr{ans}}[G,v] := \frac{\partial \Phi_{\mr{ans}}}{\partial G} [G,v].
\]
These approximations are known as \emph{$\Phi$-derivable}
or \emph{conserving} approximations. In the context of 
quantum many-body physics, the resulting $\Phi$-derivable Green's function 
methods are physically motivated in that they respect certain
conservation laws~\cite{StefanucciVanLeeuwen2013}.
 
Notice that for fixed $A$ and $v$, once an estimate $G_{\mr{ans}}$ for the Green's function 
has been computed via such a method, Eq.~\eqref{eqn:omegaLW} 
suggests a way to approximate the free energy, i.e., 
\begin{equation}
  \Omega \approx \frac12 \Tr[AG_{\mr{ans}}] -\frac12 \Tr \log (G_{\mr{ans}})- \frac12 \left( \Phi_{\mr{ans}}[G,v] + \Phi_0 \right).
  \label{eqn:omegaLW2}
\end{equation}

In fact all of the self-energy approximations considered thus far, 
namely the first-order (Hartree-Fock) 
and second-order approximations and the GW 
approximation, are $\Phi$-derivable. 

In fact the first- and second-order self-energy approximations of 
Eqs.~\eqref{eqn:sigma1bold} and Eqs.~\eqref{eqn:sigma2bold} can be 
obtained from the first- and second-order LW approximations of 
Eqs.~\eqref{eqn:LW1bold} and Eqs.~\eqref{eqn:LW2bold}.

Meanwhile, the GW approximation can be obtained from 
\[
\Phi_{\mr{GW}} [G,v] = \Phi_{\mr{H}} [G,v] - \Tr \log \left[ I + \frac12 v (G \odot G) \right].
\]
Here the matrix derivative of the first term yields the Hartree contribution $\Sigma_{\mr{H}}[G,v]$, and the matrix derivative 
of the second term yields $G\odot W[G,v]$.

In fact, $\Phi_{\mr{GW}}$ can be viewed as being obtained from an infinite summation 
of the (closed) ring diagrams from the bold diagrammatic expansion of the LW functional, 
together 
with the closed Hartree diagram. These are the obtained by closing up the diagrams of 
Fig.~\ref{fig:GWsum} and are themselves depicted in Fig.~\ref{fig:GWsumLW}.

\begin{figure}[h]
  \begin{center}
    \includegraphics[width=1.0\textwidth]{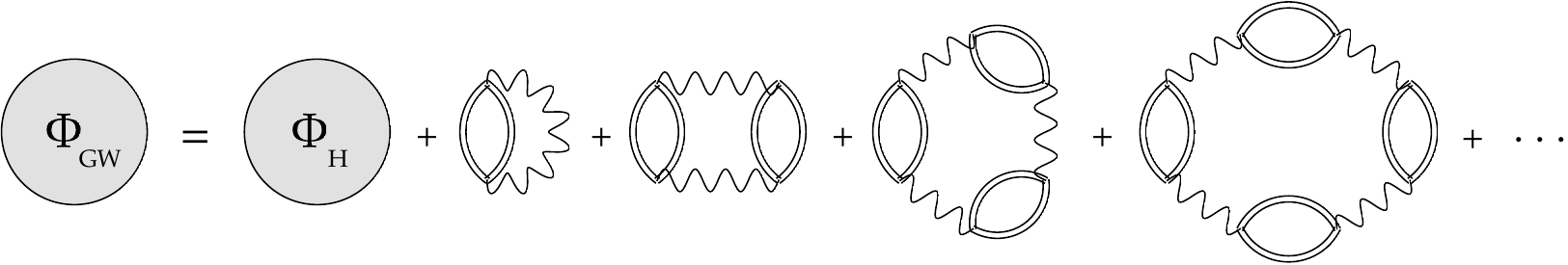}
  \end{center}
  \caption{Diagrammatic depiction of the GW Luttinger-Ward functional.}
  \label{fig:GWsumLW}
\end{figure}

What kind of (reasonable) self-energy approximation would \emph{fail} to be $\Phi$-derivable?
Roughly speaking, by the usual product rule for derivatives, taking the matrix derivative in $G$ of a LW diagram 
yields a sum over all skeleton diagrams that can be obtained from the LW diagram 
by cutting a single propagator. This means that if one includes some skeleton diagram 
in the approximate self-energy but \emph{does not} include another skeleton diagram 
that closes up to the same LW diagram as the first, then the approximation should 
not be $\Phi$-derivable.

Notice that for each of the diagrams in Fig.~\ref{fig:feynmanLW2}, any choice of 
the propagator line to be removed yields the same skeleton diagram, so the 
above scenario cannot apply to these terms. This is 
one way of seeing why each of the bold skeleton diagrams up to second 
order is \emph{individually} $\Phi$-derivable. (The same is true for each of the ring 
diagrams.)

To find a non-$\Phi$-derivable bold skeleton diagram, we have to go to 
the third order. In Fig.~\ref{fig:notPhi} (a), we depict a LW diagram that can 
be cut in different ways to obtain the distinct skeleton diagrams of
Fig.~\ref{fig:notPhi} (b) and (c).

\begin{figure}[h]
  \begin{center}
    \includegraphics[width=0.7\textwidth]{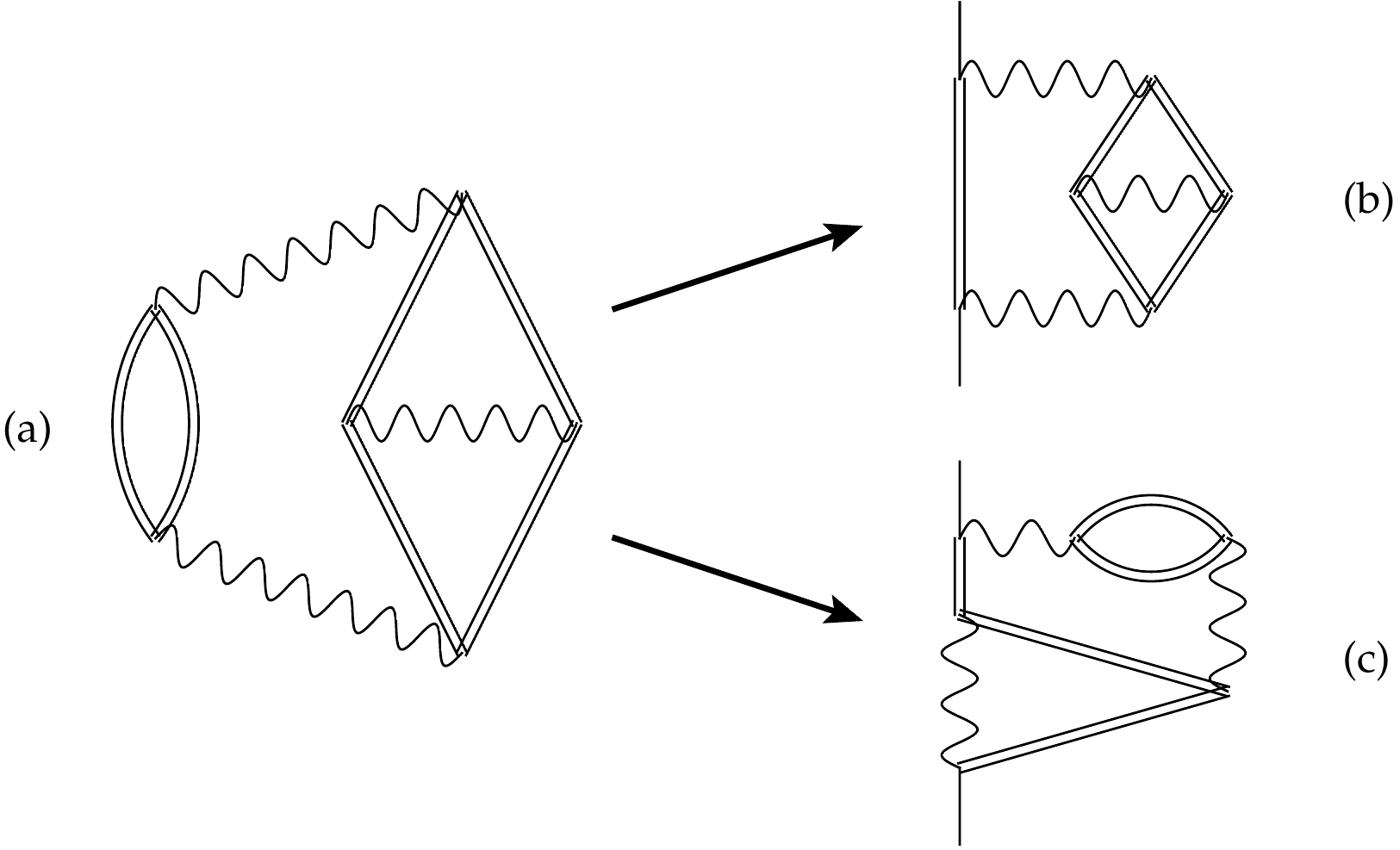}
  \end{center}
  \caption{A LW diagram (a) that yields distinct classes of skeleton diagram (b), (c). (External labelings of 
  the skeleton diagrams are ignored.)}
  \label{fig:notPhi}
\end{figure}

Although some skeleton diagrams such as Fig.~\ref{fig:notPhi} (b) and (c) 
are not \emph{individually} $\Phi$-derivable, the sum over all skeleton 
diagrams of a given order \emph{is} $\Phi$-derivable.

\subsection{Quantities that \emph{do not} admit a bold diagrammatic expansion}
\label{sec:noBold}

We stress that here the bold diagrammatic expansion has only been established 
for the self-energy (and, by extension, the Luttinger-Ward functional). 
The same concept cannot be generalized to other
quantities without verification. 

For example, let us consider what goes wrong in the case of the free energy. 
Theorem \ref{thm:logZexpand} expresses the free energy as a sum over 
bare connected closed diagrams. Why can't we just replace the thin lines 
with bold lines and then remove redundant diagrams (i.e., the diagrams 
with Green's function insertions)?

First let us see what happens if we try this. Later we will discuss where 
the proof of the bold diagrammatic expansion of the self-energy breaks 
down when we attempt to adapt it to the free energy.

Recall that the first-order free energy diagrams are simply the Hartree and 
Fock diagrams (i.e. the dumbbell and the oyster). Neither of these 
admit Green's function insertions, so converting them bold diagrams 
and retain them.

Now recall that the second-order bare free energy diagrams 
are depicted in Fig.~\ref{fig:feynmanLink2} (b), (c). In particular, 
consider diagram (b2), reproduced below in Fig.~\ref{fig:feynmanZ_decompose} (a). 
This diagram admits Green's function insertions, 
hence will not be retained as a bold diagram.

It can be obtained by inserting Green's function diagrams into the bold 
oyster diagram, shown in Fig.~\ref{fig:feynmanZ_decompose} (b), in \emph{two} different ways; 
indeed, either of the bold propagators in diagram (b) 
can be replaced with the insertion depicted in Fig.~\ref{fig:feynmanZ_decompose} (c).

Since diagram (b) has a symmetry factor of 4 and (c) has a symmetry factor of 1, 
our attempted bold diagrammatic 
expansion for the free energy then counts diagram (a) with a pre-factor of $\frac{1}{2}$. However, 
diagram (a) has a symmetry factor of 4, hence has a pre-factor of $\frac14$ in the bare 
diagrammatic expansion for the free energy!

\begin{figure}[h]
  \begin{center}
    \includegraphics[width=0.75\textwidth]{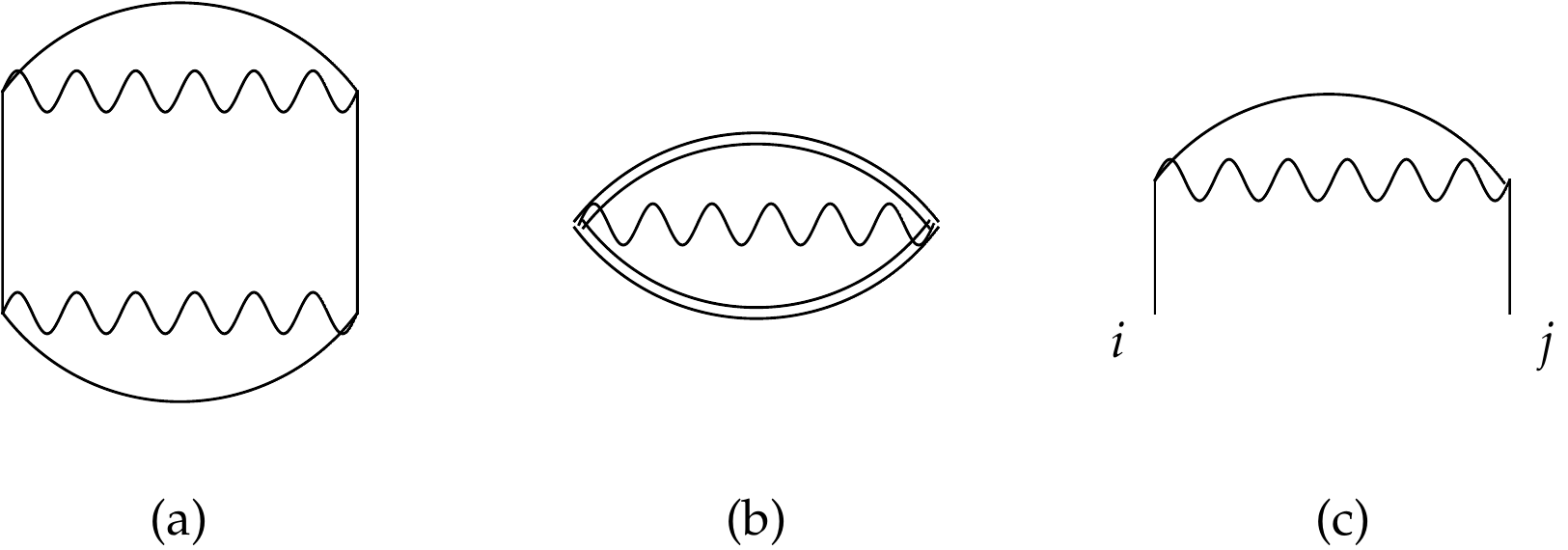}
  \end{center}
  \caption{Decomposition of a closed diagram $\Gamma$ in (a) into
  a closed diagram $\Gamma_{s}$ in (b), and a truncated Green's function diagram
  $\Gamma_{g}$ in (c).}
  \label{fig:feynmanZ_decompose}
\end{figure}

What went wrong? The problem is that there is no analog of the unique 
skeleton decomposition (Proposition \ref{prop:selfenergycompose}) 
for \emph{closed} diagrams.\footnote{The proof of 
Proposition \ref{prop:selfenergycompose} fails for closed diagrams 
in that the case $\vert E^{(j,k)} \vert =0 $ cannot be ruled out. (In the 
original proof, this case could be ruled out because it implied the existence 
of a closed subdiagram of a connected self-energy diagram, which is 
impossible.)} Indeed, the diagram of 
Fig.~\ref{fig:feynmanZ_decompose} (a) can be built up 
from two different `skeleton' subdiagrams, one containing the 
upper interaction line and the other containing the lower one.

\section{Numerical experiments}\label{sec:numer}
For the Gibbs model, in contrast to the quantum many-body problem, 
Green's function methods as in section \ref{sec:greenMethod} 
can be implemented within a few lines of MATLAB code. 
In this section we provide a small snapshot of the application of
the Gibbs model to the investigation of the numerical performance of MBPT. 

In particular, we demonstrate that the use of the LW functional to compute the free energy via Eq.~\eqref{eqn:omegaLW2} 
can yield more accurate results than the use of the bare diagrammatic expansion of 
Theorem \ref{thm:logZexpand}.

For simplicity we consider dimension $N=4$, where all integrals can be
evaluated directly via a quadrature scheme, such as Gauss-Hermite
quadrature.  The quadratic and the quartic terms of the Hamiltonian 
are specified by 
\begin{equation}
  A = \begin{pmatrix}
  2 & -1 & 0  & 0\\
  -1 & 2 & -1 & 0\\
  0 & -1 & 2 & -1\\
  0 & 0 & -1 & 2\\
  \end{pmatrix},
  \quad 
  v = \lambda \begin{pmatrix}
  1 & 0 & 0  & 0\\
  0 & 1 & 0 & 0\\
  0 & 0 & 1 & 0\\
  0 & 0 & 0 & 1\\
  \end{pmatrix},
  \label{}
\end{equation}
respectively. 

Once the self-consistent Green's function $G$ is obtained from 
some $\Phi$-derivable Green's function method, we
evaluate the free energy using the LW functional via Eq.~\eqref{eqn:omegaLW2}. 
Since the non-interacting free energy $\Omega_0$ is not physically 
meaningful (as it corresponds to an additive constant in the 
Hamiltonian), we measure the relative error of $\Omega - \Omega_0$, 
i.e., we compute 
$\vert \Omega_{\mr{ans}} - \Omega_{\mr{exact}} \vert / \vert \Omega_{\mr{exact}} - \Omega_0 \vert $, 
where $\Omega_{\mr{ans}}$ is the free energy obtained via 
our approximation and $\Omega_{\mr{exact}}$ is the exact free energy.
We carry out this procedure for the first-order (Hartree-Fock) and
second-order approximations (GF2) of the LW functional, 
denoted `Bold 1st' and `Bold 2nd,' respectively, 
in Fig.~\ref{fig:omegaN4}, which plots the relative 
error against the coupling 
constant $\lambda$.

For comparison, we also consider the approximations of the free energy obtained 
directly from 
the first-order and second-order truncations of the bare diagrammatic 
expansion for the free energy of Theorem \ref{thm:logZexpand}, denoted 
`Bare 1st' and `Bare 2nd,' respectively, in Fig.~\ref{fig:omegaN4}.

\begin{figure}[h]
  \begin{center}
    \includegraphics[width=0.8\textwidth]{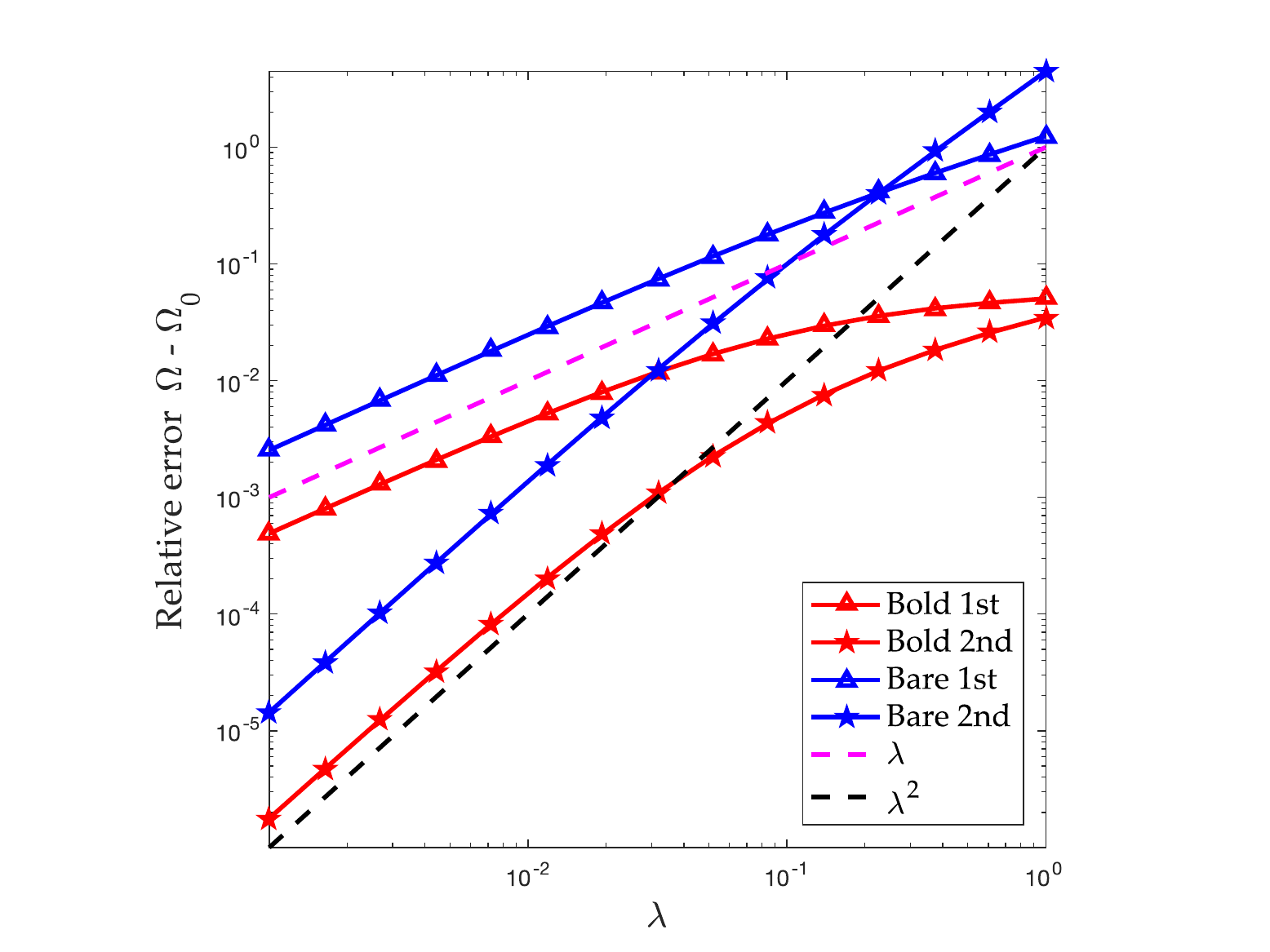}
  \end{center}
  \caption{Comparison of different approximation schemes for the free
  energy.}
  \label{fig:omegaN4}
\end{figure}

Observe that the relative errors of the first- and second-order 
expansions scale as $\lambda$ and $\lambda^2$, respectively, 
as $\lambda \ra 0$. This makes sense because the `truncation' error 
of these expansions should be of order $\lambda^2$ and $\lambda^3$, 
respectively, but to obtain the relative error we are dividing by 
$\Omega_{\mr{exact}}-\Omega_0$, which is of order $\lambda$ as 
$\lambda \ra 0$.

Note, however, that the pre-constant of the scaling is more
favorable for the bold method in both cases. Furthermore, the bold 
methods scale more gracefully than their bare counterparts when 
$\lambda$ is relatively large.

To demonstrate the simplicity of the implementation, 
we provide below a MATLAB code for computing the self-consistent Green's function 
and the free energy using the first- and second-order bold diagrammatic expansions. 
The exact solution is evaluated directly using a quadrature code which is omitted here.

\vspace{4 mm}
\begin{verbatim}
% luttingerward.m
d = 4;
Phi0 = d*(log(2*pi)+1);
A = [ 2 -1  0  0
     -1  2 -1  0;
      0 -1  2 -1;
      0  0 -1 2];
Umat = 0.1 * eye(d);


% First order bold diagram
maxiter = 100;
G = inv(A);
for iter = 1 : maxiter
  rho = diag(G);
  Sigma = -1/2*diag(Umat * rho) - (Umat.*G);
  GNew = inv(A-Sigma);
  err = norm(G-GNew)/norm(G);
  if( norm(G-GNew)/norm(G) < 1e-10 ) break; end
  G = GNew;
end
Phi = 1/2*trace(Sigma*G);
Omega = 0.5*(trace(A*G) - log(det(G)) - (Phi + Phi0));
fprintf('Free energy 1st order = %g\n', Omega);

% Second order bold diagram
G = inv(A);
for iter = 1 : maxiter
  rho = diag(G);
  % First order term
  Sigma1 = -1/2*diag(Umat * rho) - (Umat.*G);
  % Ring diagram
  Sigma2 = 1/2 * G.*(Umat * (G.*G) * Umat);
  % Second order exchange diagram
  for i = 1 : d
    for j = 1 : d
      Sigma2(i,j) = Sigma2(i,j) + (Umat(:,i).*G(:,j))'*G*(Umat(:,j).*G(:,i));
    end
  end
  Sigma = Sigma1 + Sigma2;
  GNew = inv(A-Sigma);
  nrmerr = norm(G-GNew)/norm(G);
  if( norm(G-GNew)/norm(G) < 1e-10 ) break; end
  G = GNew;
end
Phi = 1/2*trace(Sigma1*G) + 1/4*trace(Sigma2*G);
Omega = 0.5*(trace(A*G) - log(det(G)) - (Phi + Phi0));
fprintf('Free energy 2nd order = %g\n', Omega);

OmegaExact = -2.7510737;
fprintf('Free energy exact     = %g\n', OmegaExact);


>> luttingerward
Free energy 1st order = -2.74745
Free energy 2nd order = -2.75209
Free energy exact     = -2.75107
\end{verbatim}
\vspace{4 mm}


\section*{Acknowledgments} 

This work was partially supported by the Department of Energy under
grant DE-AC02-05CH11231 (L.L., M.L.), by the Department of Energy under
grant No. DE-SC0017867 and by the Air Force Office of Scientific
Research under award number FA9550-18-1-0095 (L.L.), and by the NSF
Graduate Research Fellowship Program under Grant DGE-1106400 (M.L.).  We
thank Fabien Bruneval, Garnet Chan, Alexandre Chorin, Lek-Heng Lim,
Nicolai Reshetikhin, Chao Yang and Lexing Ying for helpful discussions.


\appendix

\section{Proof of Theorem \ref{thm:wick} (Isserlis-Wick)}
\label{sec:appWick}
\begin{proof}
  For any $b\in \RR^{N}$, define
  \begin{equation}
    Z_{0}(b) = \int_{\RR^{N}} e^{-\frac12 x^{T} Ax + b^{T}x} \ud x.
    \label{eqn:Z0b}
  \end{equation}
  Clearly $Z_{0}=Z_{0}(0)$.
  Performing the change of variable $\wt{x}=x+A^{-1}b$, we find
  \begin{equation}
    Z_{0}(b) = e^{\frac12 b^{T} A^{-1} b} Z_{0}.
    \label{eqn:Z0beval}
  \end{equation}
  Observe that, for integers $1\le \alpha_{1},\ldots,\alpha_{m}\le N$,
  \begin{equation}
    \frac{\partial^{m} Z_{0}(b)}{\partial b_{\alpha_{1}}\cdots \partial b_{\alpha_{m}}}\Big\vert_{b=0}
    = \int_{\RR^{N}} x_{\alpha_{1}}\cdots x_{\alpha_{m}}
    e^{-\frac12 x^{T} Ax} \ud x.
    \label{}
  \end{equation}
  Combining with Eq.~\eqref{eqn:Z0beval}, we have that
  \begin{equation}
    \average{x_{\alpha_{1}}\cdots x_{\alpha_{m}}}_{0} =
    \frac{\partial^{m} }{\partial b_{\alpha_{1}}\cdots \partial
    b_{\alpha_{m}}} e^{\frac12 b^{T} A^{-1} b} \Big\vert_{b=0}.
    \label{eqn:wick1}
  \end{equation}
  Now write the Taylor expansion
  \begin{equation}
    e^{\frac12 b^{T} A^{-1} b} = \sum_{n=0}^{\infty} \frac{1}{n! \, 2^{n}}
    \left( b^{T}A^{-1} b \right)^{n}. 
    \label{}
  \end{equation}
  Since the expansion contains no odd powers of the components of $b$, 
  the right-hand side of Eq.~\eqref{eqn:wick1} vanishes whenever $m$ is odd. If
  $m$ is an even number, only the term $n=m/2$ in the Taylor expansion
  will contribute to the right-hand side of Eq.~\eqref{eqn:wick1}. This gives
  \begin{equation} 
   \frac{\partial^{m} }{\partial b_{\alpha_{1}}\cdots \partial
    b_{\alpha_{m}}} e^{\frac12 b^{T} A b} \Big\vert_{b=0}
    = \frac{1}{2^{m/2} (m/2)!}  \frac{\partial^{m} }{\partial b_{\alpha_{1}}\cdots \partial
    b_{\alpha_{m}}} \left( b^{T}A^{-1} b \right)^{m/2}.
    \label{eqn:wickDeriv}
  \end{equation}
  Now one can write 
  \[
   \frac{\partial^{m} }{\partial b_{\alpha_{1}}\cdots \partial
    b_{\alpha_{m}}}
    = 
     \frac{\partial^{m} }{\partial^{m_1} b_{\beta_{1}}\cdots \partial^{m_p}
    b_{\beta_{p}}}
  \]
  where the indices $\beta_1,\ldots,\beta_p$ are distinct and $\sum_{j=1}^p m_j = m$. Then to further 
  simplify the right-hand side of Eq.~\eqref{eqn:wickDeriv}, we are interested in computing the 
  coefficient of the
  $b_{\beta_1}^{m_1} \cdots b_{\beta_p}^{m_p}$ term of the polynomial $(b^T A^{-1} b)^{m/2}$.  
  Expand $(b^T A^{-1} b)^{m/2}$ into a sum of monomials as 
  
  \[
   (b^T A^{-1} b)^{m/2} = \sum_{j_1, k_1, \ldots, j_{m/2}, k_{m/2} } \prod_{i=1}^{m/2} (A^{-1})_{j_i, k_i} b_{j_i} b_{k_i}.
  \]
  Thus each distinct permutation $(j_1, k_1, \ldots, j_{m/2}, k_{m/2})$ of the multiset
  $\{\alpha_1, \ldots ,\alpha_m \}$\footnote{For instance,
  if $\alpha_1 = 1, \alpha_2 = 1, \alpha_3 = 2$, there are only three distinct multiset permutations: $(1,1,2)$, $(1,2,1)$, and $(2,1,1)$.}
  contributes $\prod_{i=1}^{m/2} (A^{-1})_{j_i, k_i}$ to our desired coefficient.
  
  Since 
  \[
  \frac{\partial^{m} }{\partial^{m_1} b_{\beta_{1}}\cdots \partial^{m_p}
    b_{\beta_{p}}} b_{\beta_1}^{m_1} \cdots b_{\beta_p}^{m_p} = m_1! \cdots m_p!,
  \]
  we obtain from Eq.~\eqref{eqn:wick1} and Eq.~\eqref{eqn:wickDeriv}
  \begin{equation}
  \label{eqn:wickMultisetSum}
  \average{x_{\alpha_{1}}\cdots x_{\alpha_{m}}}_{0} 
    = 
    \frac{m_1! \cdots m_p!}{2^{m/2} (m/2)!} 
    \sum_{(j_1, k_1, \ldots, j_{m/2}, k_{m/2})}
    \prod_{i=1}^{m/2} (A^{-1})_{j_i, k_i},
  \end{equation}
  where the sum is understood to be taken over multiset permutations of $\{\alpha_1, \ldots ,\alpha_m \}$.

  Now every permutation of $\mc{I}_m$ can be thought of as inducing a permutation 
  of the multiset $\{\alpha_1, \ldots ,\alpha_m \}$ via 
  the map $(l_1, \ldots, l_m) \mapsto (\alpha_{l_1}, \ldots, \alpha_{l_m})$. By this map each multiset 
  permutation of $\{\alpha_1, \ldots ,\alpha_m \}$ is associated with $m_1! \cdots m_p!$ different permutations 
  of $\mc{I}_m$.
  
  Moreover, every permutation of $\mc{I}_m$ can be thought of as inducing
  a pairing on $\mc{I}_m$ 
  by pairing the first two indices in the permutation, then the next two, etc. For example, if $m=4$, then 
  the permutation $(1,4,3,2)$ induces the pairing $(1,4)(3,2)$. By this map, each pairing on $\mc{I}_m$ 
  is associated with $2^{m/2} (m/2)!$ permutations of $\mc{I}_m$ since a pairing does not distinguish 
  the ordering of the pairs, nor the order of the indices within each pair.
  
  Finally, observe that if we take any two permutations of $\mc{I}_m$ associated with a pairing on
  $\mc{I}_m$ and then consider the (possibly distinct) multiset permutations 
   $(j_1, k_1, \ldots, j_{m/2}, k_{m/2})$ and  $(j_1', k_1', \ldots, j_{m/2}', k_{m/2}')$ of $\{\alpha_1, \ldots ,\alpha_m \}$ 
  associated to these permutations, then in fact $\prod_{i=1}^{m/2} (A^{-1})_{j_i, k_i} = \prod_{i=1}^{m/2} (A^{-1})_{j_i', k_i'}$.
  
  Thus we can replace the sum over multisets in Eq.~\eqref{eqn:wickMultisetSum} with a sum over 
  pairings on $\mc{I}_m$. Each term must be counted $\frac{2^{m/2} (m/2)!}{m_1! \cdots m_p!}$ times 
  because each pairing on $\mc{I}_m$ induces $2^{m/2} (m/2)!$ permutations of $\mc{I}_m$, each of 
  which is redundant by a factor of $m_1! \cdots m_p!$. This factor cancels with the factor in 
  Eq.~\eqref{eqn:wickMultisetSum} to yield
  \[
  \average{x_{\alpha_{1}}\cdots x_{\alpha_{m}}}_{0} = \sum_{\sigma \in \Pi(\mc{I}_{m})} \prod_{i \in \mc{I}_{m}/\sigma}
    (A^{-1})_{\alpha_{i},\alpha_{\sigma(i)} }.
  \]
  Recalling that $G^{0}=A^{-1}$, this completes the proof of Theorem~\ref{thm:wick}.
\end{proof}

\begin{remark}
  In field theories, the auxiliary variable $b$ is often interpreted as a
  coupling external field.
\end{remark}

\section{Proof of Proposition \ref{prop:selfenergycompose} (Skeleton decomposition)}
\label{sec:appSkeleton}
 \begin{proof}
 As suggested in the statement of the proposition, let $\Gamma^{(k)}$ be the maximal insertions 
 admitted by $\Gamma$, where we do not count separately any insertions 
 that differ only by their external labeling (see Remark \ref{rem:uniqueness}). Let 
 $h_1^{(k)}$ and $h_2^{(k)}$ be half edges such that $\Gamma$ admits the insertion $\Gamma^{(k)}$ at 
 $(h_1^{(k)},h_2^{(k)})$. Let $e_1^{(k)},e_2^{(k)}$ be the external half-edges of the truncated 
 Green's function diagram $\Gamma^{(k)}$ paired with $h_1^{(k)},h_2^{(k)}$, respectively, 
 in the overall diagram $\Gamma$.
 
  First we aim to show that the $\Gamma^{(k)}$ are disjoint, i.e., share no half-edges. In this case 
  we say that the $\Gamma^{(k)}$ do not \emph{overlap}. In fact we 
  will show additionally that the external half-edges of the $\Gamma^{(k)}$ do not touch one another 
  (i.e., are not paired in $\Gamma$), and accordingly we say that the $\Gamma^{(k)}$ do not \emph{touch}.
  
  To this end, let $j \neq k$. The idea is to consider the diagrammatic structure formed by collecting 
  all of the half-edges appearing in $\Gamma^{(j)}$ and $\Gamma^{(k)}$ and then arguing by maximality 
  that $\Gamma^{(j)}$ and $\Gamma^{(k)}$ cannot overlap or touch, let this structure admit 
  $\Gamma^{(j)}$ and $\Gamma^{(k)}$ as insertions.
  
 We now formalize this notion. Let $H^{(j)}$ and $H^{(k)}$ be the half-edge sets of 
 $\Gamma^{(j)}$ and $\Gamma^{(k)}$, respectively, and consider the union 
 $H^{(j,k)} := H^{(j)} \cup H^{(k)}$, together with 
 a \emph{partial} pairing $\Pi^{(j,k)}$ on $H^{(j,k)}$, i.e., a collection of disjoint pairs in $H^{(j,k)}$, 
 defined by the rule that $\{h_1,h_2\} \in \Pi^{(j,k)}$ if and only if $\{h_1,h_2\} \in \Pi_{\Gamma}$ and 
 $h_1,h_2 \in H^{(j,k)}$. In other words, the structure $(H^{(j,k)},\Pi^{(j,k)})$ is formed by taking all 
 of the half-edges appearing in $\Gamma^{(j)}$ and $\Gamma^{(k)}$ and then pairing the ones that 
 are paired in the overall diagram $\Gamma$.
 
 Let $E^{(j,k)}$ be the subset of unpaired half-edges 
 in $H^{(j,k)}$, i.e., those half-edges that \emph{do not} appear in any pair in $\Pi^{(j,k)}$. Since 
 all half-edges in $\Gamma^{(k)}$ besides $e_1^{(k)},e_2^{(k)}$ are paired in the diagram 
 $\Gamma^{(k)}$, we must have $E^{(j,k)} \subset \{ e_1^{(j)},e_2^{(j)}, e_1^{(k)},e_2^{(k)} \}$.
 
 \vspace{2mm}
 
 \begin{lemma}
 \label{lem:sizeE}
 $\vert E^{(j,k)} \vert = 4$. 
 \end{lemma}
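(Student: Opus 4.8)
The plan is to prove Lemma~\ref{lem:sizeE} by ruling out every value of $\vert E^{(j,k)}\vert$ other than $4$. Since $E^{(j,k)} \subset \{e_1^{(j)}, e_2^{(j)}, e_1^{(k)}, e_2^{(k)}\}$ has already been established, so that $\vert E^{(j,k)}\vert \leq 4$, it remains only to exclude $\vert E^{(j,k)}\vert = 0$ and $\vert E^{(j,k)}\vert = 2$. As a first preparatory observation I would record that $\vert E^{(j,k)}\vert$ is necessarily even: the sets $H^{(j)}$ and $H^{(k)}$ are each unions of the (four-element) half-edge sets of whole interaction lines, so $\vert H^{(j,k)}\vert$ is a multiple of $4$; since every half-edge of $H^{(j,k)}$ not lying in $E^{(j,k)}$ is matched to another such half-edge by $\Pi^{(j,k)}$, the difference $\vert H^{(j,k)}\vert - \vert E^{(j,k)}\vert$ is even, and hence so is $\vert E^{(j,k)}\vert$. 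Thus it suffices to treat the two even cases below.

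Next I would dispatch the case $\vert E^{(j,k)}\vert = 0$. Here every half-edge of $H^{(j,k)}$ is paired within $H^{(j,k)}$ by $\Pi^{(j,k)}$, so the vertices of $H^{(j,k)}$ together with these pairings constitute a \emph{closed} subdiagram of $\Gamma$, which is proper and nonempty because the two external half-edges of the self-energy diagram $\Gamma$ belong to no insertion. A closed subdiagram has no edges leaving it, so it is a union of connected components of $\Gamma$; but $\Gamma \in \mf{F}_2^{\mr{1PI}}$ is connected, which forces $H^{(j,k)}$ to be empty or all of $\Gamma$, a contradiction. (This is exactly the step that will later fail for closed diagrams in the free-energy discussion.)

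The crux is the case $\vert E^{(j,k)}\vert = 2$, which I would rule out by contradicting maximality. First I would check that $H^{(j,k)}$ is \emph{connected}: if it split into two pieces under $\Pi^{(j,k)}$, each piece would again have an even, nonzero number of unpaired half-edges (nonzero by the closed-subdiagram argument of the previous paragraph), forcing $\vert E^{(j,k)}\vert \geq 4$. Hence with exactly two unpaired half-edges, $H^{(j,k)}$ is a connected subdiagram that can be severed from the remainder of $\Gamma$ by cutting the two boundary propagators, i.e.\ it is itself an admissible insertion. Using that $\Gamma^{(j)}$ and $\Gamma^{(k)}$ are \emph{distinct maximal} insertions, I would argue that neither is contained in the other (a proper containment would immediately exhibit the smaller as non-maximal), so $H^{(j,k)}$ strictly contains $H^{(j)}$, and moreover at least one of $e_1^{(j)}, e_2^{(j)}$ must be internal to $H^{(j,k)}$. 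Tracing this through the complement $\Gamma'$ obtained by collapsing $\Gamma^{(j)}$ back into a single propagator, the half-edges of $H^{(j,k)} \setminus H^{(j)}$ then furnish an insertion admitted by $\Gamma'$ that contains one of $h_1^{(j)}, h_2^{(j)}$, violating the maximality of $\Gamma^{(j)}$. Therefore $\vert E^{(j,k)}\vert = 2$ is impossible, and we conclude $\vert E^{(j,k)}\vert = 4$.

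I expect the main obstacle to be precisely this last maximality argument in the case $\vert E^{(j,k)}\vert = 2$, and in particular the bookkeeping required when $\Gamma^{(j)}$ and $\Gamma^{(k)}$ \emph{overlap} or are nested rather than merely touching along a single propagator. In those configurations one must verify carefully that the residual structure $H^{(j,k)} \setminus H^{(j)}$ really does detach as a genuine (connected, two-legged) insertion of the complement $\Gamma'$ and that it captures a boundary half-edge of $\Gamma^{(j)}$. Isolating the correct boundary half-edge needs the short case analysis, according to which of the four candidate external half-edges survive as the two external legs of $H^{(j,k)}$, that I have only sketched above.
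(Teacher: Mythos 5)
Your proposal is correct and follows essentially the same route as the paper: parity of $\vert E^{(j,k)}\vert$ via the fact that half-edges come in whole interaction lines, exclusion of $\vert E^{(j,k)}\vert=0$ by the closed-subdiagram/connectedness contradiction, and exclusion of $\vert E^{(j,k)}\vert=2$ by observing that $(H^{(j,k)},\Pi^{(j,k)})$ would then be an insertion properly containing $\Gamma^{(j)}$ or $\Gamma^{(k)}$, contradicting maximality. Your version of the last step is somewhat more explicit than the paper's (which simply asserts the contradiction with maximality) about which of $h_1^{(j)},h_2^{(j)},h_1^{(k)},h_2^{(k)}$ witnesses the violation, and the case analysis you flag is exactly the bookkeeping the paper leaves implicit.
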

 \begin{proof}
 We claim that $\vert E^{(j,k)} \vert$ is even. To see this, we first establish that $\vert H^{(j,k)} \vert$ 
 is even. Notice that a truncated Green's function diagram (in particular, $\Gamma^{(j)}$ and 
 $\Gamma^{(k)}$) contains an even number of half-edges (more specifically $4n$, 
 where $n$ is the order of the diagram). Thus $\vert H^{(j)} \vert, \vert H^{(k)}\vert $ are even. 
 Moreover, $\Gamma^{(j)}$ and 
 $\Gamma^{(k)}$ share a half-edge if and only if they share the vertex (or interaction line) 
 associated with this half-edge, in which case $\Gamma^{(j)}$ and 
 $\Gamma^{(k)}$ share all four half-edges emanating from this vertex. Thus the number
 $\vert H^{(j)} \cap H^{(k)} \vert $ of 
 half-edges common to $\Gamma^{(j)}$ and $\Gamma^{(k)}$ is four times the number of
 common interaction lines, in particular an even number. So 
 $\vert H^{(j,k)} \vert = \vert H^{(j)} \vert + \vert H^{(k)} \vert - \vert H^{(j)} \cap H^{(k)} \vert$ is 
 even, as desired. Now any partial pairing on $H^{(j,k)}$ includes an even number of 
 distinct elements, so the number of leftover elements, i.e., $\vert E^{(j,k)} \vert$, 
 must be even as well, as claimed.
 
 Next we rule out the cases $\vert E^{(j,k)} \vert \in \{ 0, 2 \}$.
 
 Suppose that $\vert E^{(j,k)} \vert = 0$. Then the structure $(H^{(j,k)},\Pi^{(j,k)})$ 
 defines a \emph{closed} sub-diagram of $\Gamma$, disconnected from the 
 rest of $\Gamma$. Since $\Gamma$ is not closed, 
 the sub-diagram cannot be all of $\Gamma$. This conclusion contradicts the 
 connectedness of $\Gamma$.\footnote{Interestingly, if one were to try to 
 construct a unique skeleton decomposition for \emph{closed} connected diagrams, i.e., 
 free energy diagrams, this is the place where the argument would fail; see 
 section \ref{sec:noBold}.} 
 
 Next suppose that $\vert E^{(j,k)} \vert = 2$. Then the structure $(H^{(j,k)},\Pi^{(j,k)})$ 
 has two unpaired half-edges, hence defines a truncated Green's function diagram 
 $\Gamma^{(j,k)}$ that contains both $\Gamma^{(j)}$ and $\Gamma^{(k)}$
 and admits both as insertions. But since $\Gamma^{(j)} \neq \Gamma^{(k)}$, 
 $\Gamma^{(j,k)}$ is neither $\Gamma^{(j)}$ nor $\Gamma^{(k)}$ (i.e., the containment 
 is proper). This conclusion contradicts the maximality of 
 $\Gamma^{(j)}$ and $\Gamma^{(k)}$.
 \end{proof}
 
  \vspace{2mm}
 
 From Lemma \ref{lem:sizeE} it follows that 
 $E^{(j,k)} = \{ e_1^{(j)},e_2^{(j)}, e_1^{(k)},e_2^{(k)} \}$. (And moreover 
 $ e_1^{(j)},e_2^{(j)}, e_1^{(k)},e_2^{(k)}$ are distinct, which was not assumed!) 
 This establishes 
 one of our original claims, i.e., that the external half-edges of $\Gamma^{(j)}$ and 
 $\Gamma^{(k)}$ do not touch (i.e., are not paired.)
 
 We need to establish the other part of our original claim, i.e., that the half-edge 
 sets $H^{(j)}$ and $H^{(k)}$ are disjoint. To see this, suppose otherwise, 
 so $\Gamma^{(j)}$ and $\Gamma^{(k)}$ share some half-edge $h$. 
 Since $\Gamma^{(j)} \neq \Gamma^{(k)}$, one of $H^{(j)}$ 
 and $H^{(k)}$ does not contain the other, so assume without loss of 
 generality that $H^{(j)}$ does not contain $H^{(k)}$, so there is some 
 half-edge $h' \in H^{(k)} \backslash H^{(j)}$. Since $\Gamma^{(k)}$ 
 is connected there 
 must be some path in $\Gamma^{(k)}$ connecting $h$ with $h'$.\footnote{By 
 such a `path' we mean a sequence of half-edges 
 $(h_1,h_2,\ldots,h_{2m-1},h_{2m})$ in $\Gamma^{(k)}$ such that $h_1 = h$; $h_{2m} = h'$;  
 $h_l,h_{l+1}$ share an interaction line for $l$ odd; and $h_l,h_{l+1}$ 
 are paired by $\Gamma^{(k)}$.}
 Now 
 $\Gamma^{(j)}$ can be disconnected from the rest of $\Gamma$ by 
 deleting the links $\{e_1^{(j)},h_1^{(j)}\}$ and $\{e_2^{(j)},h_2^{(j)}\}$ from 
 the pairing $\Pi_{\Gamma}$, so evidently our path in $\Gamma^{(k)}$ 
 connecting $h$ and $h'$ 
 must contain either $(e_1^{(j)},h_1^{(j)})$ or $(e_2^{(j)},h_2^{(j)})$ 
 as a `subpath.' But then either $e_1^{(j)}$ or $e_2^{(j)}$ is paired in $\Gamma^{(k)}$, 
 hence also by $\Pi^{(j,k)}$, contradicting its inclusion in $E^{(j,k)}$.

 In summary we have shown that the $\Gamma^{(k)}$ are disjoint, i.e., 
 share no half-edges, and that the external half-edges of the $\Gamma^{(k)}$ 
 do not touch one another, i.e., are not paired in $\Gamma$. We can 
 define a truncated Green's function diagram $\Gamma_{\mr{s}}$ by considering
 $\Gamma$, then omitting all half-edges and vertices appearing in the 
 $\Gamma^{(k)}$. Since the $\Gamma^{(k)}$ are disjoint and do not touch, 
 this leaves behind the half-edges $h_1^{(k)},h_2^{(k)}$ for all $k$, which 
 are now left unpaired. Then we complete the construction of 
 $\Gamma_{\mr{s}}$ by adding the pairings $\{h_1^{(k)},h_2^{(k)}\}$. In short, 
 we form $\Gamma_{\mr{s}}$ from $\Gamma$ by replacing each insertion 
 $\Gamma^{(k)}$ with the edge $\{h_1^{(k)},h_2^{(k)}\}$. Eq.~\eqref{eqn:skeletonInsertion} 
 then holds by construction, for a suitable 
external labeling of the $\Gamma^{(k)}$.

Moreover, we find that $\Gamma_{\mr{s}}$ 
is 2PI. It is not hard to check first that $\Gamma_{\mr{s}}$ is 1PI. Indeed, the 
unlinking of any half-edge pair in $\Gamma_{\mr{s}}$  that is \emph{not} one of 
the $\{h_1^{(k)},h_2^{(k)}\}$ can be lifted to the unlinking of the same 
half-edge pair in the original diagram $\Gamma$. Since $\Gamma$ is 1PI, this 
unlinking does not disconnect $\Gamma$. Re-collapsing the maximal insertions once 
again does not affect the connectedness of the result, so $\Gamma_{\mr{s}}$ 
does not become disconnected by the unlinking. On the other hand, the unlinking of a 
half-edge pair $\{h_1^{(k)},h_2^{(k)}\}$ were to disconnect $\Gamma_{\mr{s}}$, then 
necessarily the unlinking of either $\{e_1^{(k)},h_1^{(k)}\}$ or $\{e_2^{(k)},h_2^{(k)}\}$ 
would disconnect $\Gamma$, which contradicts the premise that $\Gamma$ is 1PI.

Thus $\Gamma_{\mr{s}}$ is 1PI, and two-particle irreducibility is equivalent to the 
absence of any insertions. But if $\Gamma_{\mr{s}}$ admits an insertion containing either 
$h_1^{(k)}$ or $h_2^{(k)}$, then this contradicts the maximality of the insertion 
$\Gamma^{(k)}$ in $\Gamma$. Moreover, if $\Gamma_{\mr{s}}$ admits an insertion 
containing \emph{none} of the $h_1^{(k)},h_2^{(k)}$, then this insertion lifts 
to an insertion in the original diagram $\Gamma$, hence this insertion (i.e., 
all of its interaction lines 
and half-edges) must have been omitted from $\Gamma_{\mr{s}}$ (contradiction). 
We conclude that $\Gamma_{\mr{s}}$ admits no insertions, hence is 2PI.
 
 Finally it remains to prove the uniqueness of the decomposition of 
 Eq.~\eqref{eqn:skeletonInsertion}. To this end, let 
 $\Gamma_{\mr{s}} \in \mf{F}_2^{\mathrm{2PI}}$ and $\Gamma^{(k)} \in \mf{F}_2^{\mr{c,t}}$ 
 for $k=1,\ldots,K$, and let $\left\{h_1^{(k)},h_2^{(k)}\right\}$ be distinct half-edge pairs 
 in $\Gamma_{\mr{s}}$ for $k=1,\ldots,K$. Then \emph{define} $\Gamma$ via 
 Eq.~\eqref{eqn:skeletonInsertion}. The uniqueness claim then follows if we can show 
 that the $\Gamma^{(k)}$ are the maximal insertions in $\Gamma$.
 
 Suppose that $\Gamma^{(k)}$ is not maximal for some $k$. Then 
 by definition the diagram $\Gamma'$ 
 formed from $\Gamma$ by collapsing the insertion $\Gamma^{(k)}$ admits
 an insertion containing $h_1^{(k)}$ or $h_2^{(k)}$ (assume $h_1^{(k)}$ without 
 loss of generality). In fact let $\Gamma^{(k)}_{\mr{m}}$ be a \emph{maximal} 
 insertion containing $h_1^{(k)}$. 
 Note also that $\Gamma'$ still admits the $\Gamma^{(j)}$ for 
 $j\neq k$ as insertions.
 
 Then for $j\neq k$ form the structure $(H^{(j,k)},\Pi^{(j,k)})$ by `merging' 
 $\Gamma^{(j)}$ and $\Gamma^{(k)}_{\mr{m}}$ via the same construction as above 
 (now within the overall diagram $\Gamma'$). By the same reasoning as in Lemma \ref{lem:sizeE}, 
 the set of unpaired half-edges $E^{(j,k)}$ must be of even cardinality, and we 
 can rule out the case $\vert E^{(j,k)} \vert =0$. 
 
 In the case $\vert E^{(j,k)} \vert = 2$, the structure $(H^{(j,k)},\Pi^{(j,k)})$ once 
 again defines a truncated Green's function diagram $\Gamma^{(j,k)}$ 
 admitting both $\Gamma^{(j)}$ and $\Gamma^{(k)}_{\mr{m}}$ as insertions. 
 But since $\Gamma^{(k)}_{\mr{m}}$ is maximal, $\Gamma^{(j,k)} = \Gamma^{(k)}_{\mr{m}}$, 
 and $\Gamma^{(j)}$ is contained in $\Gamma^{(k)}_{\mr{m}}$.
 
 In the case $\vert E^{(j,k)} \vert = 4$, since $\Gamma^{(j)}$ does 
 not contain $\Gamma^{(k)}_{\mr{m}}$ (i.e., the half-edge set of the former 
 does not contain that of the latter), the same reasoning as above guarantees that 
 $\Gamma^{(j)}$ and $\Gamma^{(k)}_{\mr{m}}$ do not overlap or touch.
 
 Then consider the diagram $\Gamma''$ formed from $\Gamma'$ 
 by collapsing the insertion $\Gamma^{(j)}$. In both of our cases (namely, 
 that $\Gamma^{(j)}$ is contained in $\Gamma^{(k)}_{\mr{m}}$ and 
 that $\Gamma^{(j)}$ and $\Gamma^{(k)}_{\mr{m}}$ do not overlap or touch), 
 the insertion $\Gamma^{(k)}_{\mr{m}}$ \emph{descends} to an insertion in 
 $\Gamma''$ containing $h_1^{(k)}$.
 
 Iteratively repeating this procedure for all $j\neq k$ (i.e., collapsing all 
 of the insertions $\Gamma^{(j)}$ to obtain the original skeleton diagram 
 $\Gamma_{\mr{s}}$), we find that 
 $\Gamma^{(k)}_{\mr{m}}$ descends to an insertion in $\Gamma_{\mr{s}}$ 
 containing $h_1^{(k)}$, contradicting the fact that $\Gamma_{\mr{s}}$ is 2PI.
 \end{proof}

\bibliographystyle{siam}
\bibliography{lwref}

\begin{thebibliography}{10}

\bibitem{AltlandSimons2010}
{\sc A.~Altland and B.~D. Simons}, {\em Condensed matter field theory},
  Cambridge Univ. Pr., 2010.

\bibitem{AmitMartin-Mayor2005}
{\sc D.~J. Amit and V.~Martin-Mayor}, {\em Field theory, the renormalization
  group, and critical phenomena: graphs to computers}, World Scientific
  Publishing Co Inc, 2005.

\bibitem{AryasetiawanGunnarsson1998}
{\sc F.~Aryasetiawan and O.~Gunnarsson}, {\em The {GW} method}, Rep. Prog.
  Phys., 61 (1998), p.~237.

\bibitem{BaymKadanoff1961}
{\sc G.~Baym and L.~P. Kadanoff}, {\em Conservation laws and correlation
  functions}, Phys. Rev., 124 (1961), p.~287.

\bibitem{BenlagraKimPepin2011}
{\sc A.~Benlagra, K.-S. Kim, and C.~P{\'{e}}pin}, {\em {The Luttinger-Ward
  functional approach in the Eliashberg framework: a systematic derivation of
  scaling for thermodynamics near the quantum critical point}}, J. Phys.
  Condens. Matter, 23 (2011), p.~145601.

\bibitem{BernardiPalummoGrossman2013}
{\sc M.~Bernardi, M.~Palummo, and J.~C. Grossman}, {\em Extraordinary sunlight
  absorption and one nanometer thick photovoltaics using two-dimensional
  monolayer materials}, Nano Lett., 13 (2013), pp.~3664--3670.

\bibitem{BiermannAryasetiawanGeorges2003}
{\sc S.~Biermann, F.~Aryasetiawan, and A.~Georges}, {\em First-principles
  approach to the electronic structure of strongly correlated systems:
  Combining the {GW} approximation and dynamical mean-field theory}, Phys. Rev.
  Lett., 90 (2003), p.~086402.

\bibitem{BullaCostiPruschke2008}
{\sc R.~Bulla, T.~A. Costi, and T.~Pruschke}, {\em {Numerical renormalization
  group method for quantum impurity systems}}, Rev. Mod. Phys., 80 (2008),
  pp.~395--450.

\bibitem{DahlenVanVon2005}
{\sc N.~E. Dahlen, R.~{Van Leeuwen}, and U.~{Von Barth}}, {\em {Variational
  energy functionals of the green function tested on molecules}}, in Int. J.
  Quantum Chem., vol.~101, 2005, pp.~512--519.

\bibitem{DeligneEtingofFreedEtAl1999}
{\sc P.~Deligne, P.~Etingof, D.~S. Freed, L.~C. Jeffrey, D.~Kazhdan, J.~W.
  Morgan, D.~R. Morrison, and E.~Witten}, eds., {\em {Quantum fields and
  strings: A course for mathematicians. Vol. 1, 2}}, Amer. Math. Soc., 1999.

\bibitem{Elder2014}
{\sc R.~Elder}, {\em {Comment on ``Non-existence of the Luttinger-Ward
  functional and misleading convergence of skeleton diagrammatic series for
  Hubbard-like models''}}, arXiv:1407.6599,  (2014).

\bibitem{FetterWalecka2003}
{\sc A.~L. Fetter and J.~D. Walecka}, {\em Quantum theory of many-particle
  systems}, Courier Corp., 2003.

\bibitem{GeorgesKotliarKrauthEtAl1996}
{\sc A.~Georges, G.~Kotliar, W.~Krauth, and M.~J. Rozenberg}, {\em Dynamical
  mean-field theory of strongly correlated fermion systems and the limit of
  infinite dimensions}, Rev. Mod. Phys., 68 (1996), p.~13.

\bibitem{GunnarssonRohringerSchaeferEtAl2017}
{\sc O.~Gunnarsson, G.~Rohringer, T.~Sch{\"{a}}fer, G.~Sangiovanni, and
  A.~Toschi}, {\em {Breakdown of Traditional Many-Body Theories for Correlated
  Electrons}}, Phys. Rev. Lett., 119 (2017), p.~056402.

\bibitem{Hedin1965}
{\sc L.~Hedin}, {\em New method for calculating the one-particle {Green's}
  function with application to the electron-gas problem}, Phys. Rev., 139
  (1965), p.~A796.

\bibitem{Ismail-Beigi2010}
{\sc S.~Ismail-Beigi}, {\em {Correlation energy functional within the GW -RPA:
  Exact forms, approximate forms, and challenges}}, Phys. Rev. B, 81 (2010),
  pp.~1--21.

\bibitem{Isserlis1918}
{\sc L.~Isserlis}, {\em On a formula for the product-moment coefficient of any
  order of a normal frequency distribution in any number of variables},
  Biometrika, 12 (1918), pp.~134--139.

\bibitem{KotliarSavrasovHauleEtAl2006}
{\sc G.~Kotliar, S.~Y. Savrasov, K.~Haule, V.~S. Oudovenko, O.~Parcollet, and
  C.~A. Marianetti}, {\em Electronic structure calculations with dynamical
  mean-field theory}, Rev. Mod. Phys., 78 (2006), p.~865.

\bibitem{KozikFerreroGeorges2015}
{\sc E.~Kozik, M.~Ferrero, and A.~Georges}, {\em {Nonexistence of the
  Luttinger-Ward Functional and Misleading Convergence of Skeleton Diagrammatic
  Series for Hubbard-Like Models}}, Phys. Rev. Lett., 114 (2015), p.~156402.

\bibitem{LiLu2017}
{\sc Y.~Li and J.~Lu}, {\em Bold diagrammatic {M}onte {C}arlo in the lens of
  stochastic iterative methods}, arXiv:1710.00966,  (2017).

\bibitem{LinLindsey2018}
{\sc L.~Lin and M.~Lindsey}, {\em Variational structure of {Luttinger-Ward}
  formalism and bold diagrammatic expansion for {Euclidean} lattice field
  theory}, Proc. Natl. Acad. Sci., 115 (2018), p.~2282.

\bibitem{LuttingerWard1960}
{\sc J.~M. Luttinger and J.~C. Ward}, {\em {Ground-state energy of a
  many-fermion system. II}}, Phys. Rev., 118 (1960), p.~1417.

\bibitem{MartinReiningCeperley2016}
{\sc R.~M. Martin, L.~Reining, and D.~M. Ceperley}, {\em Interacting
  Electrons}, Cambridge Univ. Pr., 2016.

\bibitem{NegeleOrland1988}
{\sc J.~W. Negele and H.~Orland}, {\em Quantum many-particle systems},
  Westview, 1988.

\bibitem{OnidaReiningRubio2002}
{\sc G.~Onida, L.~Reining, and A.~Rubio}, {\em {Electronic excitations:
  density-functional versus many-body Green's-function approaches}}, Rev. Mod.
  Phys., 74 (2002), p.~601.

\bibitem{PeskinSchroeder1995}
{\sc M.~E. Peskin and D.~V. Schroeder}, {\em An introduction to quantum field
  theory},  (1995).

\bibitem{Polyak2004}
{\sc M.~Polyak}, {\em {Feynman diagrams for pedestrians and mathematicians}},
  in Graphs and Patterns in Mathematics and Theoretical Physics, 2004,
  pp.~1--27.

\bibitem{Potthoff2006}
{\sc M.~Potthoff}, {\em {Non-perturbative construction of the Luttinger-Ward
  functional}}, Condens. Matter Phys., 9 (2006), pp.~557--567.

\bibitem{ProkofevSvistunov2008}
{\sc NV~Prokof'ev and BV~Svistunov}, {\em {Bold diagrammatic Monte Carlo: A
  generic sign-problem tolerant technique for polaron models and possibly
  interacting many-body problems}}, Physical Rev. B, 77 (2008), p.~125101.

\bibitem{RentropMedenJakobs2016}
{\sc J.~F. Rentrop, V.~Meden, and S.~G. Jakobs}, {\em {Renormalization group
  flow of the Luttinger-Ward functional: Conserving approximations and
  application to the Anderson impurity model}}, Phys. Rev. B, 93 (2016),
  p.~195160.

\bibitem{StaarMaierSchulthess2013}
{\sc P.~Staar, T.~Maier, and T.~C. Schulthess}, {\em Dynamical cluster
  approximation with continuous lattice self-energy}, Phys. Rev. B, 88 (2013),
  p.~115101.

\bibitem{StefanucciVanLeeuwen2013}
{\sc G.~Stefanucci and R.~Van~Leeuwen}, {\em Nonequilibrium many-body theory of
  quantum systems: a modern introduction}, Cambridge Univ. Pr., 2013.

\bibitem{SunChan2016}
{\sc Q.~Sun and G.~K.-L. Chan}, {\em Quantum embedding theories}, Acc. Chem.
  Res., 49 (2016), pp.~2705--2712.

\bibitem{TarantinoRomanielloBergerEtAl2017}
{\sc W.~Tarantino, P.~Romaniello, J.~A. Berger, and L.~Reining}, {\em
  {Self-consistent Dyson equation and self-energy functionals: An analysis and
  illustration on the example of the Hubbard atom}}, Phys. Rev. B, 96 (2017),
  p.~045124.

\bibitem{Zinn-Justin2002}
{\sc J.~Zinn-Justin}, {\em Quantum Field Theory and Critical Phenomena},
  Clarendon Pr., 2002.

\end{thebibliography}

\end{document}